\theoremstyle{plain}
\newtheorem{theorem}{Theorem}
\newtheorem{corollary}{Corollary}
\newtheorem{lemma}{Lemma}
\newtheorem{proposition}{Proposition}
\theoremstyle{definition}
\newtheorem{definition}{Definition}
\newtheorem{remark}{Remark}
\begin{document}

\title{Fundamental Tradeoff between Storage and Latency in Cache-Aided Wireless Interference Networks}
\author{\IEEEauthorblockN{Fan Xu, Meixia Tao, \IEEEmembership{Senior Member,~IEEE}, and Kangqi Liu, \IEEEmembership{Student Member,~IEEE}}\\
\thanks{This paper was presented in part at IEEE ISIT 2016. }
\thanks{The authors are with the Department of Electronic Engineering, Shanghai Jiao Tong University, Shanghai, 200240, China (Emails: xxiaof@sjtu.edu.cn, mxtao@sjtu.edu.cn, k.liu.cn@ieee.org).}}
\maketitle

\begin{abstract}
This paper studies the fundamental tradeoff between storage and latency in a general wireless interference network with caches equipped at all transmitters and receivers. The tradeoff is characterized by an information-theoretic metric, \emph{normalized delivery time} (NDT), which is the worst-case delivery time of the actual traffic load at a transmission rate specified by degrees of freedom (DoF) of a given channel. We obtain both an achievable upper bound and a theoretical lower bound of the minimum NDT for any number of transmitters, any number of receivers, and any feasible cache size tuple. We show that the achievable NDT is exactly optimal in certain cache size regions, and is within a bounded multiplicative gap to the theoretical lower bound in other regions. In the achievability analysis, we first propose a novel cooperative transmitter/receiver coded caching strategy. It offers the freedom to adjust file splitting ratios for NDT minimization. We then propose a delivery strategy which transforms the considered interference network into a new class of cooperative X-multicast channels. It leverages local caching gain, coded multicasting gain, and transmitter cooperation gain (via interference alignment and interference neutralization) opportunistically. Finally, the achievable NDT is obtained by solving a linear programming problem. This study reveals that with caching at both transmitter and receiver sides, the network can benefit simultaneously from traffic load reduction and transmission rate enhancement, thereby effectively reducing the content delivery latency.

\end{abstract}

\begin{IEEEkeywords}
Wireless cache network, coded caching, content delivery, multicast, and interference management.
\end{IEEEkeywords}
%\vspace{0.5cm}

%\begin{IEEEkeywords}
%Multi-input multi-output, signal alignment, degrees of freedom.
%\end{IEEEkeywords}

\section{Introduction}
Over the last decades, mobile data traffic has been shifting from connection-centric services, such as voice, e-mails, and web browsing,
to emerging content-centric services, such as video streaming, push media, application download/updates, and mobile TV \cite{5g,liu2014content,cisco}. These contents are typically produced well ahead of transmission and can be requested by multiple users at possibly different times. This allows us to cache the contents at the edge of networks, e.g., base stations and user devices, during periods of low network load. The local availability of contents at the network edge has significant potential of reducing user access latency and alleviating wireless traffic. Recently, there have been increasing interests from both academia and industry in characterizing the impact of caching on wireless networks \cite{fem,edge,air,business,editorial}.

Caching in a shared link with one server and multiple cache-enabled users is first studied by Maddah-Ali and Niesen in \cite{fundamentallimits}. It is shown that caching at user ends, also known as coded caching, brings not only local caching gain but also global caching gain. The latter is achieved by a carefully designed cache placement and coded delivery strategy, which can create multicast chances for content delivery even if users demand different files. The idea of coded caching in \cite{fundamentallimits} is then extended to the distributed network in \cite{decentralized}, which achieves a rate close to the optimal centralized scheme. Taking file popularity into consideration, the authors in \cite{nonuniformdemands,nonuniformdemand10,nonuniformdemand11} introduced order-optimal coded caching schemes for average traffic load performance. In \cite{cache&csit}, the authors considered the wireless broadcast channel with imperfect channel state information at the transmitter (CSIT) and showed that the gain of coded caching can offset the loss due to the imperfect CSIT. Besides the shared link or broadcast channels mentioned above, coded caching is also investigated in other network topologies, such as hierarchical cache networks \cite{hierarchical}, device-to-device cache networks \cite{D2D}, and multi-server networks \cite{multi-server}.

Caching at transmitters is studied in \cite{basestationcooperation,hypergraph,upperbound,lowerbound,simeone22,simeone} to exploit the opportunities for transmitter cooperation and interference management. In specific, the authors in \cite{basestationcooperation} exploited the multiple-input multiple-output (MIMO) cooperation gain via joint beamforming by caching the same erasure-coded packets at all edge nodes in a backhaul-limited multi-cell network. The authors in \cite{hypergraph} studied the degrees of freedom (DoF) and clustered cooperative beamforming in cellular networks with edge caching via a hypergraph coloring problem. The authors in \cite{upperbound} studied the transmitter cache strategy in a $3\times3$ cache-aided interference channel under an information-theoretical framework. It is shown that splitting contents into different parts and caching each part in different transmitters can turn the interference channel into broadcast channel, X channel, or hybrid channel and hence increase the system throughput via interference management. The authors in \cite{lowerbound} presented a lower bound of delivery latency in a general interference network with transmitter cache and showed that the scheme in \cite{upperbound} is optimal in certain region of cache size. The authors in \cite{simeone22} studied a $2\times2$ cloud and cache aided wireless network,  and characterized the optimal tradeoff between cache storage size and content delivery time.  Then, the authors in \cite{simeone} extended the network in \cite{simeone22} to the general cloud and cache aided wireless network, and showed that their proposed transmission strategy achieves the optimality within a constant factor 2.

The above literature reveals that caching at the receiver side can bring local caching gain and coded multicasting gain, and that caching at the transmitter side can induce transmitter cooperation for interference management and load balancing. In modern wireless communication systems, storage space has been proliferating in both base stations and smart mobile devices \cite{fem}. During the off-peak traffic time, both the base stations and mobile users can download certain files from the core network into their local caches in advance. When the users submit content requests in the peak traffic time afterwards, the locally cached contents can be utilized to relieve the burden of the network traffic and reduce the delivery latency. It is thus of both theoretical importance and practical interest to investigate the impact of caching at both transmitter and receiver sides.

In this paper, we aim to study the fundamental limits of caching in a general wireless interference network with caches equipped at all transmitters and receivers as shown in Fig.~\ref{Fig model}. The performance metric to characterize the gains of caching varies in the existing works. For the broadcast channel with receiver cache, the authors in \cite{fundamentallimits} characterized the gain by \textit{memory-rate tradeoff}, where the \textit{rate} is defined as the normalized load of the shared link with respect to the file size in the delivery phase. For the interference channel with transmitter cache, the authors in \cite{upperbound} characterized the gain by the standard DoF from the information-theoretic studies. In \cite{lowerbound}, the authors introduced the \textit{storage-latency tradeoff}, where the latency is defined as the relative delivery time with respect to an ideal baseline system with unlimited cache and no interference in the high signal-to-noise ratio (SNR) region. In our considered wireless interference network with both transmitter and receiver caches, the standard DoF is unable to capture the potential reduction in the traffic load due to receiver cache, and the \textit{rate} is unable to capture the potential DoF enhancement due to cache-induced transmitter cooperation. Interestingly, the latency-oriented performance metric in \cite{lowerbound} can reflect not only the load reduction due to receiver cache but also the DoF enhancement due to transmitter cache, since it evaluates the delivery time of the actual \textit{load} at a transmission rate specified by the given \textit{DoF}. As such, we adopt the \textit{storage-latency tradeoff} to characterize the fundamental limits of caching in this work. In specific, we measure the performance by \emph{normalized delivery time} (NDT) as defined in \cite{lowerbound,simeone22,simeone}, denoted as $\tau(\mu_R, \mu_T)$, which is a function of the normalized receiver cache size $\mu_R$ and the normalized transmitter cache size $\mu_T$.

Our preliminary results on the latency-storage tradeoff study in the special case with $3$ transmitters and $3$ receivers are presented in \cite{mine}. Note that an independent work on the similar problem with both transmitter and receiver caches is studied in \cite{bothcache}. After the initial submission of this work, another similar work is studied in \cite{niesen}. We shall discuss the differences with \cite{bothcache,niesen} at appropriate places throughout the paper. The main contributions and results of this work are listed as follows:

$\bullet$ \textit{A novel file splitting and caching strategy}: We propose a novel file splitting and caching strategy for any number of transmitters and receivers, and at any feasible normalized cache size tuples. This strategy is more general than the existing file splitting and caching strategy in \cite{fundamentallimits,upperbound,bothcache,niesen}. It offers the freedom to adjust the file splitting ratios for caching gain optimization.

$\bullet$ \textit{Achievable storage-latency tradeoff}: Based on the proposed file splitting and caching strategy, we obtain an achievable upper bound of the minimum NDT for the general $N_T\times N_R$ interference networks by solving a linear programming problem of file splitting ratios. The achievable NDT is for any number of transmitters $N_T\ge2$, any number of receivers $N_R\ge2$, and any feasible normalized cache size tuples $(\mu_R,\mu_T)$. The main idea is to design the delivery phase carefully so that the network topology can be opportunistically changed to a new class of cooperative X-multicast channels, which includes X channel, broadcast channel, and  multicast channel as special cases. Interference neutralization and interference alignment are used to increase the system DoF of these channels. Our analysis shows that the transmitter cooperation gain, local caching gain, and coded multicasting gain can be leveraged opportunistically in different cache size regions. Our analysis also shows that the optimal file spitting ratios are not unique. The multiple choices offer the freedom to choose a proper caching scheme according to practical limitations, such as subpacketization overhead and receiver complexity.

$\bullet$ \textit{Lower bound of storage-latency tradeoff}: We also obtain a lower bound of the minimum NDT for the general $N_T\times N_R$ interference network by using genie-message approach. With this lower bound, we show that the achievable NDT upper bound is optimal in certain cache size regions. In other regions, the multiplicative gap between the upper and lower bounds is within 2 when $N_T\ge N_R$, within 12 when $N_T<N_R$ and $\mu_T\ge\frac{1}{N_T}$, and within $\frac{N_T+N_R-1}{N_T}$ when $N_T<N_R$ and $\mu_T<\frac{1}{N_T}$.

%  \item \textit{Achievable DoF of new channels}: We obtain the achievable DoF of three new channels, i.e. the partially cooperative X channel, the hybrid X-multicast channel, and the partially cooperative hybrid X-multicast channel, each with $N_T\ge2$ transmitters and $N_R=3$ receivers. All the three achievable DoFs approach to 3 when $N_T \to\infty$. Interference neutralization, interference alignment, and real interference alignment are used in different channels to obtain the achievable DoF. These DoF results are necessary to obtain the achievable NDT results mentioned above.

The remainder of this paper is organized as follows. Section \ref{section model} introduces our system model and performance metric. Section \ref{section main results} presents the main results of this paper. Section \ref{section cache placement} describes the cache placement strategy. Section \ref{section delivery} illustrates the content delivery strategy. Section \ref{section discussion} presents some discussions. Section \ref{section converse} proves the lower bound of NDT, and Section \ref{section conclusion} concludes this paper.

Notations: $(\cdot)^T$ denotes the transpose. $[K]$ denotes set $\{1,2,\cdots,K\}$. $\lfloor x\rfloor$ denotes the largest integer no greater than $x$. $(x_j)^K_{j=1}$ denotes vector $(x_1,x_2,\cdots,x_K)^T$. $(x)^+$ denotes the maximum of $x$ and 0, i.e. $(x)^+=\max\{0,x\}$. $A_{1\sim S}$ denotes set $\{A_1,A_2,\cdots,A_S\}$. $\mathcal{CN}(m,\sigma^2)$ denotes the complex Gaussian distribution with mean of $m$ and variance of $\sigma^2$.

\section{System Description and Performance Metric}\label{section model}
\subsection{System Description}
Consider a general cache-aided wireless interference network with $N_T$ ($\ge2$) transmitters and $N_R$ ($\ge2$) receivers as illustrated in Fig.~\ref{Fig model}, where each node is equipped with a cache memory of finite size.\footnote{This work focuses on the general interference network with at least two transmitters and at least two receivers. We do not consider the special cases with either one transmitter (broadcast channel) or one receiver (multiple-access channel).} Each node is assumed to have single antenna. The communication link between each transmitter and each receiver experiences channel fading, and is corrupted with additive white Gaussian noise. The communication at each time slot $t$ over this network is modeled by
\begin{align}
Y_q(t)=\sum_{p=1}^{N_T}h_{qp}(t)X_p(t)+Z_q(t),q=1,2,\cdots,N_R,\notag
\end{align}
where $Y_q(t)\in \mathbb{C}$ denotes the received signal at receiver $q$, $X_p(t)\in \mathbb{C}$ denotes the transmitted signal at transmitter $p$, $h_{qp}(t)\in \mathbb{C}$ denotes the channel coefficient from transmitter $p$ to receiver $q$ which is assumed to be identically and independently (i.i.d.) distributed as some continuous distribution, and $Z_q(t)$ denotes the noise at receiver $q$ distributed as $\mathcal{CN}(0,1)$.
\begin{figure}[tbp]
\begin{centering}
\includegraphics[scale=0.3]{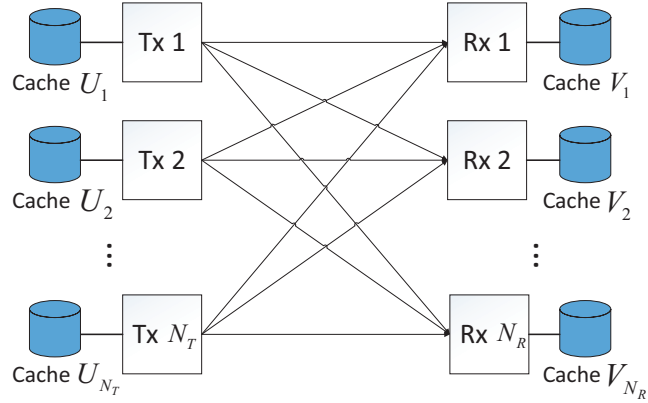}
\vspace{-0.1cm}
\caption{Cache-aided wireless interference network with $N_T$ transmitters and $N_R$ receivers.}\label{Fig model}
\vspace{-0.3cm}
\end{centering}
\end{figure}

Consider a database consisting of $L$ files, denoted by $\{W_1,W_2,\cdots,W_L\}$. Throughout this study, we consider $L\ge N_R$ so that each receiver can request a distinct file. Each file is chosen independently and uniformly from $[2^F]=\{1,2,\cdots,2^F\}$ randomly, where $F$ is the file size in bits. Each transmitter has a local cache able to store $M_TF$ bits and each receiver has a local cache able to store $M_RF$ bits. The \textit{normalized cache sizes} at each transmitter and receiver are defined, respectively, as
\begin{eqnarray}
\mu_T\triangleq\frac{M_T}{L},\qquad\mu_R\triangleq\frac{M_R}{L}.\notag
\end{eqnarray}

The network operates in two phases, \textit{cache placement phase} and \textit{content delivery phase}. During the cache placement phase, each transmitter $p$ designs a caching function $\phi_{p,i}$ that maps each file $W_i$ into its cached content $U_{p,i}$ as
\begin{eqnarray}
U_{p,i}\triangleq \phi_{p,i} (W_i), \forall i\in [L].\notag
\end{eqnarray}
Define the overall cached content at transmitter $p$ as $U_{p}\triangleq \bigcup_{i\in[L]}U_{p,i}$. The mapping $\{\phi_{p,i}\}$ is such that $H(U_{p})\le M_TF$ in order to satisfy the cache capacity constraint at each transmitter. Each receiver $q$ also designs a caching function $\psi_{q,i}$ that maps each file $W_i$ into its cached content $V_{q,i}$ as
\begin{eqnarray}
V_{q,i}\triangleq \psi_{q,i} (W_i), \forall i\in [L].\notag
\end{eqnarray}
Define the overall cached content at receiver $q$ as $V_q\triangleq \bigcup_{i\in[L]}V_{q,i}$. The mapping $\psi_{q,i}$ is such that $H(V_q)\le M_RF$ in order to satisfy the cache capacity constraint at each receiver. The caching functions $\{\phi_{p,i},\psi_{q,i}\}$ are assumed to be known globally at all nodes. Note that in this paper, we restrict our study to the caching functions that do not allow for inter-file coding, but can allow for arbitrary coding within each file. Similar assumptions have been made in \cite{lowerbound,simeone}.

In the delivery phase, each receiver $q$ requests a file $W_{d_q}$ from the database. We denote ${\bf d}\triangleq(d_q)^{N_R}_{q=1}\in[L]^{N_R}$ as the demand vector. Each transmitter $p$ has an encoding function
\begin{eqnarray}
\Lambda_p:[2^{\lfloor FM_T\rfloor}]\times[L]^{N_R}\times\mathbb{C}^{N_T\times N_R}\to\mathbb{C}^T.\notag
\end{eqnarray}
Transmitter $p$ uses $\Lambda_p$ to map its cached content $U_p$, receiver demands ${\bf d}$, and channel realization $\mathbf{H}$ to the codeword $(X_p[t])^T_{t=1}\triangleq\Lambda_p(U_p,{\bf d},\mathbf{H})$, where $T$ is the block length of the code. Note that $T$ may depend on the receiver demand ${\bf d}$ and channel realization $\mathbf{H}$. Each codeword $(X_p[t])_{t=1}^T$ has an average transmit power constraint $P$. Each receiver $q$ has a decoding function
\begin{eqnarray}
\Gamma_q:[2^{\lfloor FM_R\rfloor}]\times\mathbb{C}^T\times\mathbb{C}^{N_T\times N_R}\times[L]^{N_R}\to[2^F].\notag
\end{eqnarray}
We denote $(Y_q[t])_{t=1}^T$ as the signal vector received at receiver $q$. Upon receiving $(Y_q[t])^T_{t=1}$, each receiver $q$ uses $\Gamma_q$ to decode $\hat{W}_q\triangleq\Gamma_q(V_q,(Y_q[t])^T_{t=1},\mathbf{H},{\bf d})$ of its desired file $W_{d_q}$ using its cached content $V_q$ and the channel realization $\mathbf{H}$ as side information. The worst-case error probability is
\begin{eqnarray}
P_\epsilon=\max\limits_{{\bf d}\in[L]^{N_R}}\max\limits_{q\in[N_R]}\mathbb{P}(\hat{W}_q\ne W_{d_q}).\notag
\end{eqnarray}
The given caching and coding scheme $\{\phi_{p,i},\psi_{q,i},\Lambda_p,\Gamma_q\}$ is said to be feasible if, for almost all channel realizations, $P_\epsilon\to 0$ when $F\to\infty$.

Note that the cache placement phase and the content delivery phase take place on different timescales. In general, cache placement is in a much larger timescale (e.g. on a daily or hourly basis) while content delivery is in a much shorter timescale. As such, the caching functions designed in the cache placement phase are unaware of the future content requests, but the coding functions during the content delivery phase are dependent on the caching functions.

\subsection{Performance Metric}
In this work, we adopt the following latency-oriented performance metric as in \cite{lowerbound,simeone22,simeone}.\footnote{The performance metric NDT is first proposed in \cite{lowerbound} for wireless networks with transmitter cache only. It is then scaled by the number of receivers and renamed as fractional delivery time (FDT) by taking receiver cache into account in \cite{mine} as well as the initial submission of this paper. During the paper revision, we have removed the scaling and changed back to NDT for consistency with \cite{lowerbound}.}

\begin{definition}[\cite{lowerbound}]\label{definition ndt}
The \textit{normalized delivery time} (NDT) for a given feasible caching and coding scheme at a given normalized cache size tuple $(\mu_R, \mu_T)$ is defined as
\begin{eqnarray}
\tau(\mu_R,\mu_T)\triangleq\lim_{P\to\infty}\lim_{F\to\infty}\sup\frac{\max\limits_{{\bf d}}T}{F/\log P}.\notag
\end{eqnarray}
Moreover, the minimum NDT is defined as
\begin{eqnarray}
\tau^*(\mu_R,\mu_T)=\inf\{\tau(\mu_R,\mu_T):\tau(\mu_R,\mu_T)\rm \;is\;achievable\}.\notag
\end{eqnarray}
\end{definition}

Note that $F/\log P$ is the delivery time of transmitting one file of $F$ bits in a point-to-point baseline system with Gaussian noise in the high SNR regime. An NDT of $\tau^*$ thus indicates that the worst-case time required to serve any possible demand vector $\mathbf{d}$ is $\tau^*$ times of this reference time period.

\begin{remark}[Interpretation of NDT]\label{remark fdt calculation}
Let $R$ denote the worst-case traffic load per user with respect to the file size $F$. Since the per-user capacity of the network in the high SNR regime can be approximately given by $(d\cdot \log P +o(\log P))$, where $d$ is the per-user DoF, the worst-case delivery time can be rewritten as $\max\limits_{{\bf d}}T=\frac{RF}{d\cdot \log P +o(\log P)}$. Then, by Definition \ref{definition ndt}, NDT can be expressed more conveniently as
\begin{align}
\tau=R/d.\label{eqn NDTtau}
\end{align}
In the special case with transmitter cache only, we have $\tau(\mu_R=0, \mu_T) = 1/d$. As a result,  NDT characterizes the asymptotic delivery time of the actual per-user traffic \emph{load} $R$  at a transmission rate specified by the \emph{per-user DoF} $d$ when $P\to\infty$ and $F\to \infty$, and hence is particularly suitable to measure the performance of the wireless networks with both transmitter and receiver caches.
\end{remark}

\begin{remark}[Feasible region of NDT]\label{remark feasible region}
The NDT introduced above is able to measure the fundamental tradeoff between the cache storage and content delivery latency. However, not all normalized cache sizes are feasible. Given fixed $L$ and $M_T$, all the transmitters together can store at most $N_TM_TF$ bits of files, which leaves at least $LF-N_TM_TF$ bits of files to be stored in all receivers. Thus we must have $M_RF\ge LF-N_TM_TF$. This results in the following feasible region for the normalized cache sizes:
\begin{align}
\left\{
\begin{array}{ll}0\le\mu_R,\mu_T\le1\\ \mu_R+N_T\mu_T\ge1
\end{array}.\label{eqn boundary}
\right.
\end{align}
Throughout this paper, we study the NDT in the above feasible region. Note that the works in \cite{bothcache,niesen} are limited to the region $\{\frac{1}{N_T}\le\mu_T\le1,0\le\mu_R\le1\}$.
\end{remark}

%\begin{figure}[tbp]
%\begin{centering}
%\includegraphics[scale=0.5]{region.eps}
%\vspace{-0.1cm}
%\caption{Feasible domain of NDT (divided into 5 regions).}\label{Fig region}
%\vspace{-0.3cm}
%\end{centering}
%\end{figure}
\section{Main Results}\label{section main results}
In this section, we first present our main results on the fundamental storage-latency tradeoff for the general cache-aided wireless interference network. Then, we present the results in some special cases and discuss the connections with existing works.
\subsection{General Results}

\begin{theorem}[Achievable NDT] \label{thm 1}
For the cache-aided interference network with $N_T\ge2$ transmitters, $N_R\ge2$ receivers, and $L\ge N_R$ files, where each transmitter has a cache of normalized size $\mu_T$ and each receiver has a cache of normalized size $\mu_R$, the minimum NDT is upper bounded by the optimal solution of the following linear programming (LP) problem:
\begin{align}
\mathcal{P}_1:\quad&\tau_U(\mu_R,\mu_T)\triangleq \notag\\
\min_{\{a_{r,t}:(r,t)\in \mathcal{A}\}}&\sum_{r=0}^{N_R-1}\sum_{t=1}^{N_T}\frac{\binom{N_R-1}{r}\binom{N_T}{t}}{d_{r,t}}a_{r,t},\label{eqn tmin}\\
\textrm{s.t.} \qquad&\sum_{r=0}^{N_R} \sum_{t=1}^{N_T}\binom{N_R}{r}\binom{N_T}{t}a_{r,t}+a_{N_R,0}=1,\label{eqn tmin1}\\
&\sum_{r=1}^{N_R} \sum_{t=1}^{N_T}\binom{N_R-1}{r-1}\binom{N_T}{t}a_{r,t}+a_{N_R,0}\le\mu_R,\label{eqn tmin2}\\
&\sum_{r=0}^{N_R} \sum_{t=1}^{N_T}\binom{N_R}{r}\binom{N_T-1}{t-1}a_{r,t}\le\mu_T,\label{eqn tmin3}\\
&0\le a_{r,t} \le 1,\forall (r,t)\in \mathcal{A}
\end{align}
where $\mathcal{A}\triangleq\{(r,t): r+N_Rt\ge N_R,0\le r\le N_R, 0\le t\le N_T,r,t\in\mathds{Z}\}$, $\{a_{r,t}\}$ are the (file splitting) variables to be optimized, and $d_{r,t}$ is given by
\begin{align}
d_{r,t}
=\left\{
\begin{array}{ll}
1, &r+t\ge N_R\\
\frac{\binom{N_R-1}{r}\binom{N_T}{t}t}{\binom{N_R-1}{r}\binom{N_T}{t}t+1}, &r+t= N_R-1\\
\max\left\{d_{1},\frac{r+t}{N_R}\right\}, &r+t\le N_R-2
\end{array}
\right.\label{eqn tau ip}
\end{align}
with $d_1$ being
\begin{align}
d_1\triangleq\max\limits_{1\le t'\le t}\left\{\frac{\binom{N_R-1}{r}\binom{N_T}{t'}\binom{N_R-r-1}{t'-1}t'}
{\binom{N_R-1}{r}\binom{N_T}{t'}\binom{N_R-r-1}{t'-1}t'+\binom{N_R-1}{r+1}\binom{N_R-r-2}{t'-1}\binom{N_T}{t'-1}}\right\}
\end{align}
\end{theorem}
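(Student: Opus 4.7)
The plan is to exhibit, for every feasible $\{a_{r,t}\}$ satisfying the constraints (\ref{eqn tmin1})--(\ref{eqn tmin3}), an explicit caching-and-delivery scheme whose worst-case NDT equals the LP objective (\ref{eqn tmin}); taking the infimum over $\{a_{r,t}\}$ then upper bounds $\tau^*(\mu_R,\mu_T)$ by $\tau_U(\mu_R,\mu_T)$. The scheme has three ingredients: a combinatorial file split at placement, XOR demand shaping at delivery, and interference-managed transmission on an induced cooperative X-multicast channel.

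\textbf{Placement.} For every $(r,t)\in\mathcal{A}$ and every pair of subsets $(\mathcal{S}_R,\mathcal{S}_T)\subseteq[N_R]\times[N_T]$ with $|\mathcal{S}_R|=r$ and $|\mathcal{S}_T|=t$, split each file $W_\ell$ into a subfile $W_\ell^{\mathcal{S}_R,\mathcal{S}_T}$ of size $a_{r,t}F$ bits and cache it at every transmitter in $\mathcal{S}_T$ and every receiver in $\mathcal{S}_R$. The index $(N_R,0)$ corresponds to a portion pre-placed at all receivers and needing no transmission. Summing subfile sizes per file yields the partition constraint (\ref{eqn tmin1}). A fixed receiver (resp.\ transmitter) stores a subfile of type $(r,t)$ iff it belongs to $\mathcal{S}_R$ (resp.\ $\mathcal{S}_T$), and the number of such subsets containing it is $\binom{N_R-1}{r-1}$ (resp.\ $\binom{N_T-1}{t-1}$); this gives the cache-budget constraints (\ref{eqn tmin2}) and (\ref{eqn tmin3}).

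\textbf{Delivery decomposition.} Fix a worst-case demand $\mathbf d$ of distinct files. For each $(r,t)\in\mathcal{A}$ with $r<N_R$ and $t\ge 1$, each receiver subset $\mathcal{T}$ of size $r+1$, and each transmitter subset $\mathcal{S}_T$ of size $t$, the XOR $\bigoplus_{q\in\mathcal{T}}W_{d_q}^{\mathcal{T}\setminus\{q\},\mathcal{S}_T}$ is a single $a_{r,t}F$-bit coded message simultaneously useful to all $r+1$ receivers in $\mathcal{T}$, since each such receiver can cancel the remaining $r$ terms using its cache. Because all $t$ transmitters in $\mathcal{S}_T$ jointly hold this message, stage-$(r,t)$ delivery reduces to a cooperative X-multicast channel with $\binom{N_T}{t}$ cooperating transmitter groups and $\binom{N_R}{r+1}$ multicast messages, in which each receiver is a destination of $\binom{N_R-1}{r}\binom{N_T}{t}$ messages.

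\textbf{DoF per stage.} The core technical step is to show that this cooperative X-multicast channel admits per-user DoF $d_{r,t}$ as in (\ref{eqn tau ip}). When $r+t\ge N_R$, each of the $t$ cooperating transmitters per message has enough spatial dimensions to null at the $N_R-r-1\le t-1$ unintended receivers, so one-shot interference neutralization delivers one symbol per user per channel use, giving $d_{r,t}=1$. When $r+t=N_R-1$, the residual interference is absorbed via a Cadambe--Jafar-style alignment that packs useful signal into a $\binom{N_R-1}{r}\binom{N_T}{t}t$-dimensional subspace while compressing all interference into one shared dimension, yielding the rational middle branch of (\ref{eqn tau ip}). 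When $r+t\le N_R-2$, two schemes compete: asymptotic interference alignment applied over transmitter sub-subsets of size $t'\le t$ aligns the $\binom{N_R-1}{r+1}\binom{N_R-r-2}{t'-1}\binom{N_T}{t'-1}$ interfering beams at each receiver into a single aligned subspace while keeping the $\binom{N_R-1}{r}\binom{N_T}{t'}\binom{N_R-r-1}{t'-1}t'$ desired beams generic, giving $d_1$; and a plain time-sharing/zero-forcing combination achieves the packing bound $(r+t)/N_R$. Their maximum gives $d_{r,t}$. Invoking (\ref{eqn NDTtau}), the stage-$(r,t)$ delivery time is $\binom{N_R-1}{r}\binom{N_T}{t}a_{r,t}/d_{r,t}$ (with the $(N_R,0)$ stage contributing zero), and summing over $(r,t)\in\mathcal{A}$ reproduces (\ref{eqn tmin}).

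The main obstacle I anticipate is the alignment construction when $r+t\le N_R-2$: lifting Cadambe--Jafar interference alignment from the single-antenna X channel to a cooperative X-multicast setting, where each beam is jointly transmitted by $t'$ cooperating nodes and must arrive coherently at $r+1$ target receivers while aligning at all others, requires an explicit inductive construction of nested alignment subspaces and an almost-sure full-rank verification of the desired subspace under generic channels. Handling all candidate $t'\le t$ inside the inner maximization defining $d_1$ is essentially bookkeeping once the alignment skeleton is in place.
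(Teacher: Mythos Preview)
Your proposal is correct and follows essentially the same approach as the paper: the same symmetric file-splitting placement, the same XOR-based reduction of each $(r,t)$ stage to a $\binom{N_T}{t}\times\binom{N_R}{r+1}$ cooperative X-multicast channel, and the same three-regime DoF analysis (pure neutralization when $r+t\ge N_R$, neutralization plus asymptotic alignment when $r+t=N_R-1$, and the maximum of a time-division/neutralization scheme and a family of alignment schemes indexed by $t'\le t$ when $r+t\le N_R-2$). The obstacle you anticipate---the explicit alignment construction and almost-sure full-rank verification in the third regime---is indeed the bulk of the paper's technical work, carried out via determinant-based neutralizing precoders, i.i.d.\ random scaling factors $\alpha$, and a linear-independence argument over the resulting polynomials in the channel coefficients.
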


The LP problem in Theorem \ref{thm 1} can be solved efficiently by some linear equation substitution and other manipulations. The closed-form and optimal solutions in the special cases when $N_T=N_R=2$ and $N_T=N_R=3$ are given in Corollary \ref{coro 2x2} and Corollary \ref{coro 3x3}, respectively, in Section \ref{section delivery}.

\begin{theorem}[Lower bound of NDT]\label{thm 2}
For the cache-aided interference network with $N_T\ge2$ transmitters, $N_R\ge2$ receivers, and $L\ge N_R$ files, where each transmitter has a cache of normalized size $\mu_T$ and each receiver has a cache of normalized size $\mu_R$, the minimum NDT is lower bounded by
\begin{align}
\tau^*(\mu_R,\mu_T)\ge\tau_{L1}\triangleq\max\limits_{\substack{l=1,\cdots,\min\{N_T,N_R\}\\s_1=0,1,\ldots,l\\s_2=0,1,\ldots,N_R-l}}\frac{1}{l}
\Big\{&(s_1+s_2)-(N_T-l)s_2\mu_T\notag\\
&\left.-\left(\frac{2s_2+s_1+1}{2}\cdot s_1+s_2^2\right)\mu_R\right\},\label{eqn converse}
\end{align}
when the caching functions allow for arbitrary intra-file coding, and by
\begin{align}
\tau^*(\mu_R,\mu_T)\ge\tau_{L2}\triangleq\max\limits_{\substack{l=1,\cdots,\min\{N_T,N_R\}\\s_1=0,1,\ldots,l\\s_2=0,1,\ldots,N_R-l}}\frac{1}{l}
\Big\{&(s_1+s_2)-(N_T-l)s_2\mu_T\notag\\
&\left.-\left(\frac{2s_2+s_1+1}{2}\cdot s_1+s_2^2\right)\mu_R\right.\notag\\
&\left.+\left(\frac{2s_2+s_1}{2}(s_1-1)+s_2^2\right)(1-N_T\mu_T)^+\right\}\label{eqn taul2}
\end{align}
when the intra-file coding is not allowed.
\end{theorem}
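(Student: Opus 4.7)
The plan is to derive both bounds via a cut-set / genie-message argument, extending the transmitter-only-cache technique of~\cite{lowerbound} to account also for receiver caches. Throughout, fix any feasible scheme achieving NDT $\tau$, and any admissible parameter triple $(l,s_1,s_2)$ with $1\le l\le\min\{N_T,N_R\}$, $0\le s_1\le l$, $0\le s_2\le N_R-l$. Designate a subset $\mathcal{R}_1\subseteq[N_R]$ of $l$ receivers to serve as a cut, together with a disjoint subset $\mathcal{R}_2\subseteq[N_R]\setminus\mathcal{R}_1$ of $s_2$ auxiliary receivers. Since any $l$ single-antenna receivers jointly admit at most $l$ degrees of freedom, across a block of length $T\le\tau F/\log P$ the mutual information between the transmit signals and the channel outputs at $\mathcal{R}_1$ is at most $l\tau F+o(F)$ in the high-SNR limit; this is the origin of the factor $1/l$ in the final bound.

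Next, I would construct a sequence of $s_1+s_2$ demand vectors $\mathbf{d}^{(1)},\ldots,\mathbf{d}^{(s_1+s_2)}$ that forces the receivers in $\mathcal{R}_1\cup\mathcal{R}_2$ to collectively decode $s_1+s_2$ distinct files across the sequence. The sequence is organized into an $s_1$-round ``main phase'' that introduces one new file per round to be decoded by a receiver in $\mathcal{R}_1$, and an $s_2$-round ``auxiliary phase'' that introduces one new file per round to be decoded by a receiver in $\mathcal{R}_2$. Summing Fano's inequality over the sequence, together with the cut DoF bound on each block, lower bounds $l\tau F$ by $(s_1+s_2)F$ minus the side information available to the receivers from their own caches and the side information revealed by a genie.

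The side-information subtraction has two sources. A genie reveals, for each of the $s_2$ auxiliary files, the cache contents of the $N_T-l$ transmitters outside a chosen active set, contributing at most $(N_T-l)s_2\mu_T F$ bits. The caches at $\mathcal{R}_1\cup\mathcal{R}_2$ contribute further native side information; tallying carefully how often each of these $l+s_2$ caches enters as side information for a distinct decoded file across the sequence produces the aggregate triangular coefficient $\frac{s_1(s_1+1)}{2}+s_1s_2+s_2^2=\frac{2s_2+s_1+1}{2}s_1+s_2^2$, multiplied by $\mu_R F$. Dividing by $lF$, letting $F\to\infty$, and maximizing over $(l,s_1,s_2)$ yields $\tau\ge\tau_{L1}$.

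For the strengthened bound $\tau_{L2}$, I would invoke the no-inter-file-coding hypothesis: the caches $(U_{p,i})_{p\in[N_T]}$ of each file $W_i$ are deterministic functions of $W_i$ alone and thus jointly contain at most $N_T\mu_T F$ bits of $W_i$. Whenever $N_T\mu_T<1$, a strictly positive fraction $(1-N_T\mu_T)$ of each file's bits must be delivered ``freshly'' and cannot be supplied by the transmitter-cache genie; adding back this saving via an analogous triangular count over the rounds produces the extra additive term $\left(\frac{2s_2+s_1}{2}(s_1-1)+s_2^2\right)(1-N_T\mu_T)^+$. The main obstacle throughout is the combinatorial bookkeeping that produces the stated triangular coefficients exactly: one must track, round by round, which receiver caches and which transmitter fresh-bit components have already been ``spent'' on earlier decodings, and avoid double counting while keeping the Fano chain tight. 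Correctly dovetailing this combinatorics with the per-round entropy-subtraction step of Fano's inequality, and invoking the intra-file-only restriction via a per-file entropy decomposition, is the crux of the argument.
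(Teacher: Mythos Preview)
Your proposal has the right numerical targets and recognizes the genie/cut structure, but there are two concrete gaps relative to the paper's proof.

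First, the paper does not use a sequence of demand vectors. It works with a \emph{single} delivery block for one worst-case demand $\mathbf{d}$. The step you are missing is the signal-inversion lemma from \cite{lowerbound}: given the caches $U_{l+1\sim N_T}$ of any $N_T-l$ transmitters, one reconstructs their transmitted codewords; from those and the $l$ received signals $Y_{1\sim l}$ one almost surely inverts the $l\times l$ channel submatrix to recover the remaining $l$ transmitted codewords, and hence synthesizes the received signal at \emph{every} receiver. This is why, from the single block $Y_{1\sim l}$ together with $U_{l+1\sim N_T}$ and $V_q$, the file of each $q\in\mathcal{S}_1\cup\mathcal{S}_2$ is decodable. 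A genuine multi-round argument would put $(s_1+s_2)\,l\tau F$ on the left side, not $l\tau F$, and yield a bound weaker by a factor $s_1+s_2$. The triangular coefficient $\tfrac{s_1(s_1+1)}{2}+s_1s_2+s_2^2$ does not come from reusing caches across rounds; it comes from a chain-rule peeling of $H(V_{1\sim s_1},V_{l+1\sim l+s_2}\mid Y_{1\sim l},\ldots)$ within one round: after revealing $V_1$ one decodes $W_{d_1}$ via Fano, so $V_2$ contributes only about the remaining $s_1+s_2-1$ unknown files, and so on through $V_{s_1}$; the $s_2$ caches $V_{l+1\sim l+s_2}$ each contribute about the same $s_2$ unknowns (their owners are outside the cut and cannot be sequentially decoded from one another), giving the $s_2^2$ term.

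Second, your mechanism for $\tau_{L2}$ is inverted. The improvement does not come from the transmitter-cache genie being limited; it comes from the receiver caches being forced to \emph{overlap}. With no intra-file coding, the $(1-N_T\mu_T)^+F$ bits of each file that lie in no transmitter cache must lie in \emph{every} receiver cache for feasibility. Hence, in the chain-rule peeling, once $V_1$ is revealed these common bits are already known, so each subsequent $V_k$ contributes only $(\mu_R-(1-N_T\mu_T)^+)F$ fresh bits per unknown file; summing yields the additive term $\big(\tfrac{2s_2+s_1}{2}(s_1-1)+s_2^2\big)(1-N_T\mu_T)^+$. Note this relies on the no-\emph{intra}-file-coding hypothesis, not merely no-inter-file.
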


The proof of Theorem \ref{thm 1} will be given in Section \ref{section cache placement} and \ref{section delivery}, and the proof of  Theorem \ref{thm 2} will be given in Section \ref{section converse}.

It can be seen from the above theorems that both the upper and lower bounds of the minimum NDT are convex\footnote{Please refer to \cite[Lemma 1]{simeone} for more detailed analysis of the convexity of NDT.} and non-increasing functions of normalized cache sizes $\mu_R$ and $\mu_T$. The following two corollaries state the relations of the two bounds, the proofs of which will be given in Appendix B and Appendix C, respectively.

\begin{corollary}[Optimality]\label{coro optimality}
The achievable NDT is optimal (i.e., coincides with the lower bound) when $(\mu_R,\mu_T)$ satisfies any of the following conditions:
\begin{enumerate}
  \item $N_R\mu_R+N_T\mu_T\ge N_R$: the optimal NDT is $\tau^*=1-\mu_R$;
  \item $(\mu_R,\mu_T)=(0,1)$: the optimal NDT is $\tau^*=\frac{N_R}{\min\{N_T,N_R\}}$;
  \item $(\mu_R,\mu_T)=(0,1/N_T)$: the optimal NDT is $\tau^*=\frac{N_T+N_R-1}{N_T}$;
  \item $\mu_R + N_T\mu_T=1$ when there is no intra-file coding in the caching functions: the optimal NDT is $\tau^*=\frac{N_T+N_R-1}{N_T}(1-\mu_R)$.
\end{enumerate}
\end{corollary}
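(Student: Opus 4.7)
The plan is to establish each of the four items by exhibiting a lower bound from Theorem \ref{thm 2} and an upper bound from Theorem \ref{thm 1} that coincide with the stated optimum, thereby sandwiching $\tau^*$. The matching lower bound in each case is obtained by a single triple $(l,s_1,s_2)$: take $(1,1,0)$ for item 1 (which collapses \eqref{eqn converse} to $1-\mu_R$), $(\min\{N_T,N_R\},\min\{N_T,N_R\},N_R-\min\{N_T,N_R\})$ for item 2 (where the cache terms vanish since $\mu_R=0$ and $(N_T-l)s_2=0$, leaving $N_R/l$), $(1,1,N_R-1)$ for item 3 (giving $N_R-(N_T-1)(N_R-1)/N_T=(N_T+N_R-1)/N_T$), and the same $(1,1,N_R-1)$ for item 4 in \eqref{eqn taul2}. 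In item 4 the extra term $(N_R-1)^2(1-N_T\mu_T)^+$ equals $(N_R-1)^2\mu_R$ on the boundary $\mu_R+N_T\mu_T=1$, and combines with the receiver-cache penalty so that the bracket factors cleanly as $(1-\mu_R)(N_T+N_R-1)/N_T$.

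For the matching upper bounds I would construct explicit feasible $\{a_{r,t}\}$ in $\mathcal{P}_1$. Item 2 takes $a_{0,N_T}=1$: inspection of \eqref{eqn tau ip} (with the $(r+t)/N_R$ term dominating $d_1$ when $N_T\le N_R-2$, and $d=1$ trivially when $N_T\ge N_R$) gives $d_{0,N_T}=\min\{N_T,N_R\}/N_R$, hence $\tau_U=N_R/\min\{N_T,N_R\}$. Item 3 takes $a_{0,1}=1/N_T$: \eqref{eqn tau ip} yields $d_{0,1}=N_T/(N_T+N_R-1)$ (the inner maximum in $d_1$ is attained at $t'=1$), so $\tau_U=(N_T+N_R-1)/N_T$, while the transmitter cache is exactly $1/N_T=\mu_T$. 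Item 4 combines the two by taking $a_{0,1}=\mu_T$ and $a_{N_R,0}=\mu_R$: the file-budget equation \eqref{eqn tmin1} reduces to the boundary $N_T\mu_T+\mu_R=1$, constraints \eqref{eqn tmin2} and \eqref{eqn tmin3} are tight, and the objective evaluates to $(N_T+N_R-1)\mu_T=(N_T+N_R-1)(1-\mu_R)/N_T$.

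The main technical challenge is item 1, since the claimed value must hold throughout the two-dimensional region $N_R\mu_R+N_T\mu_T\ge N_R$ rather than at an isolated point. My strategy is to pick the ``diagonal'' support $r_1=\max\{0,N_R-N_T\}$, $t_1=N_R-r_1=\min\{N_R,N_T\}$, and set
\[
a_{r_1,t_1}=\frac{1-\mu_R}{\binom{N_R-1}{r_1}\binom{N_T}{t_1}},\qquad a_{N_R,0}=\mu_R-\frac{r_1(1-\mu_R)}{N_R-r_1},
\]
with all other $a_{r,t}=0$. Because $r_1+t_1=N_R$ one has $d_{r_1,t_1}=1$; Pascal's identity $\binom{N_R}{r_1}-\binom{N_R-1}{r_1-1}=\binom{N_R-1}{r_1}$ then forces \eqref{eqn tmin2} to be tight at $\mu_R$ and makes the objective collapse automatically to $\binom{N_R-1}{r_1}\binom{N_T}{t_1}a_{r_1,t_1}=1-\mu_R$. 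The transmitter-cache constraint \eqref{eqn tmin3} simplifies to precisely $N_R(1-\mu_R)\le N_T\mu_T$, which is the case-1 hypothesis, and non-negativity $a_{N_R,0}\ge 0$ becomes $r_1\le N_R\mu_R$; this holds trivially when $N_T\ge N_R$ (since $r_1=0$), and follows from $\mu_T\le 1$ combined with the hypothesis when $N_T<N_R$ (forcing $\mu_R\ge(N_R-N_T)/N_R=r_1/N_R$). The subtle point is recognizing that this single diagonal choice unifies both the $N_T\ge N_R$ and $N_T<N_R$ subregions; once it is identified, the feasibility verifications collapse to the case-1 hypothesis by a short Pascal-identity calculation.
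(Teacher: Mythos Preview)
Your proposal is correct and follows essentially the same strategy as the paper: the same parameter choices $(l,s_1,s_2)$ for the lower bounds from Theorem~\ref{thm 2} and the same file-splitting constructions for the upper bounds in $\mathcal{P}_1$. Your unified ``diagonal'' construction for item~1 is precisely the paper's two-case solution (the paper treats $N_T\ge N_R$ with $a^*_{0,N_R},a^*_{N_R,0}$ and $N_T<N_R$ with $a^*_{N_R-N_T,N_T},a^*_{N_R,0}$ separately) written as a single formula, and your Pascal-identity verification reproduces the paper's feasibility check.
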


\begin{corollary}[Gap of NDT]\label{coro gap}
The multiplicative gap between the upper bound and the lower bound of the minimum NDT is within 2 when $N_T\ge N_R$,  within 12 when $N_T<N_R,\mu_T\ge\frac{1}{N_T}$, and within $\frac{N_T+N_R-1}{N_T}$ when $N_T<N_R,\mu_T<\frac{1}{N_T}$.
\end{corollary}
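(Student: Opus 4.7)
The plan is to combine the convexity of $\tau_U$ with the exact optimality results of Corollary \ref{coro optimality} to obtain piecewise-linear upper bounds via memory-sharing, and to match these with a judicious choice of $(l, s_1, s_2)$ in Theorem \ref{thm 2} to extract comparable lower bounds; the gap is then the worst-case ratio in each regime. A first observation I would record is that $\tau_U(\mu_R,\mu_T)$ is convex and non-increasing in $(\mu_R, \mu_T)$, because the LP feasible set in Theorem \ref{thm 1} expands monotonically with $(\mu_R,\mu_T)$ and its objective is linear in $\{a_{r,t}\}$, so any convex combination of feasible file-splitting vectors stays feasible at the corresponding convex combination of cache sizes. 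Consequently $\tau_U$ at any interior point is at most the convex combination of its values at the vertices of any enclosing polytope.

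I would then catalogue the ``anchor'' cache-size points at which Corollary \ref{coro optimality} pins $\tau^*$ exactly: the half-space $N_R\mu_R+N_T\mu_T\ge N_R$ (where $\tau^*=1-\mu_R$), the vertex $(0,1)$ (where $\tau^*=N_R/\min\{N_T,N_R\}$), the vertex $(0,1/N_T)$ (where $\tau^*=(N_T+N_R-1)/N_T$), and the feasibility boundary $\mu_R+N_T\mu_T=1$ (where $\tau_U\le\tfrac{N_T+N_R-1}{N_T}(1-\mu_R)$ holds as an achievability statement regardless of whether intra-file coding is allowed). A universally applicable lower bound that I would repeatedly invoke is $\tau^*\ge 1-\mu_R$, obtained from Theorem \ref{thm 2} with $l=s_1=1$ and $s_2=0$.

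For $N_T\ge N_R$, I would split the feasible region along the hyperplane $N_R\mu_R+N_T\mu_T=N_R$: above it the gap is $1$ by Corollary \ref{coro optimality}(1), and below it memory-sharing between $(0,1)$ and a point of the hyperplane gives a linear upper bound on $\tau_U$ whose ratio to the Theorem \ref{thm 2} lower bound with $l=N_R$ and a suitably chosen $s_1$ is bounded by $2$. For $N_T<N_R$ and $\mu_T<1/N_T$, feasibility forces $(\mu_R,\mu_T)$ onto the segment $\mu_R+N_T\mu_T=1$, so the achievability $\tau_U\le\tfrac{N_T+N_R-1}{N_T}(1-\mu_R)$ from Corollary \ref{coro optimality}(4) combined with $\tau^*\ge 1-\mu_R$ yields the exact ratio $(N_T+N_R-1)/N_T$.

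The main obstacle will be the regime $N_T<N_R$, $\mu_T\ge 1/N_T$, where no corner directly applies in the interior and the lower bound of Theorem \ref{thm 2} is significantly weaker. Here I would further subdivide based on the position of $(\mu_R,\mu_T)$: in the half-space $N_R\mu_R+N_T\mu_T\ge N_R$ the gap is $1$ again, and in the complementary subregion I would memory-share among the three anchors $(0,1/N_T)$, $(0,1)$, and a point on $N_R\mu_R+N_T\mu_T=N_R$ to obtain a three-term linear upper bound on $\tau_U$, while in Theorem \ref{thm 2} I would take $l=N_T$ and tune $(s_1,s_2)$ to produce a matching linear lower bound. The hard part will be verifying that the worst-case ratio across this whole subregion stays bounded by $12$ uniformly, particularly near the boundaries $\mu_T=1/N_T$ and $\mu_R=0$ where the memory-sharing weight on the $(0,1/N_T)$ anchor becomes dominant and the chosen lower bound becomes loose; this demands a careful case analysis that I would reserve for the full proof.
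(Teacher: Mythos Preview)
Your plan has a genuine gap in the regime $N_T<N_R$, $\mu_T\ge 1/N_T$, and it is precisely the part you flagged as ``hard.'' The three-anchor memory-sharing you propose---between $(0,1/N_T)$, $(0,1)$, and a point on $N_R\mu_R+N_T\mu_T=N_R$---produces an upper bound on $\tau_U$ that is \emph{affine} in $\mu_R$, with slope determined by the endpoint values. At both anchors with $\mu_R=0$ the NDT is of order $N_R/N_T$, so for any moderate fixed $\mu_R$ (say $\mu_R=1/4$, $\mu_T=1/N_T$, $N_T=2$) your interpolated upper bound is of order $N_R$, while every lower bound from Theorem~\ref{thm 2} at that point is $O(1)$; the ratio is therefore unbounded in $N_R$. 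The anchors you chose do not capture coded multicasting gain in the interior.

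What the paper does instead is use, as additional anchors, the \emph{integer points} $(\hat\mu_R,1/N_T)$ with $\hat\mu_R=\lfloor N_R\mu_R\rfloor/N_R$, at which the single-file-splitting-ratio scheme $a_{N_R\hat\mu_R,1}>0$ achieves
\[
\tau_U(\hat\mu_R,1/N_T)\;\le\;\frac{N_T-1+\frac{N_R}{N_R\hat\mu_R+1}}{N_T}(1-\hat\mu_R)\;\le\;1+\frac{1}{N_T\mu_R},
\]
which is $O(1/\mu_R)$ rather than $O(N_R)$. This is then matched against a lower bound with $l=s_1=N_T$ and $s_2=\lfloor \tfrac{1}{2\mu_R}-\tfrac{N_T}{2}\rfloor$ (so $s_2$ scales like $1/\mu_R$, not $N_R-N_T$), which yields $\tau^*\gtrsim \tfrac{1}{4N_T\mu_R}$. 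Both sides now scale the same way and their ratio is bounded by $12$ after a further case split in $(N_T,\mu_R)$. Your plan names neither the integer-point upper bound nor the $\mu_R$-dependent choice of $s_2$; without these two ingredients no constant bound is attainable in this regime.

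A smaller correction: for $N_T<N_R$ and $\mu_T<1/N_T$, feasibility does \emph{not} force $(\mu_R,\mu_T)$ onto the segment $\mu_R+N_T\mu_T=1$; the constraint is an inequality. Your conclusion there is nonetheless salvageable because the simple scheme $a_{0,1}=(1-\mu_R)/N_T$, $a_{N_R,0}=\mu_R$ is feasible throughout that region (its transmitter-cache constraint is exactly $\mu_R+N_T\mu_T\ge 1$), giving $\tau_U\le\frac{N_T+N_R-1}{N_T}(1-\mu_R)$ everywhere, and the comparison with $\tau^*\ge 1-\mu_R$ then yields the stated ratio.
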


\subsection{Special Cases}
\subsubsection{Transmitter cache only ($\mu_R=0$)}

\

In the special case when $\mu_R=0$ (transmitter cache only), the achievable NDT for the $3\times 3$ network in Theorem 1 reduces to
\begin{align}
\tau_U(0,\mu_T)=
\left\{
\begin{array}{ll}
13/6-3\mu_T/2, & 1/3\le\mu_T\le2/3\\
3/2-\mu_T/2, &2/3<\mu_T\le1
\end{array}.\label{eqn 33ali}
\right.
\end{align}
In \cite{upperbound}, the authors obtained the inverse of an achievable sum DoF of this network. By Remark \ref{definition ndt}, it is seen that our result \eqref{eqn 33ali} is consistent with that in \cite{upperbound}.

Also, when $\mu_R=0$, by setting $s_1=l,s_2=N_R-l$ in \eqref{eqn converse} of Theorem \ref{thm 2}, a loosened lower bound is
\begin{align}
\tau^*\ge \max\limits_{l=1,2,\cdots,\min\{N_T,N_R\}}\frac{1}{l}&\left(
N_R-(N_T-l)(N_R-l)\mu_T\right).\notag
\end{align}
which is the same as the lower bound $\delta^*(\mu)$ in \cite{lowerbound}.

\subsubsection{Full transmitter cache ($\mu_T=1$)}

\

When $\mu_T=1$, each transmitter can cache all the files and hence can fully cooperate with each other. The network can thus be viewed as a virtual broadcast channel as in \cite{fundamentallimits} except that the server (transmitter) has $N_T$ distributed antennas.

When $N_T\ge N_R$, by Corollary \ref{coro optimality}, the optimal NDT of the $N_T \times N_R$ network is $\tau^*=1-\mu_R$. This can be achieved by letting $a^*_{N_R,0}=\mu_R,a^*_{0,N_R}=\frac{1-\mu_R}{\binom{N_T}{N_R}}$ and others being 0 in \eqref{eqn tmin}. By comparing to the result in \cite{fundamentallimits}, i.e. the lower convex envelop of points $\tau=\frac{N_R(1-\mu_R)}{1+N_R\mu_R}$ at $\mu_R\in\{0,1/N_R,2/N_R,\ldots,1\}$, it is seen that our NDT is better when $0\le\mu_R<1-\frac{1}{N_R}$ and the same when $1-\frac{1}{N_R}\le \mu_R \le1$.

When $N_T<N_R$, by Theorem \ref{thm 1}, an achievable NDT (not necessarily optimal) at certain $\mu_R$ is
\begin{align}
\tilde{\tau}_U=
\left\{
\begin{array}{ll}
\frac{N_R(1-\mu_R)}{N_T+N_R\mu_R},&\mu_R\in\{0,\frac{1}{N_R},\frac{2}{N_R}\ldots,\\
&\qquad\quad\frac{N_R-N_T-2}{N_R}\}\\
1-\mu_R+\frac{1}{N_T\binom{N_R}{N_R\mu_R}},&\mu_R=\frac{N_R-N_T-1}{N_R}\\
1-\mu_R,&\mu_R\in\{\frac{N_R-N_T}{N_R},\frac{N_R-N_T+1}{N_R},\\
&\qquad\quad \frac{N_R-N_T+2}{N_R},\ldots,1\}
\end{array},\notag
\right.
\end{align}
by letting $a_{N_R\mu_R,N_T}=\frac{1}{\binom{N_R}{N_R\mu_R}}$ and others being 0 in \eqref{eqn tmin}. By comparing to the result in \cite{fundamentallimits}, i.e., $\tau=\frac{N_R(1-\mu_R)}{1+N_R\mu_R}$ at these points, it is seen that our NDT is better when $\mu_R\in\{0,\frac{1}{N_R},\ldots,\frac{N_R-2}{N_R}\}$ and the same when $\mu_R\in\{\frac{N_R-1}{N_R},1\}$.

The above performance improvements are all due to transmitter cooperation gain.

\section{File Splitting and Cache Placement}\label{section cache placement}
In this section, we propose a novel file splitting and cache placement scheme for any given normalized cache sizes $\mu_R$ and $\mu_T$ and any transmitter and receiver node numbers $N_T$ and $N_R$. This scheme is the basis of the proofs of all the achievable NDTs.

In this work, we treat all the files equally without taking file popularity into account. Thus, each file will be split and cached in the same manner. Without loss of generality, we focus on the splitting and caching of file $W_i$ for any $1\le i\le L$. Since each bit of the file is either cached or not cached at every node, there are $2^{N_T+N_R}$ possible cache states for each bit. Not every cache state is, however, legitimate. In specific, every bit of the file must be cached in at least one node. In addition, every bit that is not cached simultaneously in all receivers must be cached in at least one transmitter.\footnote{This is because we do not allow receiver cooperation and all the messages must be sent from the transmitters.} As such, the total number of feasible cache states for each bit is given by $2^{N_T+N_R}-\binom{N_T}{0}\sum^{N_R-1}_{r=0}\binom{N_R}{r}=\sum_{r=0}^{N_R} \sum_{t=1}^{N_T}\binom{N_R}{r}\binom{N_T}{t}+1$. Now we can partition each $W_i$ into $\sum_{r=0}^{N_R} \sum_{t=1}^{N_T}\binom{N_R}{r}\binom{N_T}{t}+1$ subfiles exclusively, each associated with one unique cache state and with possibly different length.

Define receiver subset $\Phi\subseteq [N_R]$ and transmitter subset $\Psi \subseteq [N_T]$. Then, denote $W_{i,\textrm{R}_\Phi, \textrm{T}_\Psi}$ as the subfile of $W_i$ cached in receiver subset $\Phi$ and transmitter subset $\Psi$. For example, $W_{i,\textrm{R}_{12},\textrm{T}_{12}}$ is the subfile cached in receivers 1 and 2 and transmitters 1 and 2,  $W_{i,\textrm{R}_\emptyset,\textrm{T}_{123}}$ is the subfile cached in none of the receivers but in transmitters 1, 2 and 3. Similarly, we denote $W_{i,\textrm{R}_\Phi}$ as the collection of the subfiles of $W_i$ that are cached in receiver subset $\Phi$, i.e., $W_{i,\textrm{R}_\Phi}=\bigcup\limits_{\Psi}W_{i,\textrm{R}_\Phi,\textrm{T}_\Psi}$. We assume that the subfiles that are cached in the same number of transmitters and the same number of receivers have the same size. Due to the symmetry of all the nodes as well as the independence of all files, this assumption is valid and does not lose any generality. Thus, we denote the size of $W_{i,\textrm{R}_\Phi,\textrm{T}_\Psi}$ by $a_{r,t}F$, where $r=|\Phi|, t=|\Psi|$, and $a_{r,t}\in[0,1]$ is the file splitting ratio to be optimized later. For example, the size of $W_{i,\textrm{R}_{12},\textrm{T}_{12}}$ is $a_{2,2}F$, and the size of $W_{i,\textrm{R}_\emptyset,\textrm{T}_{123}}$ is $a_{0,3}F$. Here, the file splitting ratios $\{a_{r,t}\}$ should satisfy the following constraints:

\begin{empheq}[left=\empheqlbrace]{align}
&\sum_{r=0}^{N_R} \sum_{t=1}^{N_T}\binom{N_R}{r}\binom{N_T}{t}a_{r,t}+a_{N_R,0}=1,\label{eqn:total cache}\\
&\sum_{r=1}^{N_R} \sum_{t=1}^{N_T}\binom{N_R-1}{r-1}\binom{N_T}{t}a_{r,t}+a_{N_R,0}\le\mu_R,\label{eqn:receiver cache}\\
&\sum_{r=0}^{N_R} \sum_{t=1}^{N_T}\binom{N_R}{r}\binom{N_T-1}{t-1}a_{r,t}\le\mu_T.\label{eqn:transmitter cache}
\end{empheq}
Constraint \eqref{eqn:total cache} comes from the file size limit. This is because for each file, the number of its subfiles cached in $r$ out of $N_R$ receivers and $t$ out of $N_T$ transmitters is given by $\binom{N_R}{r}\binom{N_T}{t}$ and they all have same length of $a_{r,t}F$ bits, for $r =0,1,\cdots,N_R$ and $t=1,2,\cdots,N_T$ or $(r=N_R,t=0)$. Constraint \eqref{eqn:receiver cache} comes from the receiver cache size limit. This is because for each receiver, the total number of subfiles it caches is given by $\sum_{r=1}^{N_R} \sum_{t=1}^{N_T}\binom{N_R-1}{r-1}\binom{N_T}{t}+1$. Among them, there are $\binom{N_R-1}{r-1}\binom{N_T}{t}$ subfiles with length of $a_{r,t}F$ bits, for $r =1,2,\cdots,N_R$ and $t=1,2,\cdots,N_T$, and there is only one subfile with length of $a_{N_R,0}F$ bits. Likewise, constraint \eqref{eqn:transmitter cache} comes from the transmitter cache size limit. This is because for each transmitter, the total number of subfiles it caches is given by $\sum_{r=0}^{N_R} \sum_{t=1}^{N_T}\binom{N_R}{r}\binom{N_T-1}{t-1}$. Among them, there are $\binom{N_R}{r}\binom{N_T-1}{t-1}$ subfiles with length of $a_{r,t}F$ bits for $r =0,1,\cdots,N_R$ and $t=1,2,\cdots,N_T$.

\begin{remark}[Integer points and equal file splitting]\label{remark integer point}
Consider the special case where $(\mu_R, \mu_T)$ satisfies $N_R\mu_R = m$ and $N_T \mu_T = n$ with $m$ and $n$ being any integers. These normalized cache size values are referred to as \textit{integer points}, where every bit of each file can be cached simultaneously at $m$ receivers and $n$ transmitters \emph{on average}. The authors in \cite{bothcache} proposed to split each file so that each bit is cached \emph{exactly} at $m$ receivers and $n$ transmitters. We refer to this file spitting scheme as \emph{equal file splitting}. For example, in a $3\times 3$ interference network at integer point ($\mu_R = 1/3$, $\mu_T = 2/3$), they proposed to partition each file equally into nine disjoint subfiles, each of fractional size $a_{1,2}=1/9$, then place each subfile at exactly one receiver and two transmitters. Such file spitting and cache placement method is, however, not unique. Alternatively, we can partition each file into two subfiles, one cached at all three transmitters but not any receiver with fractional size $a_{0,3}=2/3$ and the other cached at all three receivers but not any transmitter with fractional size $a_{3,0}=1/3$. As we will show in Section \ref{section delivery}, the two file splitting and caching strategies achieve the same NDT. Through this example, it can be seen that our proposed file splitting and cache placement strategy is more general. It offers the freedom to adjust the file splitting ratios for caching gain optimization as will be discussed in Section \ref{section delivery}.
\end{remark}

\section{Delivery Scheme and Caching Optimization}\label{section delivery}
In this section, we prove the achievability of the NDT in Theorem \ref{thm 1}. The main idea of the proof is to design the delivery phase given the file splitting and caching strategy presented in Section \ref{section cache placement}, and then to compute and minimize the achievable NDT by optimizing the file splitting ratios. We consider the worst-case scenario where each receiver requests a distinct file. When some receivers request the same file, the proposed delivery strategy can still be applied either directly or by treating the requests as being different. Without loss of generality, we assume that receiver $q$ ($q=1,2,\ldots,N_R$) desires $W_q$ in the delivery phase. In specific, receiver $q$ wants subfiles $\{W_{q,\textrm{R}_\Phi,\textrm{T}_\Psi}: q\notin \Phi\}$. We divide these subfiles into $N_RN_T$ groups according to the number of transmitters and receivers where they are cached, or equivalently, their fractional file sizes $\{a_{r,t}\}$. There are $N_R\binom{N_R-1}{r}\binom{N_T}{t}$ subfiles in the group associated with $a_{r,t}$, and each receiver desires $\binom{N_R-1}{r}\binom{N_T}{t}$ subfiles of them. Each group of subfiles is delivered individually in the time division manner. Without loss of generality, we present the delivery strategy of an arbitrary group of subfiles with fractional file size $a_{r,t}$ in this section, for $0\le r\le N_R-1,1\le t\le N_T$. As will be clear in the following subsection, the cache states of these subfiles transform the original interference network into cooperative X-multicast channels, and therefore exploit transmitter cooperation gain and coded multicasting gain, apart from local caching gain.

\subsection{Delivery of Subfiles in the Group with $a_{r,t}$}
Note that each subfile in the same group with fractional file size $a_{r,t}$ is desired by one receiver, and already cached at $r$ different receivers and $t$ different transmitters. Coded multicasting approach can be used in the delivery phase through bit-wise XOR, similar with \cite{fundamentallimits}. In specific, given an arbitrary receiver subset $\Phi^+$ with size $|\Phi^+|=r+1$ and an arbitrary transmitter subset $\Psi$ with size $t$, each transmitter in $\Psi$ generates the coded message $\bigoplus\limits_{q\in\Phi^+} W_{q,\textrm{R}_{\Phi^+\backslash\{q\}},\textrm{T}_\Psi}$ desired by all receivers in $\Phi^+$. Note that each receiver $q$ in $\Phi^+$ has cached subfiles $W_{q',\textrm{R}_{\Phi^+\backslash\{q'\}},\textrm{T}_\Psi}$ for $q'\in\Phi^+\setminus\{q\}$, and thus can successfully decode its desired subfile $W_{q,\textrm{R}_{\Phi^+\backslash\{q\}},\textrm{T}_\Psi}$ from the coded message $\bigoplus\limits_{q\in\Phi^+} W_{q,\textrm{R}_{\Phi^+\backslash\{q\}},\textrm{T}_\Psi}$. Through this coded multicasting approach, $r+1$ different subfiles are combined into a single coded message via XOR, and there are only $\binom{N_R}{r+1}\binom{N_T}{t}$ coded messages to be transmitted in total, each available at $t$ transmitters and desired by $r+1$ receivers. We define the channel with such message flow formally as below.
\begin{definition}\label{def channel}
The channel characterized as follows is referred to as the $\binom{N_T}{t}\times\binom{N_R}{r+1}$ cooperative X-multicast channel:
\begin{enumerate}
  \item there are $N_R$ receivers and $N_T$ transmitters;
  \item each set of $r+1$ ($r< N_R$) receivers forms a receiver multicast group;
  \item each set of $t$ ($t\le N_T$) transmitters forms a transmitter cooperation group;
  \item each transmitter cooperation group has an independent message to send to each receiver multicast group.
\end{enumerate}
\end{definition}
In the special case when $(r,t)=(0,1)$ (or $(r,t)=(0,N_T)$), the cooperative X-multicast channel reduces to the X channel (or MISO broadcast channel). When $t=1$, the channel reduces to the $(r+1)$-multicast X-channel defined in \cite{niesen}. The achievable DoF of the $\binom{N_T}{t}\times\binom{N_R}{r+1}$ cooperative X-multicast channel is presented in the following lemma.
\begin{lemma} \label{lemma dof}
The achievable per-user DoF of the $\binom{N_T}{t}\times\binom{N_R}{r+1}$ cooperative X-multicast channel is
\begin{align}
d_{r,t}=\left\{
\begin{array}{ll}
1, &r+t\ge N_R\\
\frac{\binom{N_R-1}{r}\binom{N_T}{t}t}{\binom{N_R-1}{r}\binom{N_T}{t}t+1}, &r+t= N_R-1\\
\max\left\{d'_{r,t},\frac{r+t}{N_R}\right\},& r+t\le N_R-2
\end{array},
\right.\label{lemma d}
\end{align}
where
\begin{align}
d'_{r,t}\triangleq\max\limits_{1\le t'\le t}\left\{\frac{\binom{N_R-1}{r}\binom{N_T}{t'}\binom{N_R-r-1}{t'-1}t'}
{\binom{N_R-1}{r}\binom{N_T}{t'}\binom{N_R-r-1}{t'-1}t'+\binom{N_R-1}{r+1}\binom{N_R-r-2}{t'-1}\binom{N_T}{t'-1}}\right\}
\end{align}
\end{lemma}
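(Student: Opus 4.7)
The plan is to establish the three cases of Lemma~\ref{lemma dof} by exhibiting distinct achievability schemes tailored to each regime of $r+t$, matching the combinatorial structure of the $\binom{N_T}{t}\times\binom{N_R}{r+1}$ cooperative X-multicast channel with per-user DoF formulas through a mix of interference neutralization, time sharing, and asymptotic interference alignment.

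In the regime $r+t\ge N_R$, my plan is to use pure interference neutralization. Each coded message is intended for $r+1$ receivers and is simultaneously available at $t$ cooperating transmitters, leaving only $N_R-r-1$ unintended receivers. Since $t\ge N_R-r>N_R-r-1$, the $t$ cooperating transmitters have enough jointly controlled antennas to beamform the message toward its $r+1$ intended receivers while zero-forcing its contribution at all other receivers. Sending the $\binom{N_R}{r+1}\binom{N_T}{t}$ messages over orthogonal time slots then delivers each symbol at full DoF $1$, so each receiver decodes its $\binom{N_R-1}{r}\binom{N_T}{t}$ desired messages at per-user DoF $d_{r,t}=1$.

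In the critical regime $r+t=N_R-1$, my plan is to adapt the Cadambe--Jafar asymptotic interference alignment scheme to the cooperative X-multicast setting. Each receiver desires $\binom{N_R-1}{r}\binom{N_T}{t}$ messages, each carried on $t$ cooperative transmit dimensions, while the complementary messages act as interference. A careful dimension count shows that, using symbol extensions and generic beamforming, all interfering streams at each receiver can be aligned into essentially one dimension, giving the standard $\tfrac{K}{K+1}$ formula with $K=\binom{N_R-1}{r}\binom{N_T}{t}t$. The key verification is that the precoders keep the desired subspace full-rank at intended receivers while collapsing the interference subspace at unintended ones for almost all channel realizations. For $r+t\le N_R-2$, I will combine two strategies and take the maximum. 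First, I generalize the alignment of the previous case by allowing each message to be transmitted from only $t'\le t$ out of its $t$ cooperating transmitters; a refined combinatorial count identifies desired-stream dimension $\binom{N_R-1}{r}\binom{N_T}{t'}\binom{N_R-r-1}{t'-1}t'$ and interference dimension $\binom{N_R-1}{r+1}\binom{N_R-r-2}{t'-1}\binom{N_T}{t'-1}$ per receiver, and optimizing over $t'$ yields $d'_{r,t}$. Second, a time-sharing scheme that exploits the $r+t$ effective multicast/cooperation dimensions against the $N_R$ receivers gives the alternative bound $(r+t)/N_R$. Taking the better of the two recovers $d_{r,t}$ in this regime.

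The main obstacle will be the asymptotic interference alignment construction in the second and third cases: extending Cadambe--Jafar from single-antenna X-channels to multicast groups with transmitter cooperation requires identifying precoder subspaces that simultaneously satisfy the alignment conditions at every unintended receiver and preserve linear independence of desired streams, and then pushing the standard subspace-growth argument through the combinatorial structure induced by the $\binom{N_T}{t'}$ cooperating subsets and the $\binom{N_R}{r+1}$ multicast groups. The combinatorial bookkeeping of the DoF ratios is mechanical but delicate, and the optimization over $t'$ in $d'_{r,t}$ requires arguing that restricting to a strict subset of cooperating transmitters is without loss of achievability and can strictly help by equalizing the signal-to-interference dimension ratio.
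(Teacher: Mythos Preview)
Your overall three-case structure matches the paper's, but there are two substantive gaps.

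First, in the case $r+t\ge N_R$, sending the $\binom{N_R}{r+1}\binom{N_T}{t}$ messages on orthogonal time slots does \emph{not} give per-user DoF $1$: each receiver is active in only $\binom{N_R-1}{r}\binom{N_T}{t}$ of the $\binom{N_R}{r+1}\binom{N_T}{t}$ slots, so the resulting per-user DoF would be $(r+1)/N_R<1$. The paper's scheme instead transmits all messages sharing a given transmitter cooperation set $\mathcal{T}$ \emph{simultaneously}, with zero-forcing precoders (built via cofactor expansion of a channel submatrix) that null each message at all $N_R-r-1$ unintended receivers; then, over a $\binom{N_R-1}{r}$-symbol extension, each receiver sees a square linear system in its $\binom{N_R-1}{r}$ desired messages, and the nontrivial step is verifying that this system is full rank almost surely. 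Zero-forcing alone is not enough; the simultaneous transmission plus the rank argument is what delivers DoF $1$.

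Second, in the cases $r+t=N_R-1$ and $r+t\le N_R-2$, your plan jumps directly to Cadambe--Jafar alignment, but the paper's scheme is a two-stage construction: each message is first split into several sub-streams, and each sub-stream is \emph{partially neutralized} at $t-1$ (or $t'-1$) unintended receivers using the cooperating transmitters; only the residual interference left after neutralization is then asymptotically aligned. For $r+t=N_R-1$ this step leaves exactly one residual interferer per sub-stream, which is precisely what allows all interference at each receiver to be aligned into a single dimension and yields the $K/(K+1)$ formula with $K=\binom{N_R-1}{r}\binom{N_T}{t}t$; without neutralization the interference cannot collapse to one dimension and the formula does not follow. Likewise, in the third case the numerator and denominator of $d'_{r,t}$ are counts of desired sub-streams and residual-interference alignment subspaces after partial neutralization with cooperation size $t'$, and the reduction from $t$ to $t'$ is justified by a separate monotonicity argument (a message in a $\binom{N_T}{t}$-channel can be split so that the problem reduces to a $\binom{N_T}{t'}$-channel), not merely by asserting that restricting transmitters is without loss. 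Your proposal does not mention the neutralization step, so the dimension counts you quote are currently unsupported.
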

\begin{proof}
We present the main idea of the proof here. The detailed proof is given in Appendix A. Since each message can be cooperatively transmitted by $t$ transmitters, the interference it may cause to a maximum of $t-1$ undesired receivers can be neutralized through interference neutralization. When the actual number of undesired receivers for each message, $N_R-r-1$, does not exceed $t-1$, i.e. $N_R \le r+t$, by interference neutralization, each receiver only receives its desired messages with all undesired messages neutralized out. Therefore, a per-user DoF of 1 can be achieved. On the other hand, when $N_R>r+t$, each message will still cause interference to $N_R-r-t$ undesired receivers after interference neutralization. In this case, asymptotic  interference alignment is further applied  by partitioning the interference messages into groups and aligning the interferences from the same group in a same subspace at each undesired receiver, so as to achieve the per-user DoF in \eqref{lemma d}.
\end{proof}

\begin{remark}[Sum DoF]\label{rmk sum dof}
Since each message is desired by $r+1$ receivers in the $\binom{N_T}{t}\times\binom{N_R}{r+1}$ cooperative X-multicast channel, the achievable sum DoF of this channel is given by $d_{\textrm{sum}}=\frac{N_R}{r+1}d_{r,t}$, where $d_{r,t}$ is given in Lemma \ref{lemma dof}.
\end{remark}

\begin{remark}[Optimality of DoF]\label{rmk dof}
The achievable per-user DoF in Lemma \ref{lemma dof} is optimal in certain cases. In specific,
\begin{enumerate}
  \item When $(r,t)=(0,1)$, Lemma \ref{lemma dof} reduces to  $d_{0,1}=\frac{N_T}{N_T+N_R-1}$, which is optimal for the X channel \cite{Xchannel}.
  \item When $(r,t)=(0,N_T)$, Lemma \ref{lemma dof} reduces to $d_{0,N_T}=\min\{\frac{N_T}{N_R},1\}$, which is optimal for the MISO broadcast channel \cite{BCchannel}.
  \item When $r+t\ge N_R$, Lemma \ref{lemma dof} reduces to $d_{r,t}=1$, which is optimal for the considered channel. The converse can be proved easily by using a cut-set bound at each receiver.
\end{enumerate}
Note that the optimality of the DoF results in the above cases is a part of the reason that the achievable NDT is optimal under the conditions in Corollary \ref{coro optimality}. More specifically, it is observed from the proof of Corollary \ref{coro optimality} in Appendix B that the optimal file splitting ratios satisfy $a^*_{0,N_R}>0,a^*_{N_R-N_T,N_T}>0,a^*_{0,N_T}>0,a^*_{0,1}>0$ in the corresponding cache size regions.
\end{remark}

In the special case when $t=1$, Lemma \ref{lemma dof} reduces to the DoF of the $(r+1)$-multicast X-channel in \cite{niesen}, i.e., $d_{r,1}=\frac{N_T\binom{N_R-1}{r}}{N_T\binom{N_R-1}{r}+\binom{N_R-1}{r+1}}$.  When $(N_R,N_T)=(3,3)$ and $(r,t)=(0,2)$, Lemma \ref{lemma dof} reduces to the DoF of the cache-aided interference channel in \cite{upperbound} when $(\mu_R,\mu_T)=(0,2/3)$, i.e., $d_{0,2}=6/7$. When $(N_R,N_T)=(3,3)$ and $(r,t)=(1,1)$, $(r,t)=(1,2)$, $(r,t)=(1,3)$, Lemma \ref{lemma dof} reduces to the DoF of the hybrid X-multicast channel, partially cooperative X-multicast channel, and fully cooperative X-multicast channel in \cite{mine}, i.e. $d_{1,1}=6/7$, $d_{1,2}=1$, and $d_{1,3}=1$, respectively.

Since the channel formed by the delivery of the group with $a_{r,t}$ is the $\binom{N_T}{t}\times \binom{N_R}{r+1}$ cooperative X-multicast channel, and there are $\binom{N_R}{r+1}\binom{N_T}{t}$ coded messages to deliver, with each receiver desiring $\binom{N_R-1}{r}\binom{N_T}{t}$ of them, by Lemma \ref{lemma dof}, we can obtain the NDT of this group directly as $\tau_{r,t}=\frac{\binom{N_R-1}{r}\binom{N_T}{t}}{d_{r,t}}a_{r,t}$.

\subsection{Optimization of Splitting Ratios}

Summing up the NDTs obtained in the previous subsection for all groups, we can obtain the total NDT in the delivery phase:
\begin{align}
\tau=\sum_{r=0}^{N_R-1}\sum_{t=1}^{N_T}\frac{\binom{N_R-1}{r}\binom{N_T}{t}}{d_{r,t}}a_{r,t}.\label{eqn tau v}
\end{align}
We then optimize the file splitting ratios $\{a_{r,t}\}$ to minimize the total NDT subject to constraints \eqref{eqn:total cache}\eqref{eqn:receiver cache}\eqref{eqn:transmitter cache}. This is expressed as the LP problem shown in Theorem 1, where the constraints \eqref{eqn tmin1}\eqref{eqn tmin2}\eqref{eqn tmin3} are the same as \eqref{eqn:total cache}\eqref{eqn:receiver cache}\eqref{eqn:transmitter cache}, and $d_{r,t}$ in \eqref{eqn tau ip} is the same as \eqref{lemma d}. Thus, Theorem \ref{thm 1} is proved.

In the following corollaries, we present the closed-form and optimal solutions in Theorem \ref{thm 1} when $N_T=N_R=2$ and $N_T=N_R=3$ by using linear equation substitutions and other manipulations. The detailed computation for $N_T=N_R=2$ is given in Appendix D. The computation for $N_T=N_R=3$ is similar and omitted.

\begin{corollary}\label{coro 2x2}
For the cache-aided $2\times2$ interference network, the minimum NDT is upper bounded by
\begin{align}
\tau^*(\mu_R,\mu_T)\le\tau_U=
\left\{
\begin{array}{ll}
1-\mu_R, & (\mu_R,\mu_T)\in \mathcal{R}^1_{22}\\
2-2\mu_R-\mu_T, & (\mu_R,\mu_T)\in \mathcal{R}^2_{22}
\end{array},
\right.
\end{align}
where $\{\mathcal{R}_{22}^i\}^2_{i=1}$ are given below and sketched in Fig.~\ref{Fig regionnt2}.
\begin{align}
\left\{
\begin{array}{ll}
\mathcal{R}^1_{22}=\left\{(\mu_R,\mu_T): \mu_R+\mu_T\ge1,\mu_R\le1,\mu_T\le1\right\}\\
\mathcal{R}^2_{22}=\left\{(\mu_R,\mu_T): \mu_R+\mu_T<1,\mu_R\ge0,\mu_R+2\mu_T\ge1\right\}
\end{array}\notag
\right.
\end{align}
\end{corollary}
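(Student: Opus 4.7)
My plan is to derive Corollary~\ref{coro 2x2} by specializing the linear program $\mathcal{P}_1$ of Theorem~\ref{thm 1} to $N_T=N_R=2$ and solving it in closed form. First I would list the feasible index set $\mathcal{A}=\{(0,1),(0,2),(1,1),(1,2),(2,0),(2,1),(2,2)\}$ and apply Lemma~\ref{lemma dof}: the only fractional per-user DoF is $d_{0,1}=2/3$ (the regime $r+t=N_R-1$), while every other $d_{r,t}=1$ since $r+t\ge N_R=2$. Substituting these values into \eqref{eqn tmin}--\eqref{eqn tmin3} reduces the LP to minimizing $\tau=3a_{0,1}+a_{0,2}+2a_{1,1}+a_{1,2}$ subject to the file-size equality $2a_{0,1}+a_{0,2}+4a_{1,1}+2a_{1,2}+2a_{2,1}+a_{2,2}+a_{2,0}=1$, the receiver budget $2a_{1,1}+a_{1,2}+2a_{2,1}+a_{2,2}+a_{2,0}\le\mu_R$, and the transmitter budget $a_{0,1}+a_{0,2}+2a_{1,1}+2a_{1,2}+a_{2,1}+a_{2,2}\le\mu_T$, all with $a_{r,t}\in[0,1]$.

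The structural step is to argue that at an optimum the variables $a_{1,1},a_{1,2},a_{2,1},a_{2,2}$ may be set to zero, reducing attention to the three-variable ansatz $(a_{0,1},a_{0,2},a_{2,0})$. The intuition is that each of these four subfile types consumes transmitter and/or receiver cache per unit of file size at a rate that is dominated by a suitable convex combination of $a_{0,1}$ (X-channel delivery, charging only transmitter cache), $a_{0,2}$ (MISO-broadcast delivery, again only transmitter cache), and $a_{2,0}$ (already cached at both receivers). I would justify this rigorously by a simplex-style pivoting argument: starting from any feasible point with one of the four variables positive, a small mass transfer onto the triple $(a_{0,1},a_{0,2},a_{2,0})$ can be constructed that preserves the equality and both inequality constraints while not increasing the objective. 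Verifying this elimination carefully is the main obstacle; once it is done the remainder is bookkeeping on the reduced LP with constraints $2a_{0,1}+a_{0,2}+a_{2,0}=1$, $a_{2,0}\le\mu_R$, and $a_{0,1}+a_{0,2}\le\mu_T$.

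In region $\mathcal{R}^1_{22}$ where $\mu_R+\mu_T\ge 1$, I would set $a_{0,2}=1-\mu_R$, $a_{2,0}=\mu_R$, and $a_{0,1}=0$. The file-size equality holds, the receiver budget is tight, and the transmitter budget reduces to $1-\mu_R\le\mu_T$, which is precisely the defining inequality of $\mathcal{R}^1_{22}$; the objective evaluates to $\tau=1-\mu_R$. In region $\mathcal{R}^2_{22}$ where $\mu_R+\mu_T<1$ and $\mu_R+2\mu_T\ge 1$, the transmitter budget is tight, so I would set $a_{0,1}=1-\mu_R-\mu_T$, $a_{0,2}=\mu_R+2\mu_T-1$, and $a_{2,0}=\mu_R$; all three quantities are non-negative thanks to the defining inequalities of $\mathcal{R}^2_{22}$, all three LP constraints are satisfied with the transmitter budget saturated, and the objective evaluates to $3(1-\mu_R-\mu_T)+(\mu_R+2\mu_T-1)=2-2\mu_R-\mu_T$.

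As a consistency check I would verify that the two expressions agree on the common boundary $\mu_R+\mu_T=1$ (both give $1-\mu_R$) and that $\mathcal{R}^1_{22}\cup\mathcal{R}^2_{22}$ covers the feasibility region $\mu_R+2\mu_T\ge 1$ of Remark~\ref{remark feasible region}. Since the corollary only asserts an upper bound on $\tau^\ast$, exhibiting these feasible file-splitting ratios suffices; a matching lower bound (and hence exact optimality) would follow by pairing with Theorem~\ref{thm 2} and is separately covered by Corollary~\ref{coro optimality}.
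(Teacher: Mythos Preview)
Your reduction of $\mathcal{P}_1$ to the $2\times 2$ case and your feasible file-splitting ratios are correct, and they match the paper's. The route you take to optimality, however, differs from the paper and is more laborious than necessary.

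The paper never performs an explicit variable-elimination step. Instead it subtracts the receiver-cache constraint \eqref{eqn:receiver cachent2} from the file-size equality \eqref{eqn:total cachent2} to obtain $2a_{0,1}+a_{0,2}+2a_{1,1}+a_{1,2}\ge 1-\mu_R$, and substitutes this directly into the objective to get $\tau\ge 1-\mu_R+a_{0,1}$, valid for \emph{every} feasible point. In $\mathcal{R}^1_{22}$ this already gives $\tau_U\ge 1-\mu_R$. In $\mathcal{R}^2_{22}$ a second subtraction (transmitter constraint minus the inequality just derived) yields $a_{0,1}\ge 1-\mu_R-\mu_T$, hence $\tau_U\ge 2-2\mu_R-\mu_T$. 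Your feasible points then show these lower bounds are attained. This two-line manipulation replaces your ``simplex-style pivoting'' elimination of $a_{1,1},a_{1,2},a_{2,1},a_{2,2}$ entirely; the paper's argument is shorter and avoids having to verify that each mass transfer preserves all constraints.

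One point of confusion in your plan: you first set up the elimination to solve the LP exactly, but then close by saying ``exhibiting these feasible file-splitting ratios suffices'' because the corollary only asserts an upper bound on $\tau^\ast$. These are two different targets. If you only want $\tau^\ast\le$ (claimed value), feasible ratios alone suffice and the elimination is wasted effort. But the corollary asserts $\tau_U=$ (claimed value), with $\tau_U$ defined as the LP optimum; establishing that equality requires a matching lower bound on the LP, which is what either your elimination or the paper's subtractions provide. Decide which claim you are proving and drop the redundant machinery.
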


\begin{proof}
See Appendix D.
\end{proof}

\begin{figure}[tbp]
\centering
\includegraphics[scale=0.4]{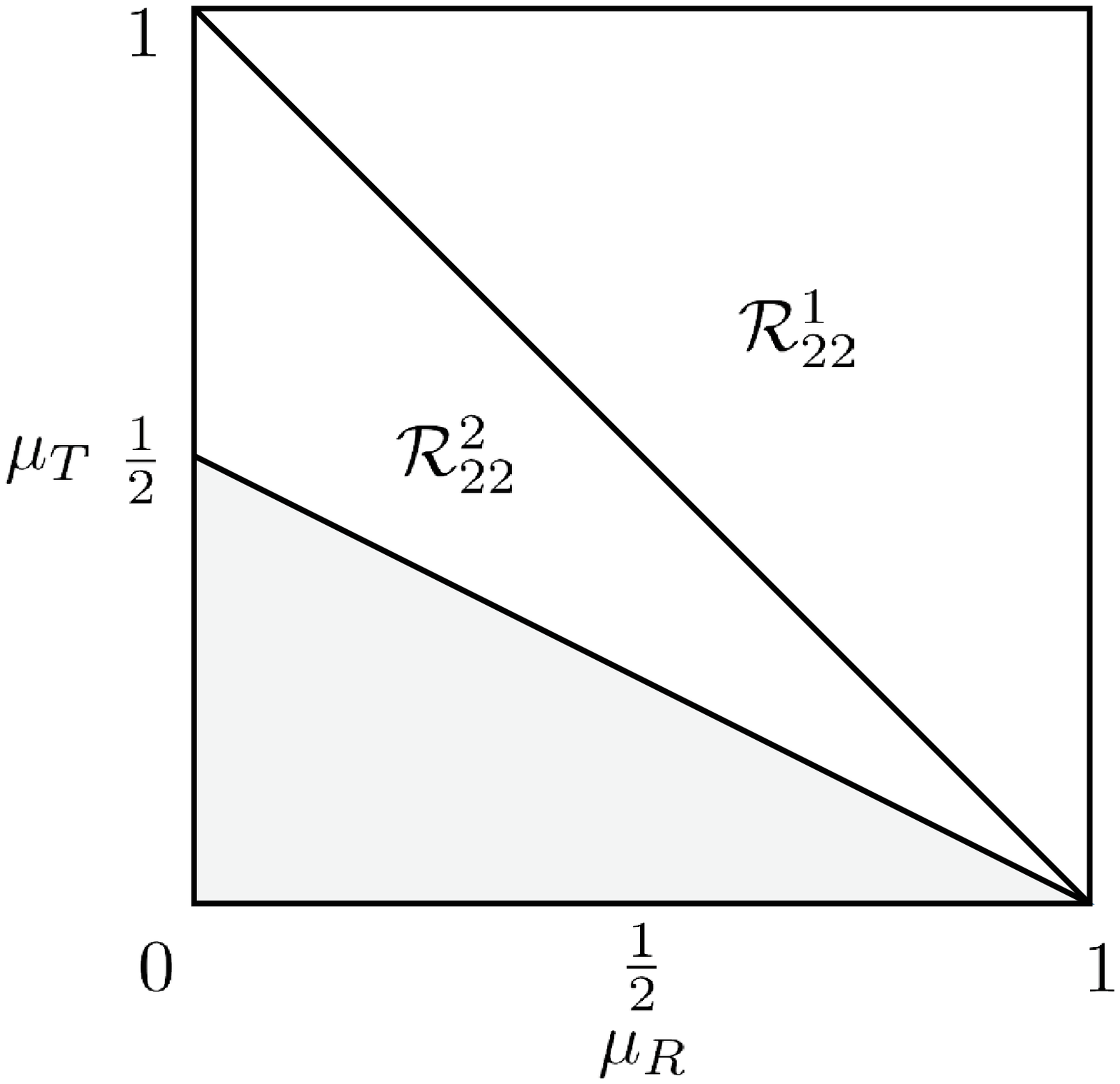}
\caption{Cache size regions in the $2\times 2$ network.}
\label{Fig regionnt2}
\end{figure}

\begin{corollary}\label{coro 3x3}
For the cache-aided $3\times3$ interference network, the minimum NDT is upper bounded by
\begin{align}
\tau^*(\mu_R,\mu_T)\le\tau_U=
\left\{
\begin{array}{ll}
1-\mu_R, & (\mu_R,\mu_T)\in \mathcal{R}^1_{33}\\
\frac{4}{3}-\frac{4}{3}\mu_R-\frac{1}{3}\mu_T, & (\mu_R,\mu_T)\in \mathcal{R}^2_{33}\\
\frac{3}{2}-\frac{5}{3}\mu_R-\frac{1}{2}\mu_T, & (\mu_R,\mu_T)\in \mathcal{R}^3_{33}\\
\frac{13}{6}-\frac{8}{3}\mu_R-\frac{3}{2}\mu_T, & (\mu_R,\mu_T)\in \mathcal{R}^4_{33}\\
\frac{8}{3}-\frac{8}{3}\mu_R-3\mu_T, & (\mu_R,\mu_T)\in \mathcal{R}^5_{33}
\end{array}\label{eqn 3x3 t}
\right.
\end{align}
where $\{\mathcal{R}^i_{33}\}^5_{i=1}$ are given below and sketched in Fig.~\ref{Fig region33}.
\begin{align}
\left\{
\begin{array}{ll}
\mathcal{R}^1_{33}=\{(\mu_R,\mu_T): \mu_R+\mu_T\ge1, \mu_R\le1, \mu_T\le1\}\\
\mathcal{R}^2_{33}=\{(\mu_R,\mu_T): \mu_R+\mu_T<1, 2\mu_R+\mu_T\ge1, \\
\qquad\qquad\qquad\qquad\mu_R+2\mu_T>1\}\\
\mathcal{R}^3_{33}=\{(\mu_R,\mu_T): 3\mu_R+3\mu_T\ge2, 2\mu_R+\mu_T<1, \\
\qquad\qquad\qquad\qquad\mu_R\ge0\}\\
\mathcal{R}^4_{33}=\{(\mu_R,\mu_T): 3\mu_R+3\mu_T<2, \mu_R\ge0, 3\mu_T>1\}\\
\mathcal{R}^5_{33}=\{(\mu_R,\mu_T): 3\mu_T\le1, \mu_R+2\mu_T\le1, \\
\qquad\qquad\qquad\qquad\mu_R+3\mu_T\ge1\}
\end{array}.\notag
\right.
\end{align}
\end{corollary}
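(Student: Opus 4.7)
The plan is to specialize the linear program $\mathcal{P}_1$ of Theorem~\ref{thm 1} to $N_T = N_R = 3$ and solve it region by region, paralleling the linear-substitution approach used for Corollary~\ref{coro 2x2} in Appendix~D. First, I would tabulate the 13 file-splitting variables $\{a_{r,t}: (r,t)\in\mathcal{A}\}$ together with the relevant per-user DoF values supplied by Lemma~\ref{lemma dof}, namely $d_{0,1} = 3/5$, $d_{1,1} = d_{0,2} = 6/7$, and $d_{r,t} = 1$ whenever $r+t\ge 3$. I would likewise tabulate the three constraint-row coefficients appearing in \eqref{eqn tmin1}--\eqref{eqn tmin3} together with the nine objective coefficients $\binom{2}{r}\binom{3}{t}/d_{r,t}$.

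Next, I would conjecture the optimal support of $\{a_{r,t}\}$ on each of the five regions: $\{a_{3,0},a_{0,3}\}$ on $\mathcal{R}^1_{33}$; $\{a_{3,0},a_{0,3},a_{1,1}\}$ on $\mathcal{R}^2_{33}$; $\{a_{0,3},a_{0,2},a_{1,1}\}$ on $\mathcal{R}^3_{33}$; $\{a_{0,2},a_{0,1},a_{1,1}\}$ on $\mathcal{R}^4_{33}$; and $\{a_{0,1},a_{1,1},a_{3,0}\}$ on $\mathcal{R}^5_{33}$. For each guess, treating the three constraints \eqref{eqn tmin1}--\eqref{eqn tmin3} as equalities at the optimum yields a small linear system that I would solve in closed form. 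The nonnegativity conditions on the resulting $\{a_{r,t}\}$ would reproduce exactly the boundaries that define $\mathcal{R}^i_{33}$, and substitution of the closed-form solution into the objective \eqref{eqn tau v} would recover the piecewise-linear expression in \eqref{eqn 3x3 t}.

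To certify global optimality within each region, I would invoke LP duality: assign dual multipliers $(\lambda_1,\lambda_R,\lambda_T)$ matching the coefficients of the conjectured $\tau_U$ so that $\lambda_1 - \lambda_R\mu_R - \lambda_T\mu_T = \tau_U$, and then verify that for every $(r,t)\in\mathcal{A}$ the reduced cost
\begin{align}
c_{r,t} - \lambda_1 (A_1)_{r,t} + \lambda_R (A_R)_{r,t} + \lambda_T (A_T)_{r,t} \;\ge\; 0 \notag
\end{align}
is nonnegative, with equality on the conjectured support; here $c_{r,t}$ is the objective coefficient (zero when $(r,t)=(3,0)$) and $(A_1)_{r,t},(A_R)_{r,t},(A_T)_{r,t}$ are the coefficients of $a_{r,t}$ in \eqref{eqn tmin1}--\eqref{eqn tmin3}. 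By strong duality this certifies $\tau_U$ as the LP optimum. A final direct check confirms that $\bigcup_{i=1}^{5}\mathcal{R}^i_{33}$ covers the full feasibility set $\{\mu_R+3\mu_T\ge 1,\,0\le\mu_R,\mu_T\le 1\}$ and that adjacent expressions in \eqref{eqn 3x3 t} agree on shared boundaries.

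The main obstacle is identifying the correct support pattern in each region: with 13 variables but only three active constraints, many candidate bases are consistent with the KKT conditions, and the right ones encode the interplay of pure receiver caching (via $a_{3,0}$), transmitter cooperation (via $a_{0,t}$ with $t\ge 2$), and coded multicasting (via $a_{1,1}$) as $(\mu_R,\mu_T)$ sweeps the plane. A related subtlety is degeneracy in $\mathcal{R}^2_{33}$, where several 3-tuples, such as $\{a_{3,0},a_{0,3},a_{1,1}\}$ and $\{a_{3,0},a_{0,3},a_{2,1}\}$, attain the same optimum, consistent with the non-uniqueness of optimal file splitting pointed out in Remark~\ref{remark integer point}.
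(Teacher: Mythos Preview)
Your proposal is correct and follows essentially the same route as the paper: specialize the LP of Theorem~\ref{thm 1} to $N_T=N_R=3$, identify the optimal support of $\{a_{r,t}\}$ in each region (your five conjectured supports coincide exactly with the solutions listed in Appendix~E), solve the resulting linear system, and recover the region boundaries from nonnegativity. The only stylistic difference is that you certify optimality via explicit LP duality, whereas the paper (as in Appendix~D for the $2\times2$ case) does the equivalent by chaining linear substitutions and inequalities; note also that in $\mathcal{R}^1_{33}$ only two variables are active so the transmitter-cache constraint \eqref{eqn tmin3} is slack rather than tight.
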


\begin{proof}
The proof is similar to that of Corollary \ref{coro 2x2} and hence ignored. We only present the optimal solution of file splitting ratios in Appendix E.
\end{proof}

\begin{figure}[tbp]
\centering
\includegraphics[scale=0.4]{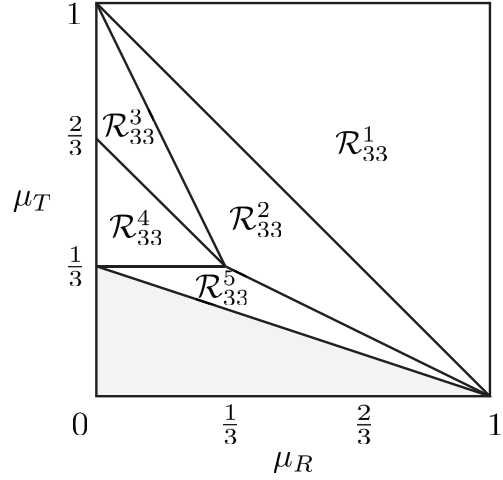}
\caption{Cache size regions in the $3\times 3$ network.}\label{Fig region33}
\end{figure}

It can be seen from Corollary \ref{coro 2x2} and Corollary \ref{coro 3x3} that the achievable NDT is a piece-wise linearly decreasing function of $\mu_R$ and $\mu_T$. The number of piece-wise regions depends on $N_R$ and $N_T$.

\section{Discussion on the Achievable Scheme}\label{section discussion}
In this section, we provide some discussions on our proposed caching and delivery scheme, which will offer some insights into the impact of caching in the considered interference networks.

\subsection{On Caching at Integer Points}
Consider an arbitrary integer point $(\mu_R=m/N_R,\mu_T=n/N_T)$ with $m$ and $n$ being any integers. We first evaluate the achievable NDT by adopting the equal file splitting strategy.\footnote{See Remark \ref{remark integer point} in Section \ref{section cache placement} for the definition of integer points.} In the equal file splitting strategy, each file is split into $\binom{N_R}{m}\binom{N_T}{n}$ equal-sized subfiles, each cached in $m$ receivers and $n$ transmitters. Then, we have $a_{m,n}=\frac{1}{\binom{N_R}{m}\binom{N_T}{n}}$ and all the rest $a_{r,t}=0$. The proposed delivery scheme introduced before then transforms the network topology to the $\binom{N_T}{n}\times \binom{N_R}{m+1}$ cooperative X-multicast channel. By \eqref{eqn tau v}, the achievable NDT can be expressed in a unified form as
\begin{align}
\tau_{m,n}=\frac{1-\mu_R}{d_{m,n}},\label{eqn equal tau}
\end{align}
where $d_{m,n}$ is the per-user DoF of the formed cooperative X-multicast channel given in \eqref{lemma d}. By comparing \eqref{eqn equal tau} and \eqref{eqn tmin}, the achievable NDT at any $(\mu_R,\mu_T)$ in Theorem \ref{thm 1} can be regarded as the  convex envelope of the achievable NDTs $\{\tau_{m,n}\}$ at all integer points with combination coefficients $\{\beta_{m,n}\triangleq\binom{N_T}{n}\binom{N_R}{m}a_{m,n}:(m,n)\in \mathcal{A}\}$.

Based on Remark \ref{rmk sum dof}, we can rewrite \eqref{eqn equal tau} as
\begin{align}
\tau_{m,n}=\frac{N_R(1-\mu_R)}{(m+1)d_{\textrm{sum}}}.\label{eqn tausumdof}
\end{align}
The expression of NDT in \eqref{eqn tausumdof} reveals the gains of caching more explicitly. The term $(1-\mu_R)$ denotes the receiver local caching gain, since each receiver has already cached a fraction $\mu_R$ of its desired file. The term $(m+1)$ denotes the coded multicasting gain, since by our caching and delivery scheme, each coded message is needed by $m+1$ different receivers.  The DoF term, $d_{\textrm{sum}}$, reflects the cache-induced transmitter cooperation gain via interference neutralization and interference alignment.

At an arbitrary cache size tuple $(\mu_R,\mu_T)$, these gains can also be exploited and reflected by the file splitting ratios $\{a_{r,t}\}$ of the corresponding cache states. When the optimal solution of the LP problem satisfies $a^*_{r,t}>0$ for some $(r,t)$, it means that there exist subfiles cached in $r$ receivers and $t$ transmitters in the cache placement phase. As shown in our proposed delivery scheme in Section \ref{section delivery}-A, local caching gain is exploited when $a^*_{r,t}>0$ for some $r>0$, coded multicasting gain is exploited when $a^*_{r,t}>0$ for some $0<r<N_R$, and transmitter cooperation gain is exploited when $a^*_{r,t}>0$ for some $t\ge1$. For example, we can only exploit local caching gain and transmitter cooperation gain in cache size region $\mathcal{R}^1_{33}$ in the $3\times3$ network, since our solution satisfies $a^*_{0,3},a^*_{3,0}>0$ and all the rest $a^*_{r,t}=0$ as shown in Appendix E.

\subsection{On the Optimal File Splitting Ratios}
It is important to realize from the previous two sections that the optimal file slitting ratios for NDT minimization at given cache size $(\mu_R, \mu_T)$ are not unique. Mathematically, this is quite expected since a linear programming problem like \eqref{eqn tmin} in general does not have unique solutions. However, the physical meaning of each solution can vary dramatically. Let us take the \textit{integer point} $(\mu_R=\frac{1}{3},\mu_T=\frac{2}{3})$ in the $3\times3$ network for example. According to Corollary \ref{coro 3x3}, there are two optimal solutions for the file splitting ratios. One is $a^*_{0,3} =\frac{2}{3}$, $a^*_{3,0}=\frac{1}{3}$ with the rest $a^*_{r,t}=0$. This solution means that each file is split into two subfiles, one has fractional size $a^*_{0,3} =\frac{2}{3}$ and is cached simultaneously at all three transmitters but none of the receivers, the other subfile has fractional size $a^*_{3,0}=\frac{1}{3}$ and is cached simultaneously at all three receivers but none of the transmitters. From the proposed delivery scheme in Section \ref{section delivery}, it is seen that this solution enjoys both receiver local caching gain and transmitter cooperation gain. In particular, the transmitter cooperation turns the interference network into a MISO broadcast channel with per-user DoF of 1.

Another feasible solution is $a^*_{1,2}=\frac{1}{9}$ with all the rest $a^*_{r,t}=0$. In this solution, each file is split into 9 subfiles, each with the same fractional size $a^*_{1,2}=\frac{1}{9}$ and cached at one receiver and two transmitters. From the proposed delivery scheme in Section \ref{section delivery}, this solution enjoys the coded multicasting gain and transmitter cooperation gain by turning the network topology into a partially cooperative X-multicast channel. Together with Corollary \ref{coro optimality} in Section \ref{section main results}, both file splitting schemes are globally optimal in terms of achieving the minimum NDT $\tau^* =\frac{2}{3}$.

It is interesting to note that the second file splitting scheme is the same as the one proposed in \cite{bothcache}. However, the delivery strategy is different. In \cite{bothcache}, the network topology is turned into a partially cooperative interference channel with side information, and interference neutralization is used to achieve the sum DoF of 3. Given that each receiver already caches 3 out of the 9 subfiles of its desired file and only needs the other 6 subfiles in the delivery phase, we can compute the total delivery time as $T = \frac{3 \times 6 \times a_{1,2} F }{ 3\times \log P}$. As such, the corresponding NDT is $\tau =\frac{2}{3}$, which is the same as ours.

In general, we find that at integer points $(\mu_R=\frac{m}{N_R},\mu_T=\frac{n}{N_T})$, with $m+n\ge N_R$, the optimal file splitting ratios and the delivery scheme are not unique. In specific, when $N_T\ge N_R$, besides the equal file splitting strategy, the optimal ratios can also be $a^*_{N_R,0}=\mu_R,a^*_{0,N_R}=\frac{1-\mu_R}{\binom{N_T}{N_R}}$. This solution means that each file is split into $1+\binom{N_T}{N_R}$ subfiles, one has fractional size $a^*_{N_R,0}=\mu_R$ and is cached simultaneously at all receivers but none of the transmitters, and each of the other subfiles has fractional size $a^*_{0,N_R}=\frac{1-\mu_R}{\binom{N_T}{N_R}}$ and is cached simultaneously at $N_R$ out of $N_T$ different transmitters but none of the receivers. According to the delivery schemes proposed in Section \ref{section delivery}, the network topology becomes the $\binom{N_T}{N_R}\times \binom{N_R}{1}$ cooperative X-multicast channel whose per-user DoF is 1. Thus, we can obtain the NDT as
\begin{align}
\tau=\frac{1-\mu_R}{d}=1-\mu_R.\label{eqn bc tau}
\end{align}
Note that bit-wise XOR is not used in the delivery phase, thus this scheme does not exploit coded multicasting gain. Comparing \eqref{eqn bc tau} to \eqref{eqn equal tau} achieved by equal file splitting strategy, it can be seen that the transmitter cooperation gain obtained in this scheme has the same contribution as the combined coded-multicasting and transmitter cooperation gain in the equal file splitting strategy.

When $N_T<N_R$, bit-wise XOR is applied in the delivery phase to achieve the optimal NDT, because the limited number of transmitters becomes a bottleneck. In this case, the optimal solution can be $a^*_{N_R,0}=1-\frac{N_R}{N_T}(1-\mu_R),a^*_{N_R-N_T,N_T}=\frac{\frac{N_R}{N_T}(1-\mu_R)}{\binom{N_R}{N_R-N_T}}$. This solution means that each file is split into $1+\binom{N_R}{N_R-N_T}$ subfiles, one has fractional size $a^*_{N_R,0}=1-\frac{N_R}{N_T}(1-\mu_R)$ and is cached simultaneously at all receivers but none of the transmitters, and each of the other subfiles has fractional size $a^*_{N_R-N_T,N_T}=\frac{\frac{N_R}{N_T}(1-\mu_R)}{\binom{N_R}{N_R-N_T}}$ and is cached simultaneously at all transmitters and $N_R-N_T$ out of $N_R$ different receivers. In the delivery phase, only the subfiles with fractional size $a^*_{N_R-N_T,N_T}$ are transmitted, and the local caching gain, coded-multicasting gain and transmitter cooperation gain are all exploited.

The multiple choices of file splitting ratios offer more freedoms to choose an appropriate caching and delivery scheme according to different limitations in practical systems, such as transmitter or receiver computation complexity or file splitting constraints.

\subsection{On the Differences with \textrm{\cite{bothcache} and \cite{niesen}}}
Although the similar caching problem is considered in \cite{bothcache}, their performance metric, caching scheme, and conclusion are different from ours. First, we adopt the NDT as the performance metric while \cite{bothcache} used the standard DoF. As we noted in Remark \ref{remark fdt calculation} in Section \ref{section model}, NDT is particularly suitable for the considered network because it reflects not only the load reduction due to receiver cache but also the DoF enhancement due to transmitter cache. In specific, we can express the NDT as $\tau = \frac{R}{d}$ as in \eqref{eqn NDTtau}, where $R$ is the per-user traffic load normalized by each file size. To illustrate this in detail, we consider the integer points $(\mu_R=\frac{1}{3},\mu_T=\frac{2}{3})$ and $(\mu_R=\frac{2}{3},\mu_T=\frac{1}{3})$ in the $3\times3$ interference network. In \cite{bothcache}, they have per-user DoF of 1 at both points. However, the actual delivery time at these two points is different. At point $(\mu_R=\frac{1}{3},\mu_T=\frac{2}{3})$, each file is split into 9 equal-sized subfiles, each cached at one receiver and two transmitters. This corresponds to $a_{1,2}=\frac{1}{9}$ and the rest $a_{r,t}$'s are all 0. Thus, each receiver caches 3 out of 9 subfiles of its desired file and only needs the other 6 subfiles in the delivery phase. The corresponding NDT is $\tau=\frac{6\times a_{1,2}}{1}=\frac{2}{3}$. On the other hand, at point $(\mu_R=\frac{2}{3},\mu_T=\frac{1}{3})$, each file is also split into 9 equal-sized subfiles, but each cached at two receivers and one transmitter. This corresponds to $a_{2,1}=\frac{1}{9}$ and the rest $a_{r,t}$'s are all 0. Thus, each receiver caches 6 out of 9 subfiles of its desired file and only needs the other 3 subfiles in the delivery phase, yielding the corresponding NDT $\tau=\frac{3\times a_{2,1}}{1}=\frac{1}{3}$. Clearly, the DoF alone is unable to fully capture the gains of joint transmitter and receiver caching as NDT does.

Second, the file splitting ratios in \cite{bothcache} are pre-determined at each given cache size tuple $(\mu_R, \mu_T)$ as noted in Remark \ref{remark integer point} of Section \ref{section cache placement}. However, our file splitting ratios are obtained by solving an LP problem at each given cache size tuple and thus are provably optimal under the given caching strategy.

Another difference between our work and \cite{bothcache} is that the transmission scheme in \cite{bothcache} is restricted to one-shot linear processing, while we allow asymptotic interference alignment and interference neutralization to explore the optimal transmission DoF. Due to this difference, the achievable NDT in \cite{bothcache} is different from ours. In specific, the achievable NDT in \cite{bothcache} at an arbitrary integer point $(\mu_R,\mu_T)$ is given by
\begin{align}
\tau=\frac{N_R(1-\mu_R)}{\min\{N_R,N_R\mu_R+N_T\mu_T\}},\label{eqn discussion ali ndt}
\end{align}
based on \cite[Theorem 1]{bothcache}. The achievable NDT in our scheme is given by \eqref{eqn equal tau}. It can be seen that our achievable NDT in \eqref{eqn equal tau} is smaller (hence better) than \eqref{eqn discussion ali ndt} in \cite{bothcache}. For example, consider the integer point $(\mu_R=0,\mu_T=1/3)$ in the $3\times 3$ interference network. According to Corollary \ref{coro optimality}, the achievable NDT in our scheme is optimal and given by $\tau^*=5/3$, which is better than the NDT $\tau=3$ achieved in \cite{bothcache}.

The caching problem with all transmitters and receivers equipped with cache is also considered in a later work \cite{niesen}. Note that the performance metric, 1/DoF, adopted in \cite{niesen} is equivalent to NDT and thus we are able to compare the result directly. In \cite{niesen}, each subfile is only cached at one distinct transmitter during the cache placement phase for all $\mu_T\ge 1/N_T$. Hence it cannot exploit the cache-induced transmitter cooperation through interference neutralization as our scheme when $\mu_T> 1/N_T$. As a result, the achievable NDT in \cite{niesen} is larger (hence worse) than ours at cache size region $\mu_T> 1/N_T$. For example, consider integer point $(\mu_R=1/3,\mu_T=2/3)$ in the $3\times 3$ interference network. According to Corollary \ref{coro optimality}, the achievable NDT in our scheme is optimal and given by $\tau^*=2/3$, which is better than the NDT $\tau=7/9$ achieved in \cite{niesen}.

Last but not least, \cite{bothcache,niesen} are limited to the cache size region $\mu_T\ge\frac{1}{N_T}$ which is only a subset of the feasible cache size region \eqref{eqn boundary} considered in this work.

\section{Lower Bound of the Minimum NDT}\label{section converse}
In this section, we present the proof of the lower bound of the minimum NDT in Theorem \ref{thm 2}. The method of the proof is an extension of the approach in \cite{lowerbound} by taking receiver caches into account.

\begin{figure*}[tbp]
\begin{centering}
\includegraphics[scale=0.27]{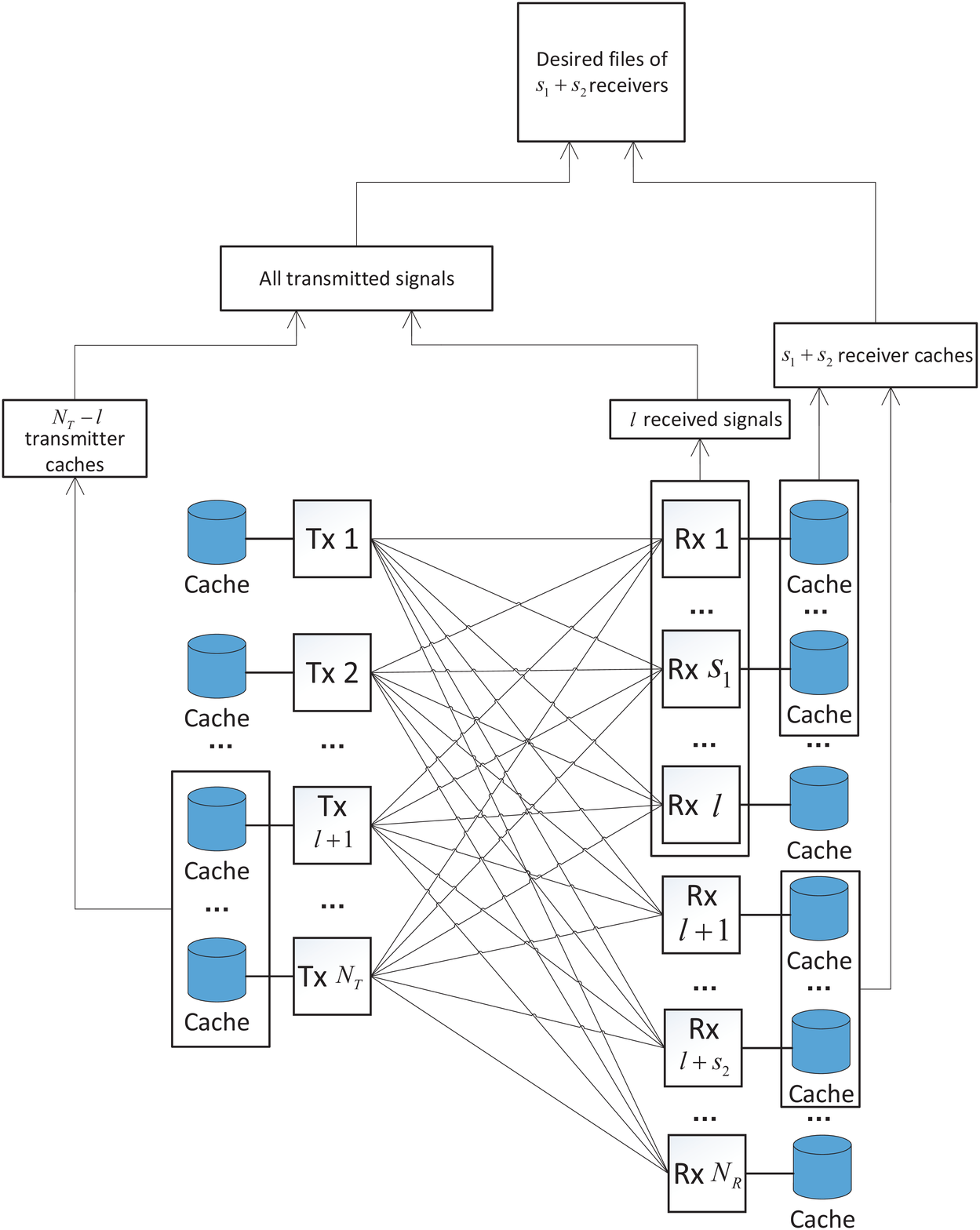}
\caption{Illustration of the proof of the converse.}\label{Fig converse2}
\end{centering}
\end{figure*}

The proof of the lower bound is based on the following statement. As illustrated in Fig.~\ref{Fig converse2}, we divide all the receivers into two groups, where the first group contains $l$ arbitrary receivers and the second group contains the remaining $N_R-l$ receivers. Then, we select $s_1$ ($s_1 \le l$) receivers from the first group as $\mathcal{S}_1$, and $s_2$ ($s_2 \le N_R-l$) receivers from the second group as $\mathcal{S}_2$. \emph{By converse assumption, given the local caches from any $N_T-l$ transmitters, the received signals from the $l$ receivers in the first group, and the local caches from the receivers in $\mathcal{S}_1$ and $\mathcal{S}_2$, then the desired files requested by the receivers in $\mathcal{S}_1$ and $\mathcal{S}_2$ are decodable in the high SNR regime.} More specifically, given the transmitter caches from the $N_T-l$ transmitters, the transmitted signals of these $N_T-l$ transmitters can be constructed. Then, given the $N_T-l$ transmitted signals and the $l$ received signals in the first group, the remaining $l$ unknown transmitted signals can be obtained almost surely \cite[Lemma 3]{lowerbound}, neglecting noise in the high SNR regime. With all the transmitted signals, the received signals of all the receivers can be obtained. Together with the local caches from $\mathcal{S}_1$ and $\mathcal{S}_2$, their desired files are decodable.

To begin the proof, let $\mathbf{d}=(d_1,d_2,\ldots,d_{N_R})$ denote a distinct user demand vector. Using \cite[Lemma 2]{lowerbound} and from the statement above, we obtain 
\begin{align}
&H(W_{d_1\sim d_{s_1}},W_{d_{l+1}\sim d_{l+s_2}}|Y_{1\sim l},U_{l+1\sim N_T},V_{1\sim s_1}, V_{l+1\sim l+s_2},W_{d_{s_1+1}\sim d_{l}},W_{d_{l+s_2+1}\sim d_{N_R}},W_{1\sim L}\setminus W_\mathbf{d})\notag\\
=&F\varepsilon_F+T\varepsilon_P\log P,\label{eqn converse begin}
\end{align}
where $W_\mathbf{d}\triangleq\{W_{d_1},W_{d_2},\ldots,W_{d_{N_R}}\}$, $\varepsilon_P$ is a function of power $P$, and satisfies $\lim_{P\to\infty}\varepsilon_P=0$. Then, the entropy of desired files $\{W_{d_1},\ldots,W_{d_{s_1}},W_{d_{l+1}},\ldots,W_{d_{l+s_2}}\}$ can be expressed as 
\begin{align}
&(s_1+s_2)F\notag\\
=&H(W_{d_1\sim d_{s_1}},W_{d_{l+1}\sim d_{l+s_2}}|W_{d_{s_1+1}\sim d_{l}},W_{d_{l+s_2+1}\sim d_{N_R}},W_{1\sim L}\setminus W_\mathbf{d})\notag\\
=&H(W_{d_1\sim d_{s_1}},W_{d_{l+1}\sim d_{l+s_2}}|Y_{1\sim l},U_{l+1\sim N_T},V_{1\sim s_1},V_{l+1\sim l+s_2},W_{d_{s_1+1}\sim d_{l}},W_{d_{l+s_2+1}\sim d_{N_R}},W_{1\sim L}\setminus W_\mathbf{d})\notag\\
&+I(W_{d_1\sim d_{s_1}},W_{d_{l+1}\sim d_{l+s_2}};Y_{1\sim l},U_{l+1\sim N_T},V_{1\sim s_1},V_{l+1\sim l+s_2}|W_{d_{s_1+1}\sim d_{l}},W_{d_{l+s_2+1}\sim d_{N_R}},W_{1\sim L}\setminus W_\mathbf{d})\notag\\
=&I(W_{d_1\sim d_{s_1}},W_{d_{l+1}\sim d_{l+s_2}};Y_{1\sim l},U_{l+1\sim N_T},V_{1\sim s_1},V_{l+1\sim l+s_2}|W_{d_{s_1+1}\sim d_{l}},W_{d_{l+s_2+1}\sim d_{N_R}},W_{1\sim L}\setminus W_\mathbf{d})\notag\\
&+F\varepsilon_F+T\varepsilon_P\log P\label{eqn converse 1}.
\end{align}
Note that there are $\binom{L}{N_R}N_R!$ distinct demand vectors in total. Then, we have 
\begin{align}
&(s_1+s_2)F\notag\\
=&\frac{1}{\binom{L}{N_R}N_R!}\sum_\mathbf{d}H(W_{d_1\sim d_{s_1}},W_{d_{l+1}\sim d_{l+s_2}}|W_{d_{s_1+1}\sim d_{l}},W_{d_{l+s_2+1}\sim d_{N_R}},W_{1\sim L}\setminus W_\mathbf{d})\notag\\
=&\frac{1}{\binom{L}{N_R}N_R!}\sum_\mathbf{d}I(W_{d_1\sim d_{s_1}},W_{d_{l+1}\sim d_{l+s_2}};Y_{1\sim l},U_{l+1\sim N_T},V_{1\sim s_1},V_{l+1\sim l+s_2}|W_{d_{s_1+1}\sim d_{l}}W_{d_{l+s_2+1}\sim d_{N_R}},\notag\\
&\qquad\qquad\qquad\quad W_{1\sim L}\setminus W_\mathbf{d})+F\varepsilon_F+T\varepsilon_P\log P\label{eqn converse 2}.
\end{align}

In what follows, we prove the bounds \eqref{eqn converse} and \eqref{eqn taul2}, respectively according to whether intra-file coding is allowed or not in the caching functions.

\subsection{Lower Bound \eqref{eqn converse} with Arbitrary Intra-file Coding}
The mutual information in \eqref{eqn converse 2} is upper bounded by 
\begin{subequations}\label{eqn converse 3}
\begin{align}
&\frac{1}{\binom{L}{N_R}N_R!}\sum_\mathbf{d}I(W_{d_1\sim d_{s_1}},W_{d_{l+1}\sim d_{l+s_2}};Y_{1\sim l},U_{l+1\sim N_T},V_{1\sim s_1},V_{l+1\sim l+s_2}|W_{d_{s_1+1}\sim d_{l}},\notag\\
&\qquad\qquad\qquad\quad W_{d_{l+s_2+1}\sim d_{N_R}},W_{1\sim L}\setminus W_\mathbf{d})\notag\\
=&\frac{1}{\binom{L}{N_R}N_R!}\sum_\mathbf{d}I(W_{d_1\sim d_{s_1}},W_{d_{l+1}\sim d_{l+s_2}};Y_{1\sim l}|W_{d_{s_1+1}\sim d_{l}},W_{d_{l+s_2+1}\sim d_{N_R}},W_{1\sim L}\setminus W_\mathbf{d})\notag\\
&+\frac{1}{\binom{L}{N_R}N_R!}\sum_\mathbf{d}I(W_{d_1\sim d_{s_1}},W_{d_{l+1}\sim d_{l+s_2}};U_{l+1\sim N_T},V_{1\sim s_1},V_{l+1\sim l+s_2}|Y_{1\sim l},W_{d_{s_1+1}\sim d_{l}},\notag\\
&\qquad\qquad\qquad\qquad W_{d_{l+s_2+1}\sim d_{N_R}},W_{1\sim L}\setminus W_\mathbf{d})\label{eqn converse 31}\\
=&\frac{1}{\binom{L}{N_R}N_R!}\sum_\mathbf{d}h(Y_{1\sim l}|W_{d_{s_1+1}\sim d_{l}},W_{d_{l+s_2+1}\sim d_{N_R}},W_{1\sim L}\setminus W_\mathbf{d})\notag\\
&+\frac{1}{\binom{L}{N_R}N_R!}\sum_\mathbf{d}H(U_{l+1\sim N_T},V_{1\sim s_1},V_{l+1\sim l+s_2}|Y_{1\sim l},W_{d_{s_1+1}\sim d_{l}},W_{d_{l+s_2+1}\sim d_{N_R}},W_{1\sim L}\setminus W_\mathbf{d})\notag\\
&-\frac{1}{\binom{L}{N_R}N_R!}\sum_\mathbf{d}h(Y_{1\sim l}|W_{1\sim L})-\frac{1}{\binom{L}{N_R}N_R!}\sum_\mathbf{d}H(U_{l+1\sim N_T},V_{1\sim s_1},V_{l+1\sim l+s_2}|Y_{1\sim l},W_{1\sim L})\label{eqn converse 32}\\
\le &\frac{1}{\binom{L}{N_R}N_R!}\sum_\mathbf{d}h(Y_{1\sim l}|W_{d_{s_1+1}\sim d_{l}},W_{d_{l+s_2+1}\sim d_{N_R}},W_{1\sim L}\setminus W_\mathbf{d})\notag\\
&+\frac{1}{\binom{L}{N_R}N_R!}\sum_\mathbf{d}H(U_{l+1\sim N_T},V_{1\sim s_1},V_{l+1\sim l+s_2}|Y_{1\sim l},W_{d_{s_1+1}\sim d_{l}},W_{d_{l+s_2+1}\sim d_{N_R}},W_{1\sim L}\setminus W_\mathbf{d})\label{eqn converse 33}\\
\le &\frac{1}{\binom{L}{N_R}N_R!}\sum_\mathbf{d}lT\log(2\pi e(c\cdot P+1))\notag\\
&+\frac{1}{\binom{L}{N_R}N_R!}\sum_\mathbf{d}H(U_{l+1\sim N_T}|V_{1\sim s_1},V_{l+1\sim l+s_2},Y_{1\sim l},W_{d_{s_1+1}\sim d_{l}},W_{d_{l+s_2+1}\sim d_{N_R}},W_{1\sim L}\setminus W_\mathbf{d})\notag\\
&+\frac{1}{\binom{L}{N_R}N_R!}\sum_\mathbf{d}H(V_{1\sim s_1},V_{l+1\sim l+s_2}|Y_{1\sim l},W_{d_{s_1+1}\sim d_{l}},W_{d_{l+s_2+1}\sim d_{N_R}},W_{1\sim L}\setminus W_\mathbf{d})\label{eqn converse 34}\\
\le &lT\log(2\pi e(c\cdot P+1))+\frac{1}{\binom{L}{N_R}N_R!}\sum_\mathbf{d}H(U_{l+1\sim N_T}|W_{d_1\sim d_l},W_{d_{l+s_2+1}\sim d_{N_R}},W_{1\sim L}\setminus W_\mathbf{d})\notag\\
&+\frac{1}{\binom{L}{N_R}N_R!}\sum_\mathbf{d}H(V_{1\sim s_1},V_{l+1\sim l+s_2}|Y_{1\sim l},W_{d_{s_1+1}\sim d_{l}},W_{d_{l+s_2+1}\sim d_{N_R}},W_{1\sim L}\setminus W_\mathbf{d})+F\varepsilon_F \label{eqn converse 35}
\end{align}
\end{subequations}
Here, \eqref{eqn converse 31} and \eqref{eqn converse 32} follow from the definition of mutual information; \eqref{eqn converse 34} follows from \cite[Lemma 1]{lowerbound}; \eqref{eqn converse 35} follows from the Fano's inequality and the fact that conditioning reduces entropy.

The second term in \eqref{eqn converse 35} is upper bounded by 
\begin{subequations}\label{eqn 1}
\begin{align}
&\frac{1}{\binom{L}{N_R}N_R!}\sum_{\mathbf{d}}H(U_{l+1\sim N_T}|W_{d_1\sim d_l},W_{d_{l+s_2+1\sim d_{N_R}}},W_{1\sim L}\setminus W_{\mathbf{d}})\label{eqn 11}\\
\le &\frac{1}{\binom{L}{N_R}N_R!}\sum_{\mathbf{d}}\sum_{l+1\le p\le N_T}H(U_{p}|W_{d_1\sim d_l},W_{d_{l+s_2+1\sim d_{N_R}}},W_{1\sim L}\setminus W_{\mathbf{d}})\label{eqn 12}\\
= &\frac{1}{\binom{L}{N_R}N_R!}\sum_{l+1\le p\le N_T}\sum_{\mathbf{d}}\sum_{i\in\{d_{l+1},\ldots,d_{l+s_2}\}}H(U_{p,i})\label{eqn 13}\\
= &\frac{1}{\binom{L}{N_R}N_R!}\sum_{l+1\le p\le N_T}\sum_{1\le i\le L}s_2\binom{L-1}{N_R-1}(N_R-1)!H(U_{p,i})\label{eqn 14}\\
\le &\frac{1}{\binom{L}{N_R}N_R!}(N_T-l)s_2\binom{L-1}{N_R-1}(N_R-1)!M_TF\label{eqn 15}\\
=&(N_T-l)s_2\mu_TF\label{eqn 16}.
\end{align}
\end{subequations}
Here, \eqref{eqn 13} comes from the fact that only files $\{W_i: i\in\{d_{l+1},\ldots,d_{l+s_2}\}\}$ are unknown; \eqref{eqn 15} comes from the fact that each transmitter has $M_TF$ bits of caching storage. Note that \eqref{eqn 1} can be viewed equivalently as the fact that each transmitter can cache $\mu_TF$ bits of each file on average. This argument is also used for the upper bound of the third term in \eqref{eqn converse 35} below.

Further, the third term in \eqref{eqn converse 35} is upper bounded by 
\begin{subequations}\label{eqn converse 4}
\begin{align}
&\frac{1}{\binom{L}{N_R}N_R!}\sum_{\mathbf{d}}H(V_{1\sim s_1},V_{l+1\sim l+s_2}|Y_{1\sim l},W_{d_{s_1+1}\sim d_{l}},W_{d_{l+s_2+1}\sim d_{N_R}},W_{1\sim L}\setminus W_\mathbf{d})\notag\\
=&\frac{1}{\binom{L}{N_R}N_R!}\sum_{\mathbf{d}}H(V_1|Y_{1\sim l},W_{d_{s_1+1}\sim d_{l}},W_{d_{l+s_2+1}\sim d_{N_R}},W_{1\sim L}\setminus W_\mathbf{d})\notag\\
&+\frac{1}{\binom{L}{N_R}N_R!}\sum_{\mathbf{d}}H(V_{2\sim s_1},V_{l+1\sim l+s_2}|Y_{1\sim l},V_1, W_{d_{s_1+1}\sim d_{l}},W_{d_{l+s_2+1}\sim d_{N_R}},W_{1\sim L}\setminus W_\mathbf{d})\label{eqn converse 41}\\
\le &(s_1+s_2)\mu_RF\notag\\
&+\frac{1}{\binom{L}{N_R}N_R!}\sum_{\mathbf{d}}\sum_{q=2}^{s_1}H(V_{q}|Y_{1\sim l},V_{1\sim q-1},W_{d_1\sim d_{q-1}},W_{d_{s_1+1}\sim d_{l}},W_{d_{l+s_2+1}\sim d_{N_R}},W_{1\sim L}\setminus W_\mathbf{d})\notag\\
&+\frac{1}{\binom{L}{N_R}N_R!}\sum_{\mathbf{d}}H(V_{l+1\sim l+s_2}|Y_{1\sim l},V_{1\sim s_1},W_{d_{1}\sim d_{l}},W_{d_{l+s_2+1}\sim d_{N_R}},W_{1\sim L}\setminus W_\mathbf{d})+F\varepsilon_F\label{eqn converse 42}\\
\le &(s_1+s_2)\mu_RF+\sum_{q=2}^{s_1}(s_1+s_2-q+1)\mu_RF\notag\\
&+\frac{1}{\binom{L}{N_R}N_R!}\sum_{\mathbf{d}}\sum_{q=l+1}^{l+s_2}H(V_q|Y_{1\sim l},V_{1\sim s_1},W_{d_{1}\sim d_{l}},W_{d_{l+s_2+1}\sim d_{N_R}},W_{1\sim L}\setminus W_\mathbf{d})+F\varepsilon_F\label{eqn converse 43}\\
\le &(s_1+s_2)\mu_RF+\sum_{q=2}^{s_1}(s_1+s_2-q+1)\mu_RF+s_2^2\mu_RF+F\varepsilon_F\label{eqn converse 44}\\
=&\left(\frac{2s_2+s_1+1}{2}s_1+s_2^2\right)\mu_RF+F\varepsilon_F.\label{eqn converse 45}
\end{align}
\end{subequations}
Here, \eqref{eqn converse 42} comes from the Fano's inequality and the fact that receiver 1 can cache $\mu_RF$ bits of each file on average, which is similar to \eqref{eqn 1}; \eqref{eqn converse 43} comes from the fact that each receiver $q$ ($2\le q\le s_1$) can cache $\mu_RF$ bits of unknown files $W_{d_q\sim d_{s_1}},W_{d_{l+1}\sim d_{l+s_2}}$ on average, similar to \eqref{eqn 1}; \eqref{eqn converse 44} comes from the fact that each receiver $q$ ($l+1\le q\le l+s_2$) can cache $\mu_RF$ bits of unknown files $W_{l+1\sim l+s_2}$ on average, similar to \eqref{eqn 1}.

Combining \eqref{eqn converse 2}\eqref{eqn converse 35}\eqref{eqn 16}\eqref{eqn converse 45}, we have  
\begin{align}
(s_1+s_2)F\le &lT\log(2\pi e(c\cdot P+1))+(N_T-l)s_2\mu_TF+\left(\frac{2s_2+s_1+1}{2}s_1+s_2^2\right)\mu_RF\notag\\
&+F\varepsilon_F+T\epsilon_P\log P.\label{eqn converse 5}
\end{align}
Dividing $F$ on both sides of \eqref{eqn converse 5} and letting $F\to\infty$ and $P\to\infty$, we have

\begin{align}
\tau&=\lim_{P\to\infty}\lim_{F\to\infty}\frac{T\log P}{F}\notag\\
&\ge\frac{1}{l}\left\{ (s_1+s_2)-(N_T-l)s_2\mu_T-\left(\frac{2s_2+s_1+1}{2}\cdot s_1+s_2^2\right)\mu_R\right\}.\label{eqn converse 6}
\end{align}

By optimizing the bound in \eqref{eqn converse 6} over all possible choices of $s_1=0,1,\ldots,l$, $s_2=0,1,\ldots,N_R-l$, and $l=1,2,\ldots,\min\{N_T,N_R\}$, \eqref{eqn converse} is proved.

\subsection{Lower Bound \eqref{eqn taul2} without Intra-file Coding}
Next, we consider the proof of \eqref{eqn taul2} where neither intra-file coding nor inter-file coding are allowed. In this case, files can only be split and cached at receivers and transmitters without any coding. If the cache sizes at transmitters are not enough, receivers must cache some common bits of the files to guarantee the feasibility of the scheme. Thus, \eqref{eqn converse 43} and \eqref{eqn converse 44} can be further tightened. In specific, the third term in \eqref{eqn converse 35} can be upper bounded by
\begin{subequations}\label{eqn converse 8}
\begin{align}
&\frac{1}{\binom{L}{N_R}N_R!}\sum_{\mathbf{d}}H(V_{1\sim s_1},V_{l+1\sim l+s_2}|Y_{1\sim l},W_{d_{s_1+1}\sim d_{l}},W_{d_{l+s_2+1}\sim d_{N_R}},W_{1\sim L}\setminus W_\mathbf{d})\notag\\
=&\frac{1}{\binom{L}{N_R}N_R!}\sum_{\mathbf{d}}H(V_1|Y_{1\sim l},W_{d_{s_1+1}\sim d_{l}},W_{d_{l+s_2+1}\sim d_{N_R}},W_{1\sim L}\setminus W_\mathbf{d})\notag\\
&+\frac{1}{\binom{L}{N_R}N_R!}\sum_{\mathbf{d}}H(V_{2\sim s_1},V_{l+1\sim l+s_2}|Y_{1\sim l},V_1,W_{d_{s_1+1}\sim d_{l}},W_{d_{l+s_2+1}\sim d_{N_R}},W_{1\sim L}\setminus W_\mathbf{d})\label{eqn converse 81}\\
\le &(s_1+s_2)\mu_RF\notag\\
&+\frac{1}{\binom{L}{N_R}N_R!}\sum_{\mathbf{d}}\sum_{q=2}^{s_1}H(V_{q}|Y_{1\sim l},V_{1\sim q-1},W_{d_1\sim d_{q-1}},W_{d_{s_1+1}\sim d_{l}},W_{d_{l+s_2+1}\sim d_{N_R}},W_{1\sim L}\setminus W_\mathbf{d})\notag\\
&+\frac{1}{\binom{L}{N_R}N_R!}\sum_{\mathbf{d}}H(V_{l+1\sim l+s_2}|Y_{1\sim l},V_{1\sim s_1},W_{d_1\sim d_l},W_{d_{l+s_2+1}\sim d_{N_R}},W_{1\sim L}\setminus W_\mathbf{d})+F\varepsilon_F\label{eqn converse 82}\\
\le &(s_1+s_2)\mu_RF+\sum_{q=2}^{s_1}(s_1+s_2-q+1)\left(\mu_R-(1-N_T\mu_T)^+\right)F\notag\\
&+\frac{1}{\binom{L}{N_R}N_R!}\sum_{\mathbf{d}}\sum_{q=l+1}^{l+s_2}H(V_q|Y_{1\sim l},V_{1\sim s_1},W_{d_1\sim d_l},W_{d_{l+s_2+1}\sim d_{N_R}},W_{1\sim L}\setminus W_\mathbf{d})+F\varepsilon_F\label{eqn converse 84}\\
\le&\sum_{q=2}^{s_1}(s_1+s_2-q+1)\left(\mu_R-(1-N_T\mu_T)^+\right)F+s_2^2\left(\mu_R-(1-N_T\mu_T)^+\right)F\notag\\
&+(s_1+s_2)\mu_RF+F\varepsilon_F\label{eqn converse 85}\\
=&\left(\frac{2s_2+s_1+1}{2}s_1+s_2^2\right)\mu_RF-\left(\frac{2s_2+s_1}{2}(s_1-1)+s_2^2\right)(1-N_T\mu_T)^+F+F\varepsilon_F.\label{eqn converse 86}
\end{align}
\end{subequations}
Here, \eqref{eqn converse 82} follows from the Fano's inequality and the fact that given receiver 1 has $M_RF$ cache storage, it can cache $\mu_RF$ bits of each file on average, which is similar to \eqref{eqn 1}; \eqref{eqn converse 84} follows from the fact that at least $(1-N_T\mu_T)^+F$ cached bits of each file on average are common for all receivers to guarantee the feasibility of the scheme if $N_TM_TF< LF$. Thus, each receiver $q$ ($2\le q\le s_1$) can cache $(\mu_R-(1-N_T\mu_T)^+)F$ bits of unknown files $W_{d_q\sim d_{s_1}},W_{d_{l+1}\sim d_{l+s_2}}$ on average, similar to \eqref{eqn 1}; \eqref{eqn converse 85} follows from the fact that each receiver $q$ ($l+1\le q\le l+s_2$) can cache $(\mu_R-(1-N_T\mu_T)^+)F$ bits of unknown files $W_{d_{l+1}\sim d_{l+s_2}}$ on average, similar to \eqref{eqn 1}.

Combining \eqref{eqn converse 2}\eqref{eqn converse 35}\eqref{eqn 16}\eqref{eqn converse 86}, we have 
\begin{align}
(s_1+s_2)F\le &lT\log(2\pi e(c\cdot P+1))+(N_T-l)s_2\mu_TF+\left(\frac{2s_2+s_1+1}{2}s_1+s_2^2\right)\mu_RF\notag\\
&-\left(\frac{2s_2+s_1}{2}(s_1-1)+s_2^2\right)(1-N_T\mu_T)^+F+F\varepsilon_F+T\epsilon_P\log P.\label{eqn converse 9}
\end{align}
Dividing $F$ on both sides of \eqref{eqn converse 9} and letting $F\to\infty$ and $P\to\infty$, we have
\begin{align}
\tau&=\lim_{P\to\infty}\lim_{F\to\infty}\frac{T\log P}{F}\notag\\
&\ge\frac{1}{l}\left\{ (s_1+s_2)-(N_T-l)s_2\mu_T-\left(\frac{2s_2+s_1+1}{2}\cdot s_1+s_2^2\right)\mu_R\right.\notag\\
&\qquad\quad\left.+\left(\frac{2s_2+s_1}{2}(s_1-1)+s_2^2\right)(1-N_T\mu_T)^+\right\}.\label{eqn converse 10}
\end{align}
By optimizing the bound in \eqref{eqn converse 10} over all possible choices of $s_1=0,1,\ldots,l$, $s_2=0,1,\ldots,N_R-l$, and $l=1,2,\cdots,\min\{N_T,N_R\}$, \eqref{eqn taul2} is proved.

\section{Conclusions}\label{section conclusion}
In this paper, we have characterized the normalized delivery time for a general $N_T \times N_R$ interference network where both the transmitter and receiver sides are equipped with caches. We have obtained both the achievable upper bound and the theoretical lower bound of the minimum NDT for any  $N_T\ge2$, any $N_R\ge2$, and any normalized cache size tuple $(\mu_R, \mu_T)$ in the feasible region. The achievable bound is expressed as the optimal solution of a linear programming problem which can be solved efficiently.  The closed-form expressions for the $2\times2$ and $3\times3$ networks show that it is a piece-wise linearly decreasing function of the normalized cache sizes. The achievable NDT is exactly optimal in a number of special cases and is within a bounded multiplicative gap to the lower bound in general cases. In specific, the gap is a constant in most cases, and is bounded by $\frac{N_T+N_R-1}{N_T}$ in the case when  $\mu_T<1/N_T$ (the accumulated cache size at all transmitters is not enough to cache the entire file library) and $N_T<N_R$.   The proposed cache placement strategy involves generic file splitting with adjustable ratios. The proposed delivery strategy transforms the interference network into a new class of cooperative X-multicast channels. We derived the achievable DoF of this new channel via interference neutralization and interference alignment techniques. Our analysis shows that the proposed caching method can leverage receiver local caching gain, coded multicasting gain, and transmitter cooperation gain opportunistically.  Analysis also shows that the optimal file splitting ratios are not unique.

\newpage

\section*{Appendix A: Proof of Lemma \ref{lemma dof}}
Consider the $\binom{N_T}{t}\times\binom{N_R}{r+1}$ cooperative X-multicast channel defined in Section \ref{section delivery}-A. There are $\binom{N_R}{r+1}\binom{N_T}{t}$ messages in total.  We denote message $W_{\mathcal{R},\mathcal{T}}$ as the message desired by receiver multicast group $\mathcal{R}$ with $|\mathcal{R}|=r+1$ and cached at transmitter cooperation group $\mathcal{T}$ with $|\mathcal{T}|=t$. For example, $W_{{[r+1]},{[t]}}$ is desired by receivers $\{1,2,\ldots,r+1\}$, and available at transmitters $\{1,2,\ldots,t\}$. We divide the proof of Lemma \ref{lemma dof} into three parts as follows, according to the relationship between $N_R$ and $r+t$.

\subsection{$r+t\ge N_R$}
In this case, we divide the total $\binom{N_R}{r+1}\binom{N_T}{t}$ messages into $\binom{N_T}{t}$ groups, such that the messages in the same group are available at the same transmitter cooperation group. Each receiver desires $\binom{N_R-1}{r}$ messages out of the $\binom{N_R}{r+1}$ messages in each group. We then deliver each message group sequentially in a time division manner. Here, we take the group associated with transmitter cooperation set $\mathcal{T}=\{1,2,\ldots,t\}$ as an example to illustrate the achievable transmission scheme. All the other transmitter sets can use the same method.

Denote $x_{\mathcal{R},{\mathcal{T}}}$ as the transmitted symbol encoded from message $W_{{\mathcal{R}},{\mathcal{T}}}$. Each $x_{\mathcal{R},{\mathcal{T}}}$ is wanted by the receiver set $\mathcal{R}$ and unwanted by the receiver set $\bar{\mathcal{R}}=[N_R]\setminus\mathcal{R}$. Note that each transmitted symbol can be cancelled at $\min\{N_R-r-1,t-1\}$ undesired receivers by interference neutralization among the $t$ cooperating transmitters in $\mathcal{T}$. Since $r+t\ge N_R$, we only let $N_R-r$ ($\le t$) transmitters in $\mathcal{T}$ cooperatively transmit the symbol $x_{\mathcal{R},{\mathcal{T}}}$, and deactivate the rest $t+r-N_R$ transmitters. In this case, each symbol can still be neutralized at all the $N_R-r-1$ undesired receivers. Without loss of generality, transmitters $\{1,2,\ldots,N_R-r\}$ are selected for cooperative transmission.

We use a $\rho\triangleq\binom{N_R-1}{r}$-symbol extension to transmit the $\binom{N_R}{r+1}$ messages in this group. Note that $\rho$ is also the total number of messages desired by each receiver. In each time slot $u\in[\rho]$, the received signal at an arbitrary receiver $q\in[N_R]$, denoted as $y_q(u)$, is given by (neglecting the noise)
\begin{align}
y_q(u)=&\sum_{p=1}^{N_R-r}h_{qp}(u)\underbrace{\sum_{\mathcal{R}:|\mathcal{R}|=r+1}v_{{\mathcal{R}},{\mathcal{T}},p}(u)x_{{\mathcal{R}},{\mathcal{T}}}}_{\binom{N_R}{r+1} \textrm{ terms}}\notag\\
=&\underbrace{\sum_{\mathcal{R}:|\mathcal{R}|=r+1,\mathcal{R}\ni q}\left[\sum_{p=1}^{N_R-r}h_{qp}(u)v_{{\mathcal{R}},{\mathcal{T}},p}(u)\right]x_{{\mathcal{R}},{\mathcal{T}}}}_{\textrm{wanted}}+\underbrace{\sum_{\mathcal{R}:|\mathcal{R}|=r+1,\mathcal{R}\not\ni q}\left[\sum_{p=1}^{N_R-r}h_{qp}(u)v_{{\mathcal{R}},{\mathcal{T}},p}(u)\right]x_{{\mathcal{R}},{\mathcal{T}}}}_{\textrm{unwanted}}\label{eqn proof1 receive signal}
\end{align}
where $h_{qp}(u)$ is the channel realization, and $v_{{\mathcal{R}},{\mathcal{T}},p}(u)$ is the precoder of symbol $x_{{\mathcal{R}},{\mathcal{T}}}$ at transmitter $p$. For each undesired receiver $q\in\bar{\mathcal{R}}$ of symbol $x_{{\mathcal{R}},{\mathcal{T}}}$, to apply interference neutralization, we need
\begin{align}
\sum_{p=1}^{N_R-r}h_{qp}(u)v_{{\mathcal{R}},{\mathcal{T}},p}(u)=0,\forall q\in\bar{\mathcal{R}}, \forall u\in[\rho].\label{eqn proof1 neutralization}
\end{align}

We now design the precoders $\{v_{{\mathcal{R}},{\mathcal{T}},p}(u)\}$ to meet \eqref{eqn proof1 neutralization}. Consider the symbol $x_{{\mathcal{R}},{\mathcal{T}}}$ desired by an arbitrary receiver multicast group $\mathcal{R}=\{R_1,R_2,\ldots,R_{r+1}\}$. The undesired receiver set of $x_{{\mathcal{R}},{\mathcal{T}}}$ is $\bar{\mathcal{R}}=\{R_{r+2},\ldots,R_{N_R}\}$. Here, $(R_1,R_2,\ldots,R_{N_R})$ represents an arbitrary permutation of receiver index $(1,2,\ldots,N_R)$. Consider the following $(N_R-r)\times(N_R-r)$ matrix:
\begin{align}
\left(
\begin{matrix}
h_{R_{r+2},1}&h_{R_{r+2},2}&\cdots&h_{R_{r+2},N_R-r}\\
h_{R_{r+3},1}&h_{R_{r+3},2}&\cdots&h_{R_{r+3},N_R-r}\\
\cdots&\cdots&\cdots&\cdots\\
h_{R_{N_R},1}&h_{R_{N_R},2}&\cdots&h_{R_{N_R},N_R-r}\\
a_{1}&a_{2}&\cdots&a_{N_R-r}
\end{matrix}
\right)\triangleq \mathbf{H}_{\bar{\mathcal{R}}},\label{eqn neutralization matrix}
\end{align}
for any $\{a_{1},a_2,\ldots,a_{N_R-r}\}$. Define $c_p$ as the cofactor of $a_p$ such that the determinant of $\mathbf{H}_{\bar{\mathcal{R}}}$ can be expressed as
\begin{align}
\sum_{p=1}^{N_R-r}a_{p}c_{p}=\det(\mathbf{H}_{\bar{\mathcal{R}}}).\label{eqn neutralization determinant}
\end{align}
Then, we design $v_{{\mathcal{R}},{\mathcal{T}},p}(u)$ as $v_{{\mathcal{R}},{\mathcal{T}},p}(u)=c_p(u)$ by taking channel realization $\{h_{qp}(u)\}$ into \eqref{eqn neutralization matrix} and \eqref{eqn neutralization determinant}.

By such construction, the condition in \eqref{eqn proof1 neutralization} is satisfied. For example, at receiver $R_{r+2}$, we have
\begin{align}
&\sum_{p=1}^{N_R-r}h_{R_{r+2},p}(u)v_{{\mathcal{R}},{\mathcal{T}},p}(u)\notag\\
=&\sum_{p=1}^{N_R-r}h_{R_{r+2},p}(u)c_{p}(u)\notag\\
=&
\begin{vmatrix}
h_{R_{r+2},1}(u)&h_{R_{r+2},2}(u)&\cdots&h_{R_{r+2},N_R-r}(u)\\
h_{R_{r+3},1}(u)&h_{R_{r+3},2}(u)&\cdots&h_{R_{r+3},N_R-r}(u)\\
\cdots&\cdots&\cdots&\cdots\\
h_{R_{N_R},1}(u)&h_{R_{N_R},2}(u)&\cdots&h_{R_{N_R},N_R-r}(u)\\
h_{R_{r+2},1}(u)&h_{R_{r+2},2}(u)&\cdots&h_{R_{r+2},N_R-r}(u)
\end{vmatrix}
=0.
\end{align}

After interference neutralization, the received signal in \eqref{eqn proof1 receive signal} can be rewritten as
\begin{align}
y_q(u)&=\sum_{\mathcal{R}:|\mathcal{R}|=r+1,\mathcal{R}\ni q}\tilde{h}_{q,\mathcal{T}}^{\bar{\mathcal{R}}}(u)x_{{\mathcal{R}},{\mathcal{T}}},\label{eqn proof1 receive signal after neutralization}
\end{align}
where
\begin{align}
\tilde{h}_{q,\mathcal{T}}^{\bar{\mathcal{R}}}(u)\triangleq\sum_{p=1}^{N_R-r}h_{q,p}(u)v_{{\mathcal{R}},{\mathcal{T}},p}(u)
=\begin{vmatrix}
h_{R_{r+2},1}(u)&h_{R_{r+2},2}(u)&\cdots&h_{R_{r+2},N_R-r}(u)\\
h_{R_{r+3},1}(u)&h_{R_{r+3},2}(u)&\cdots&h_{R_{r+3},N_R-r}(u)\\
\cdots&\cdots&\cdots&\cdots\\
h_{R_{N_R},1}(u)&h_{R_{N_R},2}(u)&\cdots&h_{R_{N_R},N_R-r}(u)\\
h_{q,1}(u)&h_{q,2}(u)&\cdots&h_{q,N_R-r}(u)
\end{vmatrix}.\label{eqn equivalent channel1}
\end{align}

To successfully decode the $\rho$ desired messages of receiver $q$,  we need to assure that the following $\rho\times \rho$ received signal matrix is full-rank with probability 1:
\begin{align}
\begin{pmatrix}
\tilde{h}_{q,\mathcal{T}}^{\bar{\mathcal{R}}_1}(1)&\tilde{h}_{q,\mathcal{T}}^{\bar{\mathcal{R}}_2}(1)&\cdots&\tilde{h}_{q,\mathcal{T}}^{\bar{\mathcal{R}}_{\rho}}(1)\\
\tilde{h}_{q,\mathcal{T}}^{\bar{\mathcal{R}}_1}(2)&\tilde{h}_{q,\mathcal{T}}^{\bar{\mathcal{R}}_2}(2)&\cdots&\tilde{h}_{q,\mathcal{T}}^{\bar{\mathcal{R}}_{\rho}}(2)\\
\cdots&\cdots&\cdots&\cdots\\
\tilde{h}_{q,\mathcal{T}}^{\bar{\mathcal{R}}_1}(\rho)&\tilde{h}_{q,\mathcal{T}}^{\bar{\mathcal{R}}_2}(\rho)&\cdots&\tilde{h}_{q,\mathcal{T}}^{\bar{\mathcal{R}}_{\rho}}(\rho)
\end{pmatrix},\label{eqn matrix1}
\end{align}
where $\{\bar{\mathcal{R}}_1,\bar{\mathcal{R}}_2,\ldots,\bar{\mathcal{R}}_{\rho}\}$ denotes the undesired receiver sets of the $\binom{N_R-1}{r}$ messages intended for receiver $q$. Given that the construction method of $\{v_{{\mathcal{R}},{\mathcal{T}},p}(u)\}$ and the formation of $\{\tilde{h}_{q,\mathcal{T}}^{\bar{\mathcal{R}}}(u)\}$ are the same at each time slot $u$ as in \eqref{eqn neutralization matrix}, \eqref{eqn neutralization determinant} and \eqref{eqn equivalent channel1}, using \cite[Lemma 3]{CoMP}, we only need to prove the linear independence of polynomials $\{\tilde{h}_{q,\mathcal{T}}^{\bar{\mathcal{R}}_1},\tilde{h}_{q,\mathcal{T}}^{\bar{\mathcal{R}}_2},\ldots,\tilde{h}_{q,\mathcal{T}}^{\bar{\mathcal{R}}_{\rho}}\}$ as functions of $\{h_{qp}\}$. Since $\tilde{h}_{q,\mathcal{T}}^{\bar{\mathcal{R}}}$ is the determinant of the matrix \eqref{eqn equivalent channel1} and each $\tilde{h}_{q,\mathcal{T}}^{\bar{\mathcal{R}}}$ has a unique undesired receiver set $\bar{\mathcal{R}}$, it is easy to see that polynomials $\{\tilde{h}_{q,\mathcal{T}}^{\bar{\mathcal{R}}_1},\tilde{h}_{q,\mathcal{T}}^{\bar{\mathcal{R}}_2},\ldots,\tilde{h}_{q,\mathcal{T}}^{\bar{\mathcal{R}}_{\rho}}\}$ are linearly independent. Using \cite[Lemma 3]{CoMP}, we assure that the received signal matrix \eqref{eqn matrix1} is full-rank with probability 1. Therefore, receiver $q$ can successfully decode its $\binom{N_R-1}{r}$ desired messages in $\binom{N_R-1}{r}$ time slots. Similar arguments can be applied to other receivers. Therefore, a per-user DoF of 1 is achieved.

\subsection{$r+t=N_R-1$}
Next, let us consider the case when $r+t=N_R-1$. Since each message can only be canceled at $t-1$ undesired receivers by interference neutralization while there are $N_R-r-1=t$ undesired receivers in total, each message will still cause interference to one undesired receiver. In this case, asymptotic interference alignment is further applied. To be specific, let each message $W_{{\mathcal{R}},{\mathcal{T}}}$ be encoded into a $t N^{\binom{N_R-1}{r+1}\binom{N_T}{t}}\times 1$ symbol vector $\mathbf{x}_{{\mathcal{R}},{\mathcal{T}}}=((\mathbf{x}_{{\mathcal{R}},{\mathcal{T}}}^1)^T,(\mathbf{x}_{{\mathcal{R}},{\mathcal{T}}}^2)^T,\ldots, (\mathbf{x}_{{\mathcal{R}},{\mathcal{T}}}^t)^T)^T$, where $N\in \mathds{Z}^+$, and $\mathbf{x}_{{\mathcal{R}},{\mathcal{T}}}^i$ ($i\in[t]$) is an $N^{\binom{N_R-1}{r+1}\binom{N_T}{t}}\times 1$ vector. We use an $S\triangleq S_0+(N+1)^{\binom{N_R-1}{r+1}\binom{N_T}{t}}$-symbol extension, where $S_0\triangleq\binom{N_R-1}{r}\binom{N_T}{t}t N^{\binom{N_R-1}{r+1}\binom{N_T}{t}}$. Note that $S_0$ is also the total number of symbols desired by each receiver. Unlike the previous method in Case A which used message grouping, here, we transmit all the $\binom{N_R}{r+1}\binom{N_T}{t}$ messages together. In each time slot $u\in[S]$, the received signal at an arbitrary receiver $q\in[N_R]$, denoted as $y_q(u)$, is given by (neglecting the noise)
\begin{align}
y_q(u)&=\sum_{\mathcal{R}:|\mathcal{R}|=r+1}\sum_{\mathcal{T}:|\mathcal{T}|=t}\sum_{i=1}^t\left[\sum_{p\in \mathcal{T}}h_{qp}(u)\left(\mathbf{v}_{{\mathcal{R}},{\mathcal{T}},p}^i(u)\right)^T\right]\mathbf{x}_{{\mathcal{R}},{\mathcal{T}}}^i,\label{eqn proof22 receive signal}
\end{align}
where $h_{qp}(u)$ is the channel realization, and $\mathbf{v}_{{\mathcal{R}},{\mathcal{T}},p}^i(u)$ is the $N^{\binom{N_R-1}{r+1}\binom{N_T}{t}}\times 1$ precoding vector of symbol vector $\mathbf{x}_{{\mathcal{R}},{\mathcal{T}}}^i$ at transmitter $p$.

We first elaborate the interference neutralization strategy. Consider an arbitrary symbol vector $\mathbf{x}_{{\mathcal{R}},{\mathcal{T}}}^i$ desired by receiver multicast group $\mathcal{R}=\{R_1,R_2,\ldots,R_{r+1}\}$, transmitted by transmitter cooperation group $\mathcal{T}=\{T_{1},T_{2},\ldots,T_{t}\}$, and whose undesired receiver set is $\bar{\mathcal{R}}=[N_R]\setminus\mathcal{R}=\{\bar{R}_1,\bar{R}_2,\ldots,\bar{R}_t\}$. We assume that $\mathbf{x}_{{\mathcal{R}},{\mathcal{T}}}^i$ will be neutralized at receiver set $\bar{\mathcal{R}}_i\triangleq\bar{\mathcal{R}}\setminus\{\bar{R}_i\}$. Then, the precoder must satisfy
\begin{align}
\sum_{p\in \mathcal{T}}h_{qp}(u)v_{{\mathcal{R}},{\mathcal{T}},p,n}^i(u)=0, \forall q\in \bar{\mathcal{R}}_i, \forall n\in[N^{\binom{N_R-1}{r+1}\binom{N_T}{t}}], \forall u\in[S]\label{eqn proof2 neutralization}
\end{align}
where $v_{{\mathcal{R}},{\mathcal{T}},p,n}^i(u)$ is the $n$-th element of $\mathbf{v}_{{\mathcal{R}},{\mathcal{T}},p}^i(u)$. Consider the following $t\times t$ matrix:
\begin{align}
\left(
\begin{matrix}
h_{\bar{R}_1,T_1}&h_{\bar{R}_1,T_2}&\cdots&h_{\bar{R}_1,T_t}\\
h_{\bar{R}_2,T_1}&h_{\bar{R}_2,T_2}&\cdots&h_{\bar{R}_2,T_t}\\
\cdots&\cdots&\cdots&\cdots\\
h_{\bar{R}_{i-1},T_1}&h_{\bar{R}_{i-1},T_2}&\cdots&h_{\bar{R}_{i-1},T_t}\\
h_{\bar{R}_{i+1},T_1}&h_{\bar{R}_{i+1},T_2}&\cdots&h_{\bar{R}_{i+1},T_t}\\
\cdots&\cdots&\cdots&\cdots\\
h_{\bar{R}_{t},T_1}&h_{\bar{R}_{t},T_2}&\cdots&h_{\bar{R}_{t},T_t}\\
a_{1}&a_{2}&\cdots&a_{t}
\end{matrix}
\right)\triangleq \mathbf{H}_{\bar{\mathcal{R}}_i,\mathcal{T}},\label{eqn proof2 neutralization matrix}
\end{align}
for any $\{a_{1},a_2,\ldots,a_{t}\}$. Define $c_p$ as the cofactor of $a_p$ such that
\begin{align}
\sum_{p=1}^{t}a_{p}c_{p}=\det(\mathbf{H}_{\bar{\mathcal{R}}_i,\mathcal{T}}).\label{eqn proof2 neutralization determinant}
\end{align}
We then design $v_{{\mathcal{R}},{\mathcal{T}},p,n}^i(u)$ as
\begin{align}
v_{{\mathcal{R}},{\mathcal{T}},p,n}^i(u)=\alpha_{{\mathcal{R}},{\mathcal{T}}}^{\bar{\mathcal{R}}_i}(u)c_p(u) z_{{\mathcal{R}},{\mathcal{T}},n}^{\bar{\mathcal{R}}_i}(u),\label{eqn proof2 v}
\end{align}
where $\alpha_{{\mathcal{R}},{\mathcal{T}}}^{\bar{\mathcal{R}}_i}(u)$ is chosen i.i.d. from a continuous distribution for all $\{\mathcal{R,T},\bar{\mathcal{R}}_i,u\}$, $c_p(u)$ is the cofactor $c_p$ by taking channel realization $\{h_{qp}(u)\}$ into  \eqref{eqn proof2 neutralization matrix} and \eqref{eqn proof2 neutralization determinant}, and $z_{{\mathcal{R}},{\mathcal{T}},n}^{\bar{\mathcal{R}}_i}(u)$ will be determined later. By such construction, the condition in \eqref{eqn proof2 neutralization} is satisfied. For example, at receiver $\bar{R}_1$, we have
\begin{align}
&\sum_{p\in \mathcal{T}}h_{\bar{R}_1,p}(u)\alpha_{{\mathcal{R}},{\mathcal{T}}}^{\bar{\mathcal{R}}_i}(u)c_p(u)z_{{\mathcal{R}},{\mathcal{T}},n}^{\bar{\mathcal{R}}_i}(u)\notag\\
=\,&\alpha_{{\mathcal{R}},{\mathcal{T}}}^{\bar{\mathcal{R}}_i}(u)z_{{\mathcal{R}},{\mathcal{T}},n}^{\bar{\mathcal{R}}_i}(u)\sum_{p\in \mathcal{T}}h_{\bar{R}_1,p}(u)c_p(u)\notag\\
=\,&\alpha_{{\mathcal{R}},{\mathcal{T}}}^{\bar{\mathcal{R}}_i}(u)z_{{\mathcal{R}},{\mathcal{T}},n}^{\bar{\mathcal{R}}_i}(u)\cdot
\begin{vmatrix}
h_{\bar{R}_1,T_1}(u)&h_{\bar{R}_1,T_2}(u)&\cdots&h_{\bar{R}_1,T_t}(u)\\
h_{\bar{R}_2,T_1}(u)&h_{\bar{R}_2,T_2}(u)&\cdots&h_{\bar{R}_2,T_t}(u)\\
\cdots&\cdots&\cdots&\cdots\\
h_{\bar{R}_{i-1},T_1}(u)&h_{\bar{R}_{i-1},T_2}(u)&\cdots&h_{\bar{R}_{i-1},T_t}(u)\\
h_{\bar{R}_{i+1},T_1}(u)&h_{\bar{R}_{i+1},T_2}(u)&\cdots&h_{\bar{R}_{i+1},T_t}(u)\\
\cdots&\cdots&\cdots&\cdots\\
h_{\bar{R}_{t},T_1}(u)&h_{\bar{R}_{t},T_2}(u)&\cdots&h_{\bar{R}_{t},T_t}(u)\\
h_{\bar{R}_1,T_1}(u)&h_{\bar{R}_1,T_2}(u)&\cdots&h_{\bar{R}_1,T_t}(u)
\end{vmatrix}\notag\\
=\,&0
\end{align}

By the above construction of precoders, it can be seen that symbol vectors $\mathbf{x}_{{\mathcal{R}},{\mathcal{T}}}^i$ unwanted by receiver $q\in\bar{\mathcal{R}}_i$ are all neutralized. Then, the received signal in \eqref{eqn proof22 receive signal} at an arbitrary receiver $q\in[N_R]$ can be rewritten as
\begin{align}
y_q(u)=&\sum_{\mathcal{R}:|\mathcal{R}|=r+1,\mathcal{R}\ni q}\sum_{\mathcal{T}:|\mathcal{T}|=t}\sum_{i=1}^t\left[\sum_{p\in \mathcal{T}}h_{qp}(u)\left(\mathbf{v}_{{\mathcal{R}},{\mathcal{T}},p}^i(u)\right)^T\right]\mathbf{x}_{{\mathcal{R}},{\mathcal{T}}}^i\notag\\
&+\sum_{\substack{\bar{\mathcal{R}}_i,\mathcal{R}:\\|\mathcal{R}|=r+1,\mathcal{R}\cup\bar{\mathcal{R}}_i\not\ni q}}\sum_{\mathcal{T}:|\mathcal{T}|=t}\left[\sum_{p\in \mathcal{T}}h_{qp}(u)\left(\mathbf{v}_{{\mathcal{R}},{\mathcal{T}},p}^i(u)\right)^T\right]\mathbf{x}_{{\mathcal{R}},{\mathcal{T}}}^i,\label{eqn proof2 receive signal after neutralization}
\end{align}
where the first term is the desired messages of receiver $q$ and the second term is the residual interferences.

Now, we aim to apply asymptotic interference alignment to align the interference term in \eqref{eqn proof2 receive signal after neutralization} at the same sub-space. Consider the following monomial set:
\begin{align}
\mathcal{M}_q[N]=\left\{\prod_{\substack{\mathcal{R},\bar{\mathcal{R}}_i,\mathcal{T}:\\\mathcal{R}\cup\bar{\mathcal{R}}_i\not\ni q}}\left[\alpha_{{\mathcal{R}},{\mathcal{T}}}^{\bar{\mathcal{R}}_i}\tilde{h}_{q,\mathcal{T}}^{\bar{\mathcal{R}}_i}\right]^{s_{{\mathcal{R}},{\mathcal{T}}}^{i}}: 1\le s_{{\mathcal{R}},{\mathcal{T}}}^{i}\le N\right\},\label{eqn monomial set 1}
\end{align}
where $\tilde{h}_{q,\mathcal{T}}^{\bar{\mathcal{R}}_i}$ is defined as
\begin{align}
\tilde{h}_{q,\mathcal{T}}^{\bar{\mathcal{R}}_i}\triangleq\sum_{p=T_1}^{T_t}h_{q,p}c_{p}=
\begin{vmatrix}
h_{\bar{R}_1,T_1}&h_{\bar{R}_1,T_2}&\cdots&h_{\bar{R}_1,T_t}\\
h_{\bar{R}_2,T_1}&h_{\bar{R}_2,T_2}&\cdots&h_{\bar{R}_2,T_t}\\
\cdots&\cdots&\cdots&\cdots\\
h_{\bar{R}_{i-1},T_1}&h_{\bar{R}_{i-1},T_2}&\cdots&h_{\bar{R}_{i-1},T_t}\\
h_{\bar{R}_{i+1},T_1}&h_{\bar{R}_{i+1},T_2}&\cdots&h_{\bar{R}_{i+1},T_t}\\
\cdots&\cdots&\cdots&\cdots\\
h_{\bar{R}_{t},T_1}&h_{\bar{R}_{t},T_2}&\cdots&h_{\bar{R}_{t},T_t}\\
h_{q,T_1}&h_{q,T_2}&\cdots&h_{q,T_t}
\end{vmatrix}.\label{eqn proof2 equivalent channel}
\end{align}
The cardinality of $\mathcal{M}_q[N]$ is $N^{\binom{N_R-1}{r+1}\binom{N_T}{t}}$. For each element $v_{{\mathcal{R}},{\mathcal{T}},p,n}^i(u)$ in $\mathbf{v}_{{\mathcal{R}},{\mathcal{T}},p}^i(u)$ satisfying $\mathcal{R}\cup\bar{\mathcal{R}}_i\not\ni q$, the element $z_{{\mathcal{R}},{\mathcal{T}},n}^{\bar{\mathcal{R}}_i}(u)$ in \eqref{eqn proof2 v} is given by a unique monomial $m_{{\mathcal{R}},{\mathcal{T}},n}^{\bar{\mathcal{R}}_i}(u)$ in $\mathcal{M}_q[N]$ by taking $\{h_{qp}(u)\}$ and $\{\alpha_{{\mathcal{R}},{\mathcal{T}}}^{\bar{\mathcal{R}}_i}(u)\}$ into \eqref{eqn monomial set 1}. Then it can be seen that for elements $v_{{\mathcal{R}},{\mathcal{T}},p,n}^i(u)$ in $\mathbf{v}_{{\mathcal{R}},{\mathcal{T}},p}^i(u)$ satisfying $\mathcal{R}\cup\bar{\mathcal{R}}_i\not\ni q$, the summation $\sum_{p\in \mathcal{T}}h_{qp}(u)v_{{\mathcal{R}},{\mathcal{T}},p,n}^i(u)$ is $\alpha_{{\mathcal{R}},{\mathcal{T}}}^{\bar{\mathcal{R}}_i}(u)\tilde{h}_{q,\mathcal{T}}^{\bar{\mathcal{R}}_i}(u)m_{{\mathcal{R}},{\mathcal{T}},n}^{\bar{\mathcal{R}}_i}(u)$ and satisfies
\begin{align}
\alpha_{{\mathcal{R}},{\mathcal{T}}}^{\bar{\mathcal{R}}_i}(u)\tilde{h}_{q,\mathcal{T}}^{\bar{\mathcal{R}}_i}(u)m_{{\mathcal{R}},{\mathcal{T}},n}^{\bar{\mathcal{R}}_i}(u)\in\mathcal{M}_q[N+1](u),\notag
\end{align}
Therefore, the interferences at receiver $q$ are aligned together.

The received signal in \eqref{eqn proof2 receive signal after neutralization} can be rewritten as
\begin{align}
y_q(u)=&\sum_{\mathcal{R}:|\mathcal{R}|=r+1,\mathcal{R}\ni q}\sum_{\mathcal{T}:|\mathcal{T}|=t}\sum_{i=1}^t\left[\sum_{p\in \mathcal{T}}h_{qp}(u)\left(\mathbf{v}_{{\mathcal{R}},{\mathcal{T}},p}^i(u)\right)^T\right]\mathbf{x}_{{\mathcal{R}},{\mathcal{T}}}^i+\sum_{m(u)\in \mathcal{M}_q[N+1](u)}m(u)x_{m(u)},\label{eqn proof2 receive signal after alignment}
\end{align}
where $x_{m(u)}$ is the sum of interference symbols whose received factor is $m(u)$ at receiver $q$. To successfully  decode the $\binom{N_R-1}{r}\binom{N_T}{t}$ desired messages of receiver $q$, we need to assure that the $S\times S$ received signal matrix whose column vectors are
\begin{align}
&\left\{\left(\alpha_{{\mathcal{R}},{\mathcal{T}}}^{\bar{\mathcal{R}}_i}(u)\tilde{h}_{q,\mathcal{T}}^{\bar{\mathcal{R}}_i}(u)m_{{\mathcal{R}},{\mathcal{T}},n}^{\bar{\mathcal{R}}_i}(u)\right)_{u=1}^S:|\mathcal{R}|=r+1,\mathcal{R}\ni q,|\mathcal{T}|=t,i\in[t],n\in[N^{\binom{N_R-1}{r+1}\binom{N_T}{t}}]\right\}\notag\\
&\cup \left\{\left(m(u)\right)_{u=1}^S:m(u)\in\mathcal{M}_q[N+1](u)\right\}
\end{align}
is full-rank with probability 1.

Since the construction method of $\{\mathbf{v}_{{\mathcal{R}},{\mathcal{T}},p}^i(u)\}$ and the formation of $\{\alpha_{{\mathcal{R}},{\mathcal{T}}}^{\bar{\mathcal{R}}_i}(u)\tilde{h}_{q,\mathcal{T}}^{\bar{\mathcal{R}}_i}(u)m_{{\mathcal{R}},{\mathcal{T}},n}^{\bar{\mathcal{R}}_i}(u)\}$ and $\{m(u)\}$ are the same at each time slot $u$, based on \cite[Lemma 3]{CoMP}, we only need to prove the linear independence of these polynomials as functions of $\{h_{qp}\}$ and $\{\alpha_{{\mathcal{R}},{\mathcal{T}}}^{\bar{\mathcal{R}}_i}\}$, which is given by
\begin{align}
&\left\{\alpha_{{\mathcal{R}},{\mathcal{T}}}^{\bar{\mathcal{R}}_i}\tilde{h}_{q,\mathcal{T}}^{\bar{\mathcal{R}}_i}m_{{\mathcal{R}},{\mathcal{T}},n}^{\bar{\mathcal{R}}_i}:\mathcal{R}\ni q,|\mathcal{R}|=r+1,|\mathcal{T}|=t,i\in[t],n\in[N^{\binom{N_R-1}{r+1}\binom{N_T}{t}}\right\}\notag\\
&\cup\{m:m\in\mathcal{M}_q[N+1]\}.\notag
\end{align}
It can be seen that $\{\alpha_{{\mathcal{R}},{\mathcal{T}}}^{\bar{\mathcal{R}}_i}: \mathcal{R}\cup\bar{\mathcal{R}}_i\not \ni q',\forall\mathcal{T}\}$ only exist in the polynomials whose transmitted symbols are the interference of receiver $q'$. Thus, polynomials with different $q'$ are linearly independent. Next, let us consider the polynomials of desired symbols of receiver $q$ corresponding to the same $q'$. The polynomials are given in the following set:
\begin{align}
&\left\{\alpha_{{\mathcal{R}},{\mathcal{T}}}^{\bar{\mathcal{R}}_i}\tilde{h}_{q,\mathcal{T}}^{\bar{\mathcal{R}_i}}m_{{\mathcal{R}},{\mathcal{T}},n}^{\bar{\mathcal{R}}_i}:\mathcal{R}\ni q,\mathcal{R}\cup\bar{\mathcal{R}}_i\not\ni q', m_{{\mathcal{R}},{\mathcal{T}},n}^{\bar{\mathcal{R}}_i}\in\mathcal{M}_{q'}[N],\forall\mathcal{T}\right\}.\label{eqn polynomial set 11}
\end{align}
Partition the set \eqref{eqn polynomial set 11} into subsets according to different powers of factors $\{\alpha_{{\mathcal{R}},{\mathcal{T}}}^{\bar{\mathcal{R}}_i}: \mathcal{R}\cup\bar{\mathcal{R}}_i\not\ni q'\}$. Since polynomials in different subsets are linearly independent due to different powers of factors $\alpha$, we only need to prove the linear independence within each subset.

Consider an arbitrary subset with the following form:
\begin{align}
&\left\{\frac{\alpha_{{\mathcal{R}_0},{\mathcal{T}_0}}^{\bar{\mathcal{R}}_0}\tilde{h}_{q,\mathcal{T}_0}^{\bar{\mathcal{R}}_0}}{\alpha_{{\mathcal{R}_0},{\mathcal{T}_0}}^{\bar{\mathcal{R}}_0}\tilde{h}_{q',\mathcal{T}_0}^{\bar{\mathcal{R}}_0}}\prod_{\substack{\mathcal{R},\bar{\mathcal{R}}_i,\mathcal{T}:\\ \mathcal{R}\cup\bar{\mathcal{R}}_i\not\ni q'}}\left[\alpha_{{\mathcal{R}},{\mathcal{T}}}^{\bar{\mathcal{R}}_i}\tilde{h}_{q',\mathcal{T}}^{\bar{\mathcal{R}}_i}\right]
^{s_{{\mathcal{R}},{\mathcal{T}}}^{i}}: \mathcal{R}_0\ni q,\mathcal{R}_0\cup\bar{\mathcal{R}}_0\not\ni q',\forall\mathcal{T}_0\right\},\label{eqn polynomial set 12}
\end{align}
where the power of  $\alpha_{{\mathcal{R}},{\mathcal{T}}}^{\bar{\mathcal{R}}_i}$ is $s_{{\mathcal{R}},{\mathcal{T}}}^{i}$. To prove the linear independence of polynomials in \eqref{eqn polynomial set 12}, it is equivalent to prove the linear independence of functions in
\begin{align}
\left\{\frac{\tilde{h}_{q,\mathcal{T}_0}^{\bar{\mathcal{R}}_0}}{\tilde{h}_{q',\mathcal{T}_0}^{\bar{\mathcal{R}}_0}}: \bar{\mathcal{R}}_0\not\ni q,\mathcal{R}_0\cup\bar{\mathcal{R}}_0\not\ni q',\forall\mathcal{T}_0\right\}.\label{eqn function}
\end{align}
Assume  there exist some factors $k_{\mathcal{T}_0}^{\bar{\mathcal{R}}_0}$ such that
\begin{align}
\sum_{\substack{\mathcal{R}_0,\bar{\mathcal{R}}_0,\mathcal{T}_0:\\\bar{\mathcal{R}}_0\not\ni q,\mathcal{R}_0\cup\bar{\mathcal{R}}_0\not\ni q'}}
k_{\mathcal{T}_0}^{\bar{\mathcal{R}}_0}\frac{\tilde{h}_{q,\mathcal{T}_0}^{\bar{\mathcal{R}}_0}}{\tilde{h}_{q',\mathcal{T}_0}^{\bar{\mathcal{R}}_0}}\equiv0\label{eqn polynomial set 13}.
\end{align}
Note that $\tilde{h}_{q,\mathcal{T}_0}^{\bar{\mathcal{R}}_0}=\sum_{p=1}^{t}h_{q,T_p}C_{T_p}(\tilde{h}_{q,\mathcal{T}_0}^{\bar{\mathcal{R}}_0})$, where $T_p\in\mathcal{T}_0$ and $C_{T_p}(\tilde{h}_{q,\mathcal{T}_0}^{\bar{\mathcal{R}}_0})$ is the cofactor of $h_{q,T_p}$ in \eqref{eqn proof2 equivalent channel} for $p=1,2,\ldots,t$. We can rewrite \eqref{eqn polynomial set 13} as
\begin{align}
\sum_{\substack{\mathcal{R}_0,\bar{\mathcal{R}}_0,\mathcal{T}_0:\\\bar{\mathcal{R}}_0\not\ni q,\mathcal{R}_0\cup\bar{\mathcal{R}}_0\not\ni q'}}
k_{\mathcal{T}_0}^{\bar{\mathcal{R}}_0}\frac{\tilde{h}_{q,\mathcal{T}_0}^{\bar{\mathcal{R}}_0}}{\tilde{h}_{q',\mathcal{T}_0}^{\bar{\mathcal{R}}_0}}
=&\sum_{\substack{\mathcal{R}_0,\bar{\mathcal{R}}_0,\mathcal{T}_0:\\\bar{\mathcal{R}}_0\not\ni q,\mathcal{R}_0\cup\bar{\mathcal{R}}_0\not\ni q'}}
k_{\mathcal{T}_0}^{\bar{\mathcal{R}}_0}\frac{\sum_{p=1}^{t}h_{q,T_p}C_{T_p}(\tilde{h}_{q,\mathcal{T}_0}^{\bar{\mathcal{R}}_0})}{\tilde{h}_{q',\mathcal{T}_0}^{\bar{\mathcal{R}}_0}}\notag\\
=&\sum_{p=1}^{N_T}h_{q,p}\sum_{\substack{\mathcal{R}_0,\bar{\mathcal{R}}_0,\mathcal{T}_0:\\ \bar{\mathcal{R}}_0\not\ni q,\mathcal{R}_0\cup\bar{\mathcal{R}}_0\not\ni q',\mathcal{T}_0\ni p}}k_{\mathcal{T}_0}^{\bar{\mathcal{R}}_0}\frac{C_{p}(\tilde{h}_{q,\mathcal{T}_0}^{\bar{\mathcal{R}}_0})}{\tilde{h}_{q',\mathcal{T}_0}^{\bar{\mathcal{R}}_0}}
\label{eqn polynomial set 14}.
\end{align}
Since $h_{q,p}$ is independent for different transmitter $p$, \eqref{eqn polynomial set 14} implies that
\begin{align}
\sum_{\substack{\mathcal{R}_0,\bar{\mathcal{R}}_0,\mathcal{T}_0:\\\bar{\mathcal{R}}_0\not\ni q,\mathcal{R}_0\cup\bar{\mathcal{R}}_0\not\ni q',\mathcal{T}_0\ni p}}k_{\mathcal{T}_0}^{\bar{\mathcal{R}}_0}\frac{C_{p}(\tilde{h}_{q,\mathcal{T}_0}^{\bar{\mathcal{R}}_0})}{\tilde{h}_{q',\mathcal{T}_0}^{\bar{\mathcal{R}}_0}}\equiv0,\label{eqn polynomial set 15}
\end{align}
for each transmitter $p$. We can rewrite \eqref{eqn polynomial set 15} as
\begin{align}
\sum_{\substack{\mathcal{R}_0,\bar{\mathcal{R}}_0,\mathcal{T}_0:\\\bar{\mathcal{R}}_0\not\ni q,\mathcal{R}_0\cup\bar{\mathcal{R}}_0\not\ni q',\mathcal{T}_0\ni p,\mathcal{T}_0\ni p_1}}k_{\mathcal{T}_0}^{\bar{\mathcal{R}}_0}\frac{C_{p}(\tilde{h}_{q,\mathcal{T}_0}^{\bar{\mathcal{R}}_0})}{\tilde{h}_{q',\mathcal{T}_0}^{\bar{\mathcal{R}}_0}}
\equiv
-\sum_{\substack{\mathcal{R}_0,\bar{\mathcal{R}}_0,\mathcal{T}_0:\\\bar{\mathcal{R}}_0\not\ni q,\mathcal{R}_0\cup\bar{\mathcal{R}}_0\not\ni q',\mathcal{T}_0\ni p,\mathcal{T}_0\not\ni p_1}}k_{\mathcal{T}_0}^{\bar{\mathcal{R}}_0}\frac{C_{p}(\tilde{h}_{q,\mathcal{T}_0}^{\bar{\mathcal{R}}_0})}{\tilde{h}_{q',\mathcal{T}_0}^{\bar{\mathcal{R}}_0}},\label{eqn polynomial set 16}
\end{align}
for an arbitrary transmitter $p_1\neq p$. Since $\{h_{1,p_1},h_{2,p_1},\ldots,h_{N_R,p_1}\}\setminus\{h_{q,p_1}\}$ only appear on the left side of \eqref{eqn polynomial set 16}, it is easy to see that \eqref{eqn polynomial set 16} holds only when both sides equal zero. Therefore, the summation of functions on the left side of \eqref{eqn polynomial set 16} equals  zero. The same arguments can be applied on these functions again, and we can find that the summation of functions satisfying $\{\bar{\mathcal{R}}_0\not\ni q,\mathcal{R}_0\cup\bar{\mathcal{R}}\not\ni q',\mathcal{T}_0\supseteq\{p,p_1,p_2\}\}$ equals  zero for an arbitrary transmitter $p_2\notin\{p,p_1\}$. Iteratively, we can see that the summation of functions satisfying $\{ \bar{\mathcal{R}}_0\not\ni q,\mathcal{R}_0\cup\bar{\mathcal{R}}_0\not\ni q'\}$ equals  zero for an arbitrary transmitter set $\mathcal{T}$, i.e.,
\begin{align}
\sum_{\substack{\mathcal{R}_0,\bar{\mathcal{R}}_0:\\ \bar{\mathcal{R}}_0\not \ni q,\mathcal{R}_0\cup\bar{\mathcal{R}}_0\not\ni q'}}k_{\mathcal{T}}^{\bar{\mathcal{R}}_0}\frac{C_{p}(\tilde{h}_{q,\mathcal{T}}^{\bar{\mathcal{R}}_0})}{\tilde{h}_{q',\mathcal{T}}^{\bar{\mathcal{R}}_0}}\equiv0.\label{eqn polynomial set 17}
\end{align}

Similar to the derivation of \eqref{eqn polynomial set 16} and \eqref{eqn polynomial set 17}, based on \eqref{eqn polynomial set 17}, we can find that the summation of functions equals  zero for an arbitrary transmitter set $\mathcal{T}$ and an arbitrary $\bar{\mathcal{R}}_i$, s.t. $q\notin \bar{\mathcal{R}}_i,q'\notin\mathcal{R}\cup\bar{\mathcal{R}}_i$. The detailed proof is omitted here. This implies that $k_{\mathcal{T}}^{\bar{\mathcal{R}}_i}\frac{C_{p}(\tilde{h}_{q,\mathcal{T}}^{\bar{\mathcal{R}}_i})}{\tilde{h}_{q',\mathcal{T}}^{\bar{\mathcal{R}}_i}}\equiv0$, and thus $k_{\mathcal{T}}^{\bar{\mathcal{R}}_i}=0$. Therefore, we proved the linear independence of functions in \eqref{eqn function}, and the linear independence of polynomials in \eqref{eqn polynomial set 11}.

Now we consider the polynomials of interference symbols in $\mathcal{M}_q[N+1]$. Given the construction of monomial set \eqref{eqn monomial set 1}, it is easy to see that the polynomials of interference symbols are linearly independent with each other and with the polynomials of the desired signals. Therefore, we finished the proof of linear independence of the polynomials of received symbols at receiver $q$. Similarly, the polynomials of received symbols at other receivers are also linearly independent. Therefore, the received signal matrix of each receiver is full-rank with probability 1 using \cite[Lemma 3]{CoMP}, and each receiver can decode its desired signals successfully. Since each receiver can decode $\binom{N_R-1}{r}\binom{N_T}{t}t N^{\binom{N_R-1}{r+1}\binom{N_T}{t}}$ symbols in $S$ time slots, a per-user DoF of
\begin{align}
d= \frac{\binom{N_R-1}{r}\binom{N_T}{t}t N^{\binom{N_R-1}{r+1}\binom{N_T}{t}}}{\binom{N_R-1}{r}\binom{N_T}{t}t N^{\binom{N_R-1}{r+1}\binom{N_T}{t}}+(N+1)^{\binom{N_R-1}{r+1}\binom{N_T}{t}}}\notag
\end{align}
is achieved. Letting $N\rightarrow\infty$, the per-user DoF of $\frac{\binom{N_R-1}{r}\binom{N_T}{t}t}{\binom{N_R-1}{r}\binom{N_T}{t}t +1}$ is achieved.

\subsection{$r+t\le N_R-2$}
Now, we consider the case when $r+t\le N_R-2$. There are two methods to deliver the messages. The first one is similar to the one used when $r+t\ge N_R$. We first split each message $W_{\mathcal{R},\mathcal{T}}$ into $\binom{N_R-r-1}{t-1}$ submessages, each associated with a unique receiver set $\mathcal{R}\cup\{R_{r+2},R_{r+3}.\ldots, R_{r+t}\}$, where $\mathcal{R}$ is the desired receiver set and $\{R_{r+2},R_{r+3}.\ldots, R_{r+t}\}$ is a set of  arbitrary $t-1$ receivers from the rest $N_R-r-1$ undesired receivers. There are $\binom{N_R}{r+t}$ different receiver sets in total, each having $\binom{r+t}{r+1}\binom{N_T}{t}$ submessages. In the delivery phase, submessages with different receiver sets are delivered individually in a time division manner, and submessages with a same receiver set are delivered together. Through this approach, we can see that the transmitters  send $\binom{r+t}{r+1}\binom{N_T}{t}$ submessages  each time and each receiver in the corresponding receiver set of these submessages desires $\binom{r+t-1}{r}\binom{N_T}{t}$ submessages, while the rest $N_R-r-t$ receivers do not desire any submessage of them. Therefore, we can regard the network each time as a $\binom{N_T}{t}\times \binom{r+t}{r+1}$ cooperative X-multicast network whose per-user achievable DoF is 1 in Case A. Note that each receiver only exists in $\binom{N_R-1}{r+t-1}$ of the $\binom{N_R}{r+t}$ receiver sets in total. Thus, the per-user DoF of $\binom{N_R-1}{r+t-1}/\binom{N_R}{r+t}=\frac{r+t}{N_R}$ is achieved by this method.

The second method is similar to the one in Case B, i.e. interference neutralization is used to neutralize each message at undesired receivers, and then the rest interferences are aligned together by asymptotic interference alignment.

We first consider the case when $t<N_T$. The delivery scheme when $t=N_T$ is slightly different, and will be presented later. When $t<N_T$, each message $W_{{\mathcal{R}},{\mathcal{T}}}$ is encoded into a $\binom{N_R-r-1}{t-1}t N^{(N_R-r-t)(N_T-t+1)}\times 1$ symbol vector
\begin{align}
\mathbf{x}_{{\mathcal{R}},{\mathcal{T}}}=((\mathbf{x}_{{\mathcal{R}},{\mathcal{T}}}^{1})^T,(\mathbf{x}_{{\mathcal{R}},{\mathcal{T}}}^{2})^T,
\ldots,(\mathbf{x}_{{\mathcal{R}},{\mathcal{T}}}^{{\varrho}})^T)^T,\notag
\end{align}
where $\varrho\triangleq\binom{N_R-r-1}{t-1}$, $N\in\mathds{Z}^+$, and $\mathbf{x}_{{\mathcal{R}},{\mathcal{T}}}^{i}$ ($i\in[\varrho]$) is a $t N^{(N_R-r-t)(N_T-t+1)}\times 1$ vector. We use $S\triangleq S_0+\binom{N_R-1}{r+1}\binom{N_R-r-2}{t-1}\binom{N_T}{t-1}(N+1)^{(N_R-r-t)(N_T-t+1)}$-symbol extension here, where $S_0\triangleq\binom{N_R-1}{r}\binom{N_T}{t}\binom{N_R-r-1}{t-1}t N^{(N_R-r-t)(N_T-t+1)}$. Note that $S_0$ is also the total number of symbols desired by each receiver. In each time slot $u$, the received signal at an arbitrary receiver $q\in[N_R]$, denoted as $y_q(u)$, is
\begin{align}
y_q(u)&=\sum_{\mathcal{R}:|\mathcal{R}|=r+1}\sum_{\mathcal{T}:|\mathcal{T}|=t}\sum_{i=1}^\varrho\left[\sum_{p\in \mathcal{T}}h_{qp}(u)\left(\mathbf{v}_{{\mathcal{R}},{\mathcal{T}},p}^i(u)\right)^T\right]\mathbf{x}_{{\mathcal{R}},{\mathcal{T}}}^i,\label{eqn proof3 receive signal}
\end{align}
where $h_{qp}(u)$ is the channel realization, and $\mathbf{v}_{{\mathcal{R}},{\mathcal{T}},p}^i(u)$ is the $tN^{(N_R-r-t)(N_T-t+1)}\times1$ precoding vector of symbol vector $\mathbf{x}_{{\mathcal{R}},{\mathcal{T}}}^i$ at transmitter $p$.

To apply interference neutralization, we consider an arbitrary symbol vector $\mathbf{x}_{{\mathcal{R}},{\mathcal{T}}}^i$ desired by receiver multicast group $\mathcal{R}=\{R_1,R_2,\ldots,R_{r+1}\}$, transmitted by transmitter cooperation group $\mathcal{T}=\{T_{1},T_{2},\ldots,T_{t}\}$ , and whose undesired receiver set is $\bar{\mathcal{R}}\!=\![N_R]\setminus\mathcal{R}\!=\!\{\bar{R}_1,\bar{R}_2,\ldots,\bar{R}_{N_R-r-1}\}$. We assume that each $\mathbf{x}_{{\mathcal{R}},{\mathcal{T}}}^i$ will be neutralized at a distinct receiver set $\bar{\mathcal{R}}_i\subset\bar{\mathcal{R}}$ with $|\bar{\mathcal{R}}_i|=t-1$. Consider an arbitrary $\mathbf{x}_{{\mathcal{R}},{\mathcal{T}}}^i$ with $\bar{\mathcal{R}}_i=\{\bar{R}_{i,1},\bar{R}_{i,2},\ldots,\bar{R}_{i,t-1}\}$. Then, the precoder must satisfy
\begin{align}
\sum_{p\in \mathcal{T}}h_{qp}(u)v_{{\mathcal{R}},{\mathcal{T}},p,n}^i(u)=0, \forall q\in \bar{\mathcal{R}}_i,\forall n\in[tN^{(N_R-r-t)(N_T-t+1)}],\forall u\in[S],\label{eqn proof3 neutralization}
\end{align}
where $v_{{\mathcal{R}},{\mathcal{T}},p,n}^i(u)$ is the $n$-th element of $\mathbf{v}_{{\mathcal{R}},{\mathcal{T}},p}^i(u)$.  Consider the following matrix:
\begin{align}
\left(
\begin{matrix}
h_{\bar{R}_{i,1},T_1}&h_{\bar{R}_{i,1},T_2}&\cdots&h_{\bar{R}_{i,1},T_t}\\
h_{\bar{R}_{i,2},T_1}&h_{\bar{R}_{i,2},T_2}&\cdots&h_{\bar{R}_{i,2},T_t}\\
\cdots&\cdots&\cdots&\cdots\\
h_{\bar{R}_{i,t-1},T_1}&h_{\bar{R}_{i,t-1},T_2}&\cdots&h_{\bar{R}_{i,t-1},T_t}\\
a_{1}&a_{2}&\cdots&a_{t}
\end{matrix}
\right)\triangleq \mathbf{H}_{\bar{\mathcal{R}}_i,\mathcal{T}},\label{eqn proof3 neutralization matrix}
\end{align}
for any $\{a_{1},a_2,\ldots,a_{t}\}$. Define $c_p$ as the cofactor of $a_p$ such that
\begin{align}
\sum_{p=1}^{t}a_{p}c_{p}=\det(\mathbf{H}_{\bar{\mathcal{R}}_i,\mathcal{T}}).\label{eqn proof3 neutralization determinant}
\end{align}
We then design $v_{{\mathcal{R}},{\mathcal{T}},p,n}^i(u)$ as
\begin{align}
v_{{\mathcal{R}},{\mathcal{T}},p,n}^i(u)=\alpha_{{\mathcal{R}},{\mathcal{T}}}^{\tilde{\mathcal{R}}_i}(u)c_p(u) z_{{\mathcal{R}},{\mathcal{T}},n}^{\bar{\mathcal{R}}_i}(u),\label{eqn proof3 v}
\end{align}
where $\alpha_{{\mathcal{R}},{\mathcal{T}}}^{\tilde{\mathcal{R}}_i}(u)$ is chosen i.i.d. from a continuous distribution for all $\{\mathcal{R,T},\bar{\mathcal{R}}_i,u\}$, $c_p(u)$ is the cofactor $c_p$ by taking channel realization $\{h_{qp}(u)\}$ into \eqref{eqn proof3 neutralization matrix} and \eqref{eqn proof3 neutralization determinant}, and $z_{{\mathcal{R}},{\mathcal{T}},n}^{\bar{\mathcal{R}}_i}(u)$ will be determined later. By such construction, the condition in \eqref{eqn proof3 neutralization} is satisfied. For example, at receiver $\bar{R}_{i,1}$, we have
\begin{align}
&\sum_{p\in \mathcal{T}}h_{\bar{R}_{i,1},p}(u)\alpha_{{\mathcal{R}},{\mathcal{T}}}^{\bar{\mathcal{R}}_i}(u)c_p(u)z_{{\mathcal{R}},{\mathcal{T}},n}^{\bar{\mathcal{R}}_i}(u)\notag\\
=\,&\alpha_{{\mathcal{R}},{\mathcal{T}}}^{\bar{\mathcal{R}}_i}(u)z_{{\mathcal{R}},{\mathcal{T}},n}^{\bar{\mathcal{R}}_i}(u)\sum_{p\in \mathcal{T}}h_{\bar{R}_{i,1},p}(u)c_p(u)\notag\\
=\,&\alpha_{{\mathcal{R}},{\mathcal{T}}}^{\bar{\mathcal{R}}_i}(u)z_{{\mathcal{R}},{\mathcal{T}},n}^{\bar{\mathcal{R}}_i}(u)
\cdot\begin{vmatrix}
h_{\bar{R}_{i,1},T_1}(u)&h_{\bar{R}_{i,1},T_2}(u)&\cdots&h_{\bar{R}_{i,1},T_t}(u)\\
h_{\bar{R}_{i,2},T_1}(u)&h_{\bar{R}_{i,2},T_2}(u)&\cdots&h_{\bar{R}_{i,2},T_t}(u)\\
\cdots&\cdots&\cdots&\cdots\\
h_{\bar{R}_{i,t-1},T_1}(u)&h_{\bar{R}_{i,t-1},T_2}(u)&\cdots&h_{\bar{R}_{i,t-1},T_t}(u)\\
h_{\bar{R}_{i,1},T_1}(u)&h_{\bar{R}_{i,1},T_2}(u)&\cdots&h_{\bar{R}_{i,1},T_t}(u)
\end{vmatrix}\notag\\
=\,&0.
\end{align}

By the above construction of precoders, it can be seen that symbol vectors $\mathbf{x}_{{\mathcal{R}},{\mathcal{T}}}^i$ unwanted by receiver $q\in\bar{\mathcal{R}}_i$ are all neutralized. Then, the received signal at an arbitrary receiver $q\in[N_R]$ can be rewritten as
\begin{align}
y_q(u)=&\sum_{\mathcal{R}:|\mathcal{R}|=r+1,\mathcal{R}\ni q}\sum_{\mathcal{T}:|\mathcal{T}|=t}\sum_{i=1}^\varrho\left[\sum_{p\in \mathcal{T}}h_{qp}(u)\left(\mathbf{v}_{{\mathcal{R}},{\mathcal{T}},p}^i(u)\right)^T\right]\mathbf{x}_{{\mathcal{R}},{\mathcal{T}}}^i\notag\\
&+\sum_{\substack{\bar{\mathcal{R}}_i,\mathcal{R}:\\|\mathcal{R}|=r+1,\mathcal{R}\cup\bar{\mathcal{R}}_i\not\ni q}}\sum_{\mathcal{T}:|\mathcal{T}|=t}\left[\sum_{p\in \mathcal{T}}h_{qp}(u)\left(\mathbf{v}_{{\mathcal{R}},{\mathcal{T}},p}^i(u)\right)^T\right]\mathbf{x}_{{\mathcal{R}},{\mathcal{T}}}^i,\label{eqn proof3 receive signal after neutralization}
\end{align}
where the first term is the desired messages of receiver $q$ and the second term is the residual interferences.

Now, we aim to apply asymptotic interference alignment to align the interference term in \eqref{eqn proof3 receive signal after neutralization} at the same sub-space. In specific, symbol vector $\mathbf{x}_{{\mathcal{R}},{\mathcal{T}}}^{i}$ is aligned with other symbol vectors which have the same receiver multicast group $\mathcal{R}$, neutralized at the same receiver set $\bar{\mathcal{R}}_i$, and only differ from one transmitter at transmitter cooperation set $\mathcal{T}$. For an arbitrary $\mathbf{x}_{\mathcal{R},\mathcal{T}}^i$, consider the following monomial sets:
\begin{align}
\mathcal{M}_{\mathcal{R},\mathcal{T}^c}^{\bar{\mathcal{R}}_i}\left[N\right]
=\left\{\prod_{\substack{p,q:\\p\notin\mathcal{T}^c,q\notin\mathcal{R}\cup\bar{\mathcal{R}}_i}}\left[\alpha_{{\mathcal{R}},{\{p\}\cup\mathcal{T}^c}}^{\bar{\mathcal{R}}_i}\tilde{h}_{q,\{p\}\cup\mathcal{T}^c}^{\bar{\mathcal{R}}_i}\right]^{s_{{\mathcal{R}},{\{p\}\cup\mathcal{T}^c}}^{i,q}}:
1\le s_{{\mathcal{R}},{\{p\}\cup\mathcal{T}^c}}^{i,q}\le N \right\},\label{eqn monomial set 2c}
\end{align}
where $\mathcal{T}^c\subset\mathcal{T},|\mathcal{T}^c|=t-1$ and $\tilde{h}_{q,\mathcal{T}}^{\bar{\mathcal{R}}_i}$ is defined as
\begin{align}
\tilde{h}_{q,\mathcal{T}}^{\bar{\mathcal{R}}_i}
\triangleq\sum_{p=T_1}^{T_t}h_{q,p}c_{p}
=\begin{vmatrix}
h_{\bar{R}_{i,1},T_1}&h_{\bar{R}_{i,1},T_2}&\cdots&h_{\bar{R}_{i,1},T_t}\\
h_{\bar{R}_{i,2},T_1}&h_{\bar{R}_{i,2},T_2}&\cdots&h_{\bar{R}_{i,2},T_t}\\
\cdots&\cdots&\cdots&\cdots\\
h_{\bar{R}_{i,t-1},T_1}&h_{\bar{R}_{i,t-1},T_2}&\cdots&h_{\bar{R}_{i,t-1},T_t}\\
h_{q,T_1}&h_{q,T_2}&\cdots&h_{q,T_t}
\end{vmatrix}.\label{eqn proof3 equivalent channel}
\end{align}
There are $t$ different monomial sets for symbol vector $\mathbf{x}_{{\mathcal{R}},{\mathcal{T}}}^{i}$, each with cardinality $N^{(N_R-r-t)(N_T-t+1)}$. Define $\mathcal{M}_{\mathcal{R},\mathcal{T}}^{\bar{\mathcal{R}}_i}\left[N\right]=\{\mathcal{M}_{\mathcal{R},\mathcal{T}^c}^{\bar{\mathcal{R}}_i}\left[N\right]:\mathcal{T}^c\subset \mathcal{T},|\mathcal{T}^c|=t-1\}$. For each element $v_{{\mathcal{R}},{\mathcal{T}},p,n}^i(u)$ in $\mathbf{v}_{{\mathcal{R}},{\mathcal{T}},p}^i(u)$ , the element $z_{{\mathcal{R}},{\mathcal{T}},n}^{\bar{\mathcal{R}}_i}(u)$ in \eqref{eqn proof3 v} is given by a unique monomial $m_{{\mathcal{R}},{\mathcal{T}},n}^{\bar{\mathcal{R}}_i}(u)$ in $\mathcal{M}_{\mathcal{R},\mathcal{T}}^{\bar{\mathcal{R}}_i}\left[N\right]$ by taking $\{h_{qp}(u)\}$ and $\{\alpha_{{\mathcal{R}},{\mathcal{T}}}^{\bar{\mathcal{R}}_i}(u)\}$ into \eqref{eqn monomial set 2c}. By such construction, the summation $\sum_{p\in \mathcal{T}}h_{qp}(u)v_{{\mathcal{R}},{\mathcal{T}},p,n}^i(u)$ for $\mathbf{x}_{{\mathcal{R}},{\mathcal{T}}}^i$ such that $q\notin \mathcal{R}\cup\bar{\mathcal{R}}_i$ is $\alpha_{{\mathcal{R}},{\mathcal{T}}}^{\bar{\mathcal{R}}_i}(u)\tilde{h}_{q,\mathcal{T}}^{\bar{\mathcal{R}}_i}(u)m_{{\mathcal{R}},{\mathcal{T}},n}^{\bar{\mathcal{R}}_i}(u)$ and satisfies
\begin{align}
\alpha_{{\mathcal{R}},{\mathcal{T}}}^{\bar{\mathcal{R}}_i}(u)\tilde{h}_{q,\mathcal{T}}^{\bar{\mathcal{R}}_i}(u)m_{{\mathcal{R}},{\mathcal{T}},n}^{\bar{\mathcal{R}}_i}(u)\in\mathcal{M}_{\mathcal{R},\mathcal{T}}^{\bar{\mathcal{R}}_i}\left[N+1\right](u),\notag
\end{align}
In specific, denote $m_{{\mathcal{R}},{\mathcal{T}^c},n}^{\bar{\mathcal{R}}_i}(u)$ as the monomial selected by $v_{{\mathcal{R}},{\mathcal{T}},p,n}^i(u)$ in an arbitrary $\mathcal{M}_{\mathcal{R},\mathcal{T}^c}^{\bar{\mathcal{R}}_i}\left[N\right](u)$, where $\mathcal{T}^c\subset\mathcal{T}$. We have
\begin{align}
\alpha_{{\mathcal{R}},{\mathcal{T}}}^{\bar{\mathcal{R}}_i}(u)\tilde{h}_{q,\mathcal{T}}^{\bar{\mathcal{R}}_i}(u)m_{{\mathcal{R}},{\mathcal{T}^c},n}^{\bar{\mathcal{R}}_i}(u)\in\mathcal{M}_{\mathcal{R},\mathcal{T}^c}^{\bar{\mathcal{R}}_i}\left[N+1\right](u),\notag
\end{align}
which means that the symbols intended for the same receiver multicast group $\mathcal{R}$, neutralized at the same receiver set $\bar{\mathcal{R}}_i$, with its precoder $z_{{\mathcal{R}},{\mathcal{T}},n}^{\bar{\mathcal{R}}_i}$ constructed from the same monomial set $\mathcal{M}_{\mathcal{R},\mathcal{T}^c}^{\bar{\mathcal{R}}_i}\left[N\right]$ are aligned in the same subspace with dimension $(N+1)^{(N_R-r-t)(N_T-t+1)}$.

By the design of both interference neutralization and interference alignment, the received signal at receiver $q$ is given by
\begin{align}
y_q(u)=&\sum_{\mathcal{R}:|\mathcal{R}|=r+1, \mathcal{R}\ni q}\sum_{\mathcal{T}:|\mathcal{T}|=t}\sum_{i=1}^\varrho\left[\sum_{p\in \mathcal{T}}h_{qp}(u)\left(\mathbf{v}_{{\mathcal{R}},{\mathcal{T}},p}^i(u)\right)^T\right]\mathbf{x}_{{\mathcal{R}},{\mathcal{T}}}^i\notag\\
&+\sum_{\substack{\bar{\mathcal{R}}_i,\mathcal{R}:\\|\mathcal{R}|=r+1,\mathcal{R}\cup\bar{\mathcal{R}}_i\not\ni q}}\sum_{\mathcal{T}:|\mathcal{T}|=t}\left[\sum_{p\in \mathcal{T}}h_{qp}(u)\left(\mathbf{v}_{{\mathcal{R}},{\mathcal{T}},p}^i(u)\right)^T\right]\mathbf{x}_{{\mathcal{R}},{\mathcal{T}}}^i\notag\\
=&\sum_{\mathcal{R}:|\mathcal{R}|=r+1, \mathcal{R}\ni q}\sum_{\mathcal{T}:|\mathcal{T}|=t}\sum_{i=1}^\varrho\left[\sum_{p\in \mathcal{T}}h_{qp}(u)\left(\mathbf{v}_{{\mathcal{R}},{\mathcal{T}},p}^i(u)\right)^T\right]\mathbf{x}_{{\mathcal{R}},{\mathcal{T}}}^i\notag\\
&+\sum_{\substack{\bar{\mathcal{R}}_i,\mathcal{R}:\\|\mathcal{R}|=r+1,\mathcal{R}\cup\bar{\mathcal{R}}_i\not\ni q}}\sum_{\mathcal{T}^c:|\mathcal{T}^c|=t-1}\sum_{m(u)\in\mathcal{M}_{\mathcal{R},\mathcal{T}^c}^{\bar{\mathcal{R}}_i}\left[N+1\right](u)}m(u)x_{m(u)},\label{eqn proof3 receive signal after alignment}
\end{align}
where $x_{m(u)}$ is the sum of interference symbols whose received factor is $m(u)$ at receiver $q$. To successfully decode the $\binom{N_R-1}{r}\binom{N_T}{t}$ desired messages of receiver $q$, we need to assure that the $S \times S$ received signal matrix whose column vectors are
\begin{align}
&\left\{\left(\alpha_{{\mathcal{R}},{\mathcal{T}}}^{\bar{\mathcal{R}}_i}(u)\tilde{h}_{q,\mathcal{T}}^{\bar{\mathcal{R}}_i}(u)m_{{\mathcal{R}},{\mathcal{T}},n}^{\bar{\mathcal{R}}_i}(u)\right)_{u=1}^S:|\mathcal{R}|=r+1,\right.\notag\\
&\quad \mathcal{R}\ni q,|T|=t,i\in[\varrho],n\in[tN^{(N_R-r-t)(N_T-t+1)}]\Big\}\notag\\
&\cup\left\{\left(m(u)\right)_{u=1}^S:m(u)\in\mathcal{M}_{\mathcal{R},\mathcal{T}^c}^{\bar{\mathcal{R}}_i}\left[N+1\right](u), \mathcal{R}\cup\bar{\mathcal{R}}_i\not\ni q,|\mathcal{T}^c|=t-1\right\}\notag
\end{align}
is full-rank with probability 1. Since the construction method of $\{\mathbf{v}_{{\mathcal{R}},{\mathcal{T}},p}^i(u)\}$ and the formation of $\{\alpha_{{\mathcal{R}},{\mathcal{T}}}^{\bar{\mathcal{R}}_i}(u)\tilde{h}_{q,\mathcal{T}}^{\bar{\mathcal{R}}_i}(u)m_{{\mathcal{R}},{\mathcal{T}},n}^{\bar{\mathcal{R}}_i}(u)\}$ and $\{m(u)\}$  are the same at each time slot $u$, based on \cite[Lemma 3]{CoMP}, we only need to prove the linear independence of these polynomial functions, which is given by
\begin{align}
&\left\{\alpha_{{\mathcal{R}},{\mathcal{T}}}^{\bar{\mathcal{R}}_i}\tilde{h}_{q,\mathcal{T}}^{\bar{\mathcal{R}}_i}m_{{\mathcal{R}},{\mathcal{T}},n}^{\bar{\mathcal{R}}_i}:\mathcal{R}\ni q,|\mathcal{R}|=r+1,|\mathcal{T}|=t,i\in[\varrho],n\in[tN^{(N_R-r-t)(N_T-t+1)}]\right\}\notag\\
&\cup \left\{m_{{\mathcal{R}},{\mathcal{T}^c},n}^{\bar{\mathcal{R}}_i}:m_{{\mathcal{R}},{\mathcal{T}^c},n}^{\bar{\mathcal{R}}_i}\in\mathcal{M}_{\mathcal{R},\mathcal{T}^c}^{\bar{\mathcal{R}}_i}\left[N+1\right], \mathcal{R}\cup \bar{\mathcal{R}}_i\not\ni q,|\mathcal{T}^c|=t-1\right\}.\notag
\end{align}

First, we can see that polynomials corresponding to different $\mathcal{R}$ and $\bar{\mathcal{R}}_i$ are linearly independent because they have different factors $\left\{\alpha_{{\mathcal{R}},{\mathcal{T}}}^{\bar{\mathcal{R}}_i}\right\}$. Then, we only need to consider polynomials corresponding to the same $\mathcal{R}$ and $\bar{\mathcal{R}}_i$. Let us first consider polynomials of desired symbols corresponding to an arbitrary $\mathcal{R}$ and $\bar{\mathcal{R}}_i$ ($\mathcal{R}\ni q$), i.e.
\begin{align}
&\bigcup_{\mathcal{T}:|\mathcal{T}|=t}\left\{\alpha_{{\mathcal{R}},{\mathcal{T}}}^{\bar{\mathcal{R}}_i}\tilde{h}_{q,\mathcal{T}}^{\bar{\mathcal{R}}_i}m_{{\mathcal{R}},{\mathcal{T}},n}^{\bar{\mathcal{R}}_i}: m_{{\mathcal{R}},{\mathcal{T}},n}^{\bar{\mathcal{R}}_i}\in \mathcal{M}_{\mathcal{R},\mathcal{T}}^{\bar{\mathcal{R}}_i}\left[N\right]\right\}\notag\\
=&\bigcup_{\mathcal{T}:|\mathcal{T}|=t}\bigcup_{\substack{\mathcal{T}^c:\\\mathcal{T}^c\subset\mathcal{T},|\mathcal{T}^c|=t-1}}\left\{\alpha_{{\mathcal{R}},{\mathcal{T}}}^{\bar{\mathcal{R}}_i}\tilde{h}_{q,\mathcal{T}}^{\bar{\mathcal{R}}_i}m_{{\mathcal{R}},{\mathcal{T}^c},n}^{\bar{\mathcal{R}}_i}:m_{{\mathcal{R}},{\mathcal{T}^c},n}^{\bar{\mathcal{R}}_i}\in \mathcal{M}_{\mathcal{R},\mathcal{T}^c}^{\bar{\mathcal{R}}_i}\left[N\right]\right\}\notag\\
=&\bigcup_{\mathcal{T}^c:|\mathcal{T}^c|=t-1}\left\{\alpha_{{\mathcal{R}},{\mathcal{T}^c\cup\{p\}}}^{\bar{\mathcal{R}}_i}\tilde{h}_{q,\mathcal{T}^c\cup\{p\}}^{\bar{\mathcal{R}}}m_{{\mathcal{R}},{\mathcal{T}^c},n}^{\bar{\mathcal{R}}_i}:p\notin \mathcal{T}^c,m_{{\mathcal{R}},{\mathcal{T}^c},n}^{\bar{\mathcal{R}}_i}\in \mathcal{M}_{\mathcal{R},\mathcal{T}^c}^{\bar{\mathcal{R}}_i}\left[N\right]\right\}.\label{eqn polynomial set 23}
\end{align}
Partitioning these polynomials into subsets w.r.t. $\mathcal{T}^c$ as in \eqref{eqn polynomial set 23}, it can be seen that polynomials $\{\alpha_{{\mathcal{R}},{\mathcal{T}^c\cup\{p\}}}^{\bar{\mathcal{R}}_i}\tilde{h}_{q,\mathcal{T}^c\cup\{p\}}^{\bar{\mathcal{R}}_i}m_{{\mathcal{R}},{\mathcal{T}^c},n}^{\bar{\mathcal{R}}_i}\}$ for different $\mathcal{T}^c$ are linearly independent. This is because the polynomials for each $\mathcal{T}^c$ have a unique factor set $\left\{\alpha_{\mathcal{R},{\mathcal{T}^c\cup\{p\}}}^{\bar{\mathcal{R}}_i}: p\notin \mathcal{T}^c\right\}$.

Then, we only need to consider the linear independence of polynomials with the same $\mathcal{R}$, $\bar{\mathcal{R}}_i$ and $\mathcal{T}^c$ ($\mathcal{R}\ni q$):
\begin{align}
&\left\{\alpha_{{\mathcal{R}},{\mathcal{T}^c\cup\{p\}}}^{\bar{\mathcal{R}}_i}\tilde{h}_{q,\mathcal{T}^c\cup\{p\}}^{\bar{\mathcal{R}}_i}m_{{\mathcal{R}},{\mathcal{T}^c},n}^{\bar{\mathcal{R}}_i}:p\notin \mathcal{T}^c,m_{{\mathcal{R}},{\mathcal{T}^c},n}^{\bar{\mathcal{R}}_i}\in \mathcal{M}_{\mathcal{R},\mathcal{T}^c}^{\bar{\mathcal{R}}_i}\left[N\right]\right\}.\label{eqn polynomial set 24}
\end{align}
Note that $h_{qp}$ only exists in polynomials whose symbols are transmitted by $\mathcal{T}^c\cup\{p\}$, i.e.
\begin{align}
\left\{\alpha_{{\mathcal{R}},{\mathcal{T}^c\cup\{p\}}}^{\bar{\mathcal{R}}_i}\tilde{h}_{q,\mathcal{T}^c\cup\{p\}}^{\bar{\mathcal{R}}_i}m_{{\mathcal{R}},{\mathcal{T}^c},n}^{\bar{\mathcal{R}}_i}: m_{{\mathcal{R}},{\mathcal{T}^c},n}^{\bar{\mathcal{R}}_i}\in \mathcal{M}_{\mathcal{R},\mathcal{T}^c}^{\bar{\mathcal{R}}_i}\left[N\right]\right\}.\label{eqn polynomial set 25}
\end{align}
Therefore, it can be seen that polynomials $\{\alpha_{{\mathcal{R}},{\mathcal{T}^c\cup\{p\}}}^{\bar{\mathcal{R}}_i}\tilde{h}_{q,\mathcal{T}^c\cup\{p\}}^{\bar{\mathcal{R}}_i}m_{{\mathcal{R}},{\mathcal{T}},n}^{\bar{\mathcal{R}}_i}\}$ are linearly independent for different $p$. This implies that we only need to prove the linear independence of polynomials for the same $\mathcal{R}$, $\bar{\mathcal{R}}_i$, $\mathcal{T}^c$ and $p$ ($\mathcal{R}\ni q$), which is equivalent to prove the linear independence of polynomials in $\mathcal{M}_{\mathcal{R},\mathcal{T}^c}^{\bar{\mathcal{R}}_i}\left[N\right]$ in \eqref{eqn monomial set 2c}. Given that each $\alpha_{{\mathcal{R}},{\{p\}\cup\mathcal{T}^c}}^{\bar{\mathcal{R}}_i}\tilde{h}_{q,\{p\}\cup\mathcal{T}^c}^{\bar{\mathcal{R}}_i}$ in \eqref{eqn monomial set 2c} has a unique channel coefficient $h_{q,p}$, we can guarantee the Jacobian matrix of polynomials $\{\alpha_{{\mathcal{R}},{\{p\}\cup\mathcal{T}^c}}^{\bar{\mathcal{R}}_i}\tilde{h}_{q,\{p\}\cup\mathcal{T}^c}^{\bar{\mathcal{R}}_i}: q\notin\mathcal{R}\cup\bar{\mathcal{R}}_i,p\notin \mathcal{T}^c\}$ are full row rank. Using  \cite[Theorem 3]{algebraic} in Page 135 and \cite[Lemma 1]{CoMP}, it can be seen that these polynomials are algebraically independent. Thus, polynomials in $\mathcal{M}_{\mathcal{R},\mathcal{T}^c}^{\bar{\mathcal{R}}_i}\left[N\right]$ are linearly independent, and we finished the proof of linear independence of polynomials whose symbols are desired by receiver $q$.

We can directly apply the arguments above to the polynomials corresponding to the interference, and show that these polynomials are also linearly independent. Therefore, we finished the proof of linear independence of the polynomials of received symbols at receiver $q$. Similarly, the polynomials of received symbols at other receivers are also linearly independent. Therefore, the received signal matrix at each receiver is full rank with probability 1 using \cite[Lemma 3]{CoMP}, and each receiver can decode its desired signals successfully.

Since each receiver can decode $\binom{N_R-1}{r}\binom{N_T}{t}\binom{N_R-r-1}{t-1}t N^{(N_R-r-t)(N_T-t+1)}$ symbols in $S=\binom{N_R-1}{r}\binom{N_T}{t}\binom{N_R-r-1}{t-1}t N^{(N_R-r-t)(N_T-t+1)}+\binom{N_R-1}{r+1}\binom{N_R-r-2}{t-1}\binom{N_T}{t-1}(N+1)^{(N_R-r-t)(N_T-t+1)}$-symbol extension, a per-user DoF of
\begin{align}
d=\frac{\binom{N_R-1}{r}\binom{N_T}{t}\binom{N_R-r-1}{t-1}t N^{(N_R-r-t)(N_T-t+1)}}
{S}\notag
\end{align}
is achieved. Letting $N\rightarrow\infty$, the per-user DoF of
\begin{align}
d = \frac{\binom{N_R-1}{r}\binom{N_T}{t}\binom{N_R-r-1}{t-1}t}
{\binom{N_R-1}{r}\binom{N_T}{t}\binom{N_R-r-1}{t-1}t+\binom{N_R-1}{r+1}\binom{N_R-r-2}{t-1}\binom{N_T}{t-1}}\label{eqn dof 2}
\end{align}
is achieved.

Next, we consider the case when $t=N_T$ where interference alignment is not applied. Each message $W_{{\mathcal{R}},{[N_T]}}$ is encoded into a $\binom{N_R-r-1}{N_T-1}\times 1$ symbol vector
\begin{align}
\mathbf{x}_{{\mathcal{R}},{[N_T]}}=(x_{{\mathcal{R}},{[N_T]}}^1,x_{{\mathcal{R}},{[N_T]}}^2,\ldots,x_{{\mathcal{R}},{[N_T]}}^\rho)^T\notag
\end{align}
where $\rho\triangleq\binom{N_R-r-1}{N_T-1}$. A symbol extension of $S=\binom{N_R-1}{r}\binom{N_R-r-1}{N_T-1}+\binom{N_R-1}{r+1}\binom{N_R-r-2}{N_T-1}$ is used here.  In each time slot $u$, the received signal at an arbitrary receiver $q\in[N_R]$, denoted by $y_q(u)$, is given by
\begin{align}
y_q(u)&=\sum_{\mathcal{R}:|\mathcal{R}|=r+1}\sum_{i=1}^\rho\left[\sum_{p\in [N_T]}h_{qp}(u)v_{{\mathcal{R}},{[N_T]},p}^i(u)\right]x_{{\mathcal{R}},{[N_T]}}^i\label{eqn proof2 receive signal}
\end{align}
where $h_{qp}(u)$ is the channel realization, and $v_{{\mathcal{R}},{[N_T]},p}^i(u)$ is the precoder of symbol $x_{{\mathcal{R}},{[N_T]}}^i$ at transmitter $p$.

To apply interference neutralization, we consider an arbitrary symbol $x_{{\mathcal{R}},{[N_T]}}^i$ desired by receiver multicast group $\mathcal{R}=\{R_1,R_2,\ldots,R_{r+1}\}$, and whose undesired receiver set is $\bar{\mathcal{R}}=\{\bar{R}_1,\bar{R}_2,\ldots,\bar{R}_{N_R-r-1}\}$. We assume that $x_{{\mathcal{R}},{[N_T]}}^i$ will be neutralized at a distinct receiver set $\bar{\mathcal{R}}_i\subset\bar{\mathcal{R}}$ with $|\bar{\mathcal{R}}_i|=N_T-1$. Consider an arbitrary $x_{{\mathcal{R}},{[N_T]}}^i$ with $\bar{\mathcal{R}}_i=\{\bar{R}_{i,1},\bar{R}_{i,2},\ldots,\bar{R}_{i,N_T-1}\}$. We must have
\begin{align}
\sum_{p\in [N_T]}h_{qp}(u)v_{{\mathcal{R}},{[N_T]},p}^i(u)=0, \forall q\in \bar{\mathcal{R}}_i,\forall u\in[\rho]\label{eqn proof4 neutralization}
\end{align}
Consider the following $N_T\times N_T$ matrix:
\begin{align}
\left(
\begin{matrix}
h_{\bar{R}_{i,1},1}&h_{\bar{R}_{i,1},2}&\cdots&h_{\bar{R}_{i,1},N_T}\\
h_{\bar{R}_{i,2},1}&h_{\bar{R}_{i,2},2}&\cdots&h_{\bar{R}_{i,2},N_T}\\
\cdots&\cdots&\cdots&\cdots\\
h_{\bar{R}_{i,N_T-1},1}&h_{\bar{R}_{i,N_T-1},2}&\cdots&h_{\bar{R}_{i,N_T-1},N_T}\\
a_{1}&a_{2}&\cdots&a_{N_T}
\end{matrix}
\right)\triangleq \mathbf{H}_{\bar{\mathcal{R}}_i},\label{eqn proof4 neutralization matrix}
\end{align}
for any $\{a_{1},a_2,\ldots,a_{N_T}\}$. Define $c_p$ as the cofactor of $a_p$ such that
\begin{align}
\sum_{p=1}^{N_T}a_{p}c_{p}=\det(\mathbf{H}_{\bar{\mathcal{R}}_i}).\label{eqn proof4 neutralization determinant}
\end{align}
We then design $v_{{\mathcal{R}},{[N_T]},p}^i(u)$ as
\begin{align}
v_{{\mathcal{R}},{[N_T]},p}^i(u)=\alpha_{{\mathcal{R}},{[N_T]}}^{\bar{\mathcal{R}}_i}(u)c_p(u),\label{eqn proof4 v}
\end{align}
where $\alpha_{{\mathcal{R}},{[N_T]}}^{\bar{\mathcal{R}}_i}(u)$ is chosen i.i.d. from a continuous distribution for all $\{\mathcal{R},\bar{\mathcal{R}}_i,u\}$, and $c_p(u)$ is the cofactor $c_p$ by taking channel realization $\{h_{qp}(u)\}$ into \eqref{eqn proof4 neutralization matrix} and \eqref{eqn proof4 neutralization determinant}. By such construction, the condition in \eqref{eqn proof4 neutralization} is satisfied. For example, at receiver $\bar{R}_{i,1}$, we have
\begin{align}
&\sum_{p\in [N_T]}h_{\bar{R}_{i,1},p}(u)\alpha_{{\mathcal{R}},{[N_T]}}^{\bar{\mathcal{R}}_i}(u)c_p(u)\notag\\
=\,&\alpha_{{\mathcal{R}},{[N_T]}}^{\bar{\mathcal{R}}_i}(u)\sum_{p\in [N_T]}h_{\bar{R}_{i,1},p}(u)c_p(u)\notag\\
=\,&\alpha_{{\mathcal{R}},{[N_T]}}^{\bar{\mathcal{R}}_i}(u)\cdot\begin{vmatrix}
h_{\bar{R}_{i,1},1}(u)&h_{\bar{R}_{i,1},2}(u)&\cdots&h_{\bar{R}_{i,1},N_T}(u)\\
h_{\bar{R}_{i,2},1}(u)&h_{\bar{R}_{i,2},2}(u)&\cdots&h_{\bar{R}_{i,2},N_T}(u)\\
\cdots&\cdots&\cdots&\cdots\\
h_{\bar{R}_{i,N_T-1},1}(u)&h_{\bar{R}_{i,N_T-1},2}(u)&\cdots&h_{\bar{R}_{i,N_T-1},N_T}(u)\\
h_{\bar{R}_{i,1},1}(u)&h_{\bar{R}_{i,1},2}(u)&\cdots&h_{\bar{R}_{i,1},N_T}(u)
\end{vmatrix}\notag\\
=\,&0.
\end{align}
%For notation simplicity, we define $\tilde{h}_{q,\mathcal{T}}^{\mathcal{R}'_i}$ as
%\begin{align}
%\tilde{h}_{q,\mathcal{T}}^{\mathcal{R}'_i}\triangleq\sum_{p=T_1}^{T_t}h_{q,p}c_{p}=
%\begin{vmatrix}
%h_{R'_{i,1},1}&h_{R'_{i,1},2}&\cdots&h_{R'_{i,1},N_T}\\
%h_{R'_{i,2},1}&h_{R'_{i,2},2}&\cdots&h_{R'_{i,2},N_T}\\
%\cdots&\cdots&\cdots&\cdots\\
%h_{R'_{i,N_T-1},1}&h_{R'_{i,N_T-1},T_2}&\cdots&h_{R'_{i,N_T-1},N_T}\\
%h_{q,1}&h_{q,T_2}&\cdots&h_{q,T_t}
%\end{vmatrix}.\label{eqn proof4 equivalent channel}
%\end{align}
By the above construction of precoders, it can be seen that symbol $x_{{\mathcal{R}},{[N_T]}}^i$ unwanted by receiver $q\in\bar{\mathcal{R}}_i$ are all neutralized. Then, the received signal at receiver $q$ can be rewritten as
\begin{align}
y_q(u)=&\sum_{\mathcal{R}:\mathcal{R}\ni q,|\mathcal{R}|=r+1}\sum_{i=1}^\rho\left[\sum_{p\in [N_T]}h_{qp}(u)v_{{\mathcal{R}},{[N_T]},p}^i(u)\right]x_{{\mathcal{R}},{[N_T]}}^i\notag\\
&+\sum_{\mathcal{R},\bar{\mathcal{R}}_i:\mathcal{R}\cup\bar{\mathcal{R}}_i\not\ni q}\left[\sum_{p\in [N_T]}h_{qp}(u)v_{{\mathcal{R}},{[N_T]},p}^i(u)\right]x_{{\mathcal{R}},{[N_T]}}^i,\label{eqn proof4 receive signal after neutralization}
\end{align}
where the first term is the desired messages and the second term is the interference. To guarantee the decodability of receiver $q$, we need to assure the following $S\times S$ received signal matrix whose column vectors are
\begin{align}
&\left\{\left(\sum_{p\in[N_T]}h_{qp}(u)v_{{\mathcal{R}},{[N_T]},p}^i(u)\right)_{u=1}^S: \mathcal{R}\ni q,|\mathcal{R}|=r+1,i\in[\rho]\right\}\notag\\
&\cup\left\{\left(\sum_{p\in[N_T]}h_{qp}(u)v_{{\mathcal{R}},{[N_T]},p}^i(u)\right)_{u=1}^S:\mathcal{R}\cup\bar{\mathcal{R}}_i\not\ni q\right\}\notag
\end{align}
to be full-rank with probability 1. Since the construction method of $\{v_{{\mathcal{R}},{[N_T]},p}^i(u)\}$ and the formation of $\{\sum_{p\in[N_T]}h_{qp}(u)v_{{\mathcal{R}},{[N_T]},p}^i(u)\}$ are the same at each time slot $u$, based on \cite[Lemma 3]{CoMP}, we only need to prove the linear independence of these polynomial functions, which is given by
\begin{align}
&\left\{\alpha_{{\mathcal{R}},{[N_T]}}^{\bar{\mathcal{R}}_i}\sum_{p=1}^{N_T}h_{qp}c_p\right\}_{\mathcal{R},i:\mathcal{R}\ni q,|\mathcal{R}|=r+1,i\in[\rho]}\cup\left\{\alpha_{{\mathcal{R}},{[N_T]}}^{\bar{\mathcal{R}}_i}\sum_{p=1}^{N_T}h_{qp}c_p\right\}_{\mathcal{R},\bar{\mathcal{R}}_i:\mathcal{R}\cup\bar{\mathcal{R}}_i\not\ni q}.\notag
\end{align}
Since each polynomial has a unique factor $\alpha_{{\mathcal{R}},{[N_T]}}^{\bar{\mathcal{R}}_i}$, it is obvious that these polynomials are linearly independent. Similarly, the polynomials of received symbols at other receivers are also linearly independent. Therefore, the received signal matrix at each receiver is full rank with probability 1 using \cite[Lemma 3]{CoMP}, and each receiver can decode its desired signals successfully. Since each receiver can decode $\binom{N_R-1}{r}\binom{N_R-r-1}{N_T-1}$ symbols in $S=\binom{N_R-1}{r}\binom{N_R-r-1}{N_T-1}+\binom{N_R-1}{r+1}\binom{N_R-r-2}{N_T-1}$-symbol extension, the per-user DoF of
\begin{align}
d=\frac{\binom{N_R-1}{r}\binom{N_R-r-1}{N_T-1}}{\binom{N_R-1}{r}\binom{N_R-r-1}{N_T-1}+\binom{N_R-1}{r+1}\binom{N_R-r-2}{N_T-1}}\label{eqn dof 3}
\end{align}
is achieved.

Combining \eqref{eqn dof 2} and \eqref{eqn dof 3}, we obtain the per-user DoF of
\begin{align}
d=\frac{\binom{N_R-1}{r}\binom{N_T}{t}\binom{N_R-r-1}{t-1}t}
{\binom{N_R-1}{r}\binom{N_T}{t}\binom{N_R-r-1}{t-1}t+\binom{N_R-1}{r+1}\binom{N_R-r-2}{t-1}\binom{N_T}{t-1}}\label{eqn dof 4}
\end{align}
 for $t\le N_T$.

It can be seen that \eqref{eqn dof 4} is an increasing function of $t$ when $N_R-N_T-r-1<0$, and is a decreasing function of $t$ when $N_R-N_T-r-1>0$. Intuitively, the achievable per-user DoF should be a non-decreasing function of transmitter cooperation size $t$. Thus, to obtain a reasonable DoF, we introduce the following proposition.
\begin{proposition}\label{proposition 1}
Any achievable DoF of the $\binom{N_T}{t'}\times\binom{N_R}{r+1}$ cooperative X-multicast channel can be achieved in the $\binom{N_T}{t}\times\binom{N_R}{r+1}$ cooperative X-multicast channel, where $t'<t$.
\end{proposition}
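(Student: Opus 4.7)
My plan is to construct a DoF-preserving reduction: starting from any scheme that achieves a given per-user DoF $d$ on the smaller cooperation channel, I will exhibit a scheme that achieves the same per-user DoF $d$ on the larger cooperation channel. The intuition is that extra caching cannot hurt: since every message in the $t$-channel is cached at all $t$ transmitters of its cooperation group, it is in particular cached at every transmitter of any $t'$-subset of that group, so we can simply pretend that only $t'$ of the $t$ transmitters hold the message and invoke the hypothesized scheme.

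Concretely, I would first split each message $W_{\mathcal{R},\mathcal{T}}$ of the $\binom{N_T}{t}\times\binom{N_R}{r+1}$ channel into $\binom{t}{t'}$ equal-sized pieces, one piece $W_{\mathcal{R},\mathcal{T}}^{(\mathcal{T}')}$ for every $t'$-subset $\mathcal{T}'\subseteq \mathcal{T}$. Since the whole message $W_{\mathcal{R},\mathcal{T}}$ is cached at every transmitter of $\mathcal{T}$, each piece $W_{\mathcal{R},\mathcal{T}}^{(\mathcal{T}')}$ is in particular cached at every transmitter of its associated $\mathcal{T}'$. Next, for each pair $(\mathcal{R},\mathcal{T}')$ with $|\mathcal{R}|=r+1$ and $|\mathcal{T}'|=t'$, I would concatenate the $\binom{N_T-t'}{t-t'}$ pieces $\{W_{\mathcal{R},\mathcal{T}}^{(\mathcal{T}')}:\mathcal{T}\supseteq \mathcal{T}',\,|\mathcal{T}|=t\}$ into a single virtual message $\widetilde{W}_{\mathcal{R},\mathcal{T}'}$. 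By construction, $\widetilde{W}_{\mathcal{R},\mathcal{T}'}$ is available at every transmitter of $\mathcal{T}'$ and is desired by exactly the receivers in $\mathcal{R}$, and there is exactly one virtual message per $(\mathcal{R},\mathcal{T}')$ pair, so the family $\{\widetilde{W}_{\mathcal{R},\mathcal{T}'}\}$ has precisely the message/caching structure of the $\binom{N_T}{t'}\times\binom{N_R}{r+1}$ cooperative X-multicast channel in Definition \ref{def channel}.

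By hypothesis there is a scheme that delivers any demand of that channel at per-user DoF $d$; I would invoke this scheme on $\{\widetilde{W}_{\mathcal{R},\mathcal{T}'}\}$, using only the transmitters of $\mathcal{T}'$ to carry $\widetilde{W}_{\mathcal{R},\mathcal{T}'}$ and leaving the remaining transmitters idle for that particular virtual message. The DoF bookkeeping is then a short calculation based on the combinatorial identity $\binom{N_T}{t'}\binom{N_T-t'}{t-t'}=\binom{N_T}{t}\binom{t}{t'}$: both the total number of virtual bits per receiver in the simulated $t'$-channel and the total number of original bits per receiver in the $t$-channel are equal to $\binom{N_R-1}{r}\binom{N_T}{t}F$, so the transmission time that realizes per-user DoF $d$ in the virtual $t'$-channel realizes the same per-user DoF $d$ in the original $t$-channel.

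Because the argument is essentially a ``throw away the extra cached copies'' reduction, I do not anticipate any substantive obstacle. The only step that genuinely needs care is verifying that the aggregated family $\{\widetilde{W}_{\mathcal{R},\mathcal{T}'}\}$ exactly matches Definition \ref{def channel}, i.e., that there is precisely one virtual message per (multicast group, cooperation group) pair with the correct cache set; this is immediate from the construction together with the combinatorial identity above.
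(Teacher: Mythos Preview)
Your proposal is correct and follows essentially the same approach as the paper's own proof: split each message $W_{\mathcal{R},\mathcal{T}}$ into $\binom{t}{t'}$ pieces indexed by the $t'$-subsets $\mathcal{T}'\subseteq\mathcal{T}$, aggregate the pieces sharing the same $(\mathcal{R},\mathcal{T}')$ into a super-message, and observe that the resulting message structure is exactly that of the $\binom{N_T}{t'}\times\binom{N_R}{r+1}$ cooperative X-multicast channel. Your explicit DoF bookkeeping via the identity $\binom{N_T}{t'}\binom{N_T-t'}{t-t'}=\binom{N_T}{t}\binom{t}{t'}$ is a nice sanity check that the paper leaves implicit, but the underlying reduction is identical.
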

\begin{proof}
Consider an arbitrary $\binom{N_T}{t}\times\binom{N_R}{r+1}$ cooperative X-multicast channel. Split each message $W_{{\mathcal{R}},{\mathcal{T}}}$ into $\binom{t}{t'}$ submessages, each associated with a unique transmitter set $\mathcal{T}'\subset \mathcal{T}$ with $|\mathcal{T}'|=t'$, and will be transmitted by transmitter set $\mathcal{T}'$ only. We denote submessages of $W_{{\mathcal{R}},{\mathcal{T}}}$ delivered by transmitter set  $\mathcal{T}'$ as $W_{{\mathcal{R}},{\mathcal{T}}}^{\mathcal{T}'}$. Then, in the delivery phase for an arbitrary transmitter set $\mathcal{T}'$, each transmitter in $\mathcal{T}'$ will generate and transmit a super-message desired by receiver set $\mathcal{R}$:
\begin{align}
\hat{W}_{{\mathcal{R}},{\mathcal{T}'}}=\left\{W_{{\mathcal{R}},{\mathcal{T}}}^{\mathcal{T}'}: \mathcal{T}\supset\mathcal{T}'\right\}.\notag
\end{align}
Through this approach, each super-message $\hat{W}_{{\mathcal{R}},{\mathcal{T}'}}$ is available at transmitter set $\mathcal{T}'$ ($|\mathcal{T}'|=t'$) and desired by receiver set $\mathcal{R}$. The network topology has changed to the $\binom{N_T}{t'}\times\binom{N_R}{r+1}$ cooperative X-multicast channel. Therefore, any achievable DoF in the $\binom{N_T}{t'}\times\binom{N_R}{r+1}$ cooperative X-multicast channel can be achieved in the original $\binom{N_T}{t}\times\binom{N_R}{r+1}$ cooperative X-multicast channel.
\end{proof}

Based on Proposition \ref{proposition 1}, combining the achievable per-user DoF of $\frac{r+t}{N_R}$ in the first method and \eqref{eqn dof 4} in the second method, we obtain the achievable per-user DoF when $r+t\le N_R-2$ as
\begin{align}
d=\max\left\{d'_{r,t},\frac{r+t}{N_R}\right\},\notag
\end{align}
where $d'_{r,t}$ is given in Lemma \ref{lemma dof}.

By combining the results in all three parts, Lemma \ref{lemma dof} is proved.

\section*{Appendix B: Optimality (Proof of Corollary \ref{coro optimality})}
We prove the optimality of the proposed caching and delivery scheme presented in Section \ref{section cache placement} and \ref{section delivery}. We consider the following four cases.

\subsubsection{$N_R\mu_R+N_T\mu_T\ge N_R$}
Letting $l=s_1=1,s_2=0$ in \eqref{eqn converse} in Theorem \ref{thm 2}, we have:
\begin{align}
\tau^*(\mu_R,\mu_T)\ge 1-\mu_R.\label{eqn 1-mur}
\end{align}
Now consider the achievable upper bound of NDT. Subtracting \eqref{eqn tmin2} from \eqref{eqn tmin1}, we obtain
\begin{align}
1-\mu_R\le&\sum_{t=1}^{N_T}\binom{N_T}{t}a_{0,t}+\sum_{r=1}^{N_R-1}\sum_{t=1}^{N_T}\left[\binom{N_R}{r}-\binom{N_R-1}{r-1}\right]\binom{N_T}{t}a_{r,t}\notag\\
=&\sum_{r=0}^{N_R-1}\sum_{t=1}^{N_T}\binom{N_R-1}{r}\binom{N_T}{t}a_{r,t}.\label{eqn opt 11}
\end{align}
Substituting \eqref{eqn opt 11} into \eqref{eqn tau v}, we obtain that the achievable NDT must satisfy
\begin{align}
\tau&\ge 1-\mu_R+\sum_{\{(r,t):r+t<N_R\}}\binom{N_R-1}{r}\binom{N_T}{t}\left(\frac{1}{d_{r,t}}-1\right)a_{r,t}\notag\\
&\ge1-\mu_R,\label{eqn opt 12}
\end{align}
and hence $\tau_U\ge1-\mu_R$. Note that the two equalities in \eqref{eqn opt 12} can be achieved at the same time when $N_R\mu_R+N_T\mu_T\ge N_R$.

In specific, when $N_T\ge N_R$, consider the following file splitting ratios which satisfy constraints \eqref{eqn tmin1}\eqref{eqn tmin2}\eqref{eqn tmin3}:
\begin{align}
a^*_{N_R,0}=\mu_R,a^*_{0,N_R}=\frac{1-\mu_R}{\binom{N_T}{N_R}}, \textrm{ and others being 0}.\notag
\end{align}
Substituting it into \eqref{eqn tmin}, we have:
\begin{align}
\tau_U=\binom{N_T}{N_R}a^*_{0,N_R}=1-\mu_R,\notag
\end{align}
which coincides with lower bound \eqref{eqn 1-mur}, and thus is optimal.

When $N_T< N_R$, consider the following file splitting ratios:
\begin{align}
&a^*_{N_R,0}=1-(1-\mu_R)\frac{N_R}{N_T},a^*_{N_R-N_T,N_T}=\frac{1-\mu_R}{\binom{N_R-1}{N_R-N_T}},\textrm{ and others being 0}.\notag
\end{align}
Similar arguments when $N_T\ge N_R$ can be applied here again, and are omitted. Thus, the optimality when $N_T\mu_T+N_R\mu_R\ge N_R$ is proved.

\subsubsection{$(\mu_R,\mu_T)=(0,1)$}
Substituting $l=s_1=\min\{N_T,N_R\},s_2=N_R-\min\{N_T,N_R\}$ into \eqref{eqn converse}, we have:
\begin{align}
\tau^*\ge \frac{N_R}{\min\{N_T,N_R\}}.\label{eqn converse broadcast}
\end{align}
Now consider the achievable upper bound of NDT. Since there is no cache storage at receivers, which implies that $a_{r,t}=0$ for $r>0$, the achievable NDT in \eqref{eqn tmin} reduces to
\begin{align}
\tau_U=\min\sum_{t=1}^{N_T}\frac{\binom{N_T}{t}}{d_{0,t}}a_{0,t},\label{eqn opt 21}
\end{align}
and constraint \eqref{eqn tmin1} reduces to
\begin{align}
\sum_{t=1}^{N_T}\binom{N_T}{t}a_{0,t}=1.\label{eqn opt 22}
\end{align}
It can be seen from \eqref{eqn tau ip} that $d_{0,t}\le \min\{1,N_T/N_R\}$ for $t\in [N_T]$, thus we have
\begin{align}
\sum_{t=1}^{N_T}\frac{\binom{N_T}{t}}{d_{0,t}}a_{0,t}\ge\frac{1}{\min\{1,\frac{N_T}{N_R}\}}\sum_{t=1}^{N_T}\binom{N_T}{t}a_{0,t}=\frac{N_R}{\min\{N_T,N_R\}}.\label{eqn opt 23}
\end{align}
Note that the equality in \eqref{eqn opt 23} can be achieved by letting file splitting ratios satisfy
\begin{align}
a^*_{0,N_T}=1,\textrm{ and others being 0}\notag
\end{align}
in \eqref{eqn tmin}. Thus, the optimality when $(\mu_R,\mu_T)=(0,1)$ is proved.

\subsubsection{$(\mu_R,\mu_T)=(0,1/N_T)$}
substituting $l=s_1=1,s_2=N_R-1$ into \eqref{eqn converse}, we have:
\begin{align}
\tau^*\ge \frac{N_T+N_R-1}{N_T}.\label{eqn converse xchannel}
\end{align}
Now consider the achievable upper bound of NDT. In this case, the only feasible file splitting ratios are given by
\begin{align}
a^*_{0,1}=1/N_T,\textrm{ and others being 0.}\notag
\end{align}
Substituting it into \eqref{eqn tmin}, the achievable NDT is given by
\begin{align}
\tau=\frac{N_T}{\frac{N_T}{N_T+N_R-1}}a^*_{0,1}=\frac{N_T+N_R-1}{N_T}\notag,
\end{align}
which coincides with the lower bound \eqref{eqn converse xchannel} and thus is optimal.

\subsubsection{$\mu_R+N_T\mu_T=1$ when intra-file coding is not allowed in the caching functions}
Substituting $l=s_1=1,s_2=N_R-1$ into \eqref{eqn taul2}, we have:
\begin{align}
\tau^*\ge (N_T+N_R-1)\mu_T=\frac{N_T+N_R-1}{N_T}(1-\mu_R)\label{eqn converse nointralfile}.
\end{align}
Now consider the achievable upper bound of NDT. In this case, the only feasible file splitting ratios are given by
\begin{align}
a^*_{0,1}=\frac{1-\mu_R}{N_T},a^*_{N_R,0}=\mu_R,\textrm{ and others being 0.}\notag
\end{align}
Substituting it into \eqref{eqn tmin}, the achievable NDT is given by
\begin{align}
\tau=\frac{N_T}{\frac{N_T}{N_T+N_R-1}}a^*_{0,1}=\frac{N_T+N_R-1}{N_T}(1-\mu_R)\notag,
\end{align}
which coincides with the lower bound \eqref{eqn converse nointralfile} and thus is optimal.

Summarizing all the four cases above, we finished the proof of Corollary \ref{coro optimality}.

\section*{Appendix C: Maximum Multiplicative Gap (Proof of Corollary \ref{coro gap})}
Given the fact that NDT is optimal when $N_T\mu_T+N_R\mu_R\ge N_R$, we only need to prove the multiplicative gap when $N_T\mu_T+N_R\mu_R<N_R$. Denote $g$ as the multiplicative gap. We consider three cases to prove Corollary \ref{coro gap}: (1) $N_T\ge N_R$; (2) $N_T< N_R$ and $\mu_T\ge 1/N_T$; (3) $N_T< N_R$ and $\mu_T< 1/N_T$.

\setcounter{subsection}{0}
\subsection{$N_T\ge N_R$}

Using \eqref{eqn 1-mur}, we have:
\begin{align}
g\le\frac{1}{1-\mu_R}\cdot\min_{\{a_{r,t}\}}\sum_{r=0}^{N_R-1}\sum_{t=1}^{N_T}\frac{\binom{N_R-1}{r}\binom{N_T}{t}}{d_{r,t}}a_{r,t},\label{eqn gap}
\end{align}
Consider the following file splitting ratios in \eqref{eqn gap}:
\begin{align}
&a_{N_R,0}=\mu_R,a_{0,1}=\frac{N_R(1-\mu_R)-N_T\mu_T}{N_T(N_R-1)},a_{0,N_R}=\frac{N_T\mu_T+\mu_R-1}{\binom{N_T}{N_R}(N_R-1)}, \textrm{ and others being 0,}\notag
\end{align}
then we have the following upper bound:
\begin{align}
g&\le\frac{1}{1-\mu_R}\cdot\frac{N_T}{d_{0,1}}\frac{N_R(1-\mu_R)-N_T\mu_T}{N_T(N_R-1)}+\frac{1}{1-\mu_R}\cdot\frac{\binom{N_T}{N_R}}{d_{0,N_R}}\frac{N_T\mu_T+\mu_R-1}{\binom{N_T}{N_R}(N_R-1)}\notag\\
&=\frac{1}{1-\mu_R}\cdot\frac{N_T+N_R-1}{N_R-1}\left(\frac{N_R}{N_T}(1-\mu_R)-\mu_T\right)+\frac{1}{1-\mu_R}\cdot\frac{N_T\mu_T+\mu_R-1}{N_R-1}\notag\\
&=1+\frac{N_R}{N_T}-\frac{\mu_T}{1-\mu_R}\notag\\
&\le2.\label{eqn gap2}
\end{align}
Therefore, the multiplicative gap is within 2 when $N_T\ge N_R$.

\subsection{$N_T<N_R$ and $\mu_T\ge \frac{1}{N_T}$}

we consider six cases to discuss the multiplicative gap $g$: (1) $N_R\le 1.8N_T$, (2) $N_R>1.8N_T,\mu_R\le\frac{1}{2N_R-N_T}$, (3) $N_R>1.8N_T,N_T=2,\frac{1}{2N_R-2}<\mu_R<\frac{1}{4}$, (4) $N_R>1.8N_T,N_T=2,\mu_R\ge\frac{1}{4}$, (5) $N_R>1.8N_T,N_T\ge3,\frac{1}{2N_R-N_T}<\mu_R<\frac{N_T-\sqrt{2N_T^2-2N_T}}{2N_T-N_T^2}$, (6) $N_R>1.8N_T,N_T\ge3,\mu_R\ge\frac{N_T-\sqrt{2N_T^2-2N_T}}{2N_T-N_T^2}$.

\subsubsection{$N_R\le 1.8N_T$}

Letting file splitting ratios $a_{N_R,0}=\mu_R,a_{0,1}=\frac{1-\mu_R}{N_T}$ and others being 0 in \eqref{eqn tmin}, we have $\tau^*\le\frac{N_T+N_R-1}{N_T}(1-\mu_R)$. Comparing the lower bound \eqref{eqn 1-mur}, we have
\begin{align}
g\le\frac{N_T+N_R-1}{N_T}<2.8\label{eqn gap2.8}
\end{align}

\subsubsection{$N_R>1.8N_T,\mu_R\le\frac{1}{2N_R-N_T}$}

Letting $l=s_1=N_T,s_2=N_R-N_T$ in \eqref{eqn converse}, we have $\tau^*\ge\frac{1}{N_T}\left\{N_R-\left((N_R-N_T)N_R+\frac{N_T^2+N_T}{2}\right)\mu_R\right\}$. Using the same upper bound of $\tau^*$ as in case (1), i.e. $\tau^*\le\frac{N_T+N_R-1}{N_T}(1-\mu_R)$, we have
\begin{align}
g&\le\frac{(N_T+N_R-1)(1-\mu_R)}{N_R-\left((N_R-N_T)N_R+\frac{N_T^2+N_T}{2}\right)\mu_R}\notag\\
&\le\frac{(N_T+N_R-1)(2N_R-N_T)}{N_R^2-N_T^2/2-N_T/2}\notag\\
&\le\frac{N_TN_R+2N_R^2-2N_R+N_T}{N_R^2-N_T^2/2-N_T/2}\notag\\
&\le\frac{2N_R^2+N_R^2/1.8-2N_R+N_R/1.8}{N_R^2-N_R^2/(2\times 1.8^2)-N_R/(2\times 1.8)}\notag\\
&=\frac{(4+2/1.8)N_R-(4-2/1.8)}{(2-1/1.8^2)N_R-1/1.8}\notag\\
&=\frac{4+2/1.8}{2-1/1.8^2}+\frac{\frac{4+2/1.8}{2\times 1.8-1/1.8}-(4-2/1.8)}{(2-1/1.8^2)N_R-1/1.8}\notag\\
&<\frac{4+2/1.8}{2-1/1.8^2}<3.1\label{eqn gap3.1}
\end{align}

\subsubsection{$N_R>1.8N_T,N_T=2,\frac{1}{2N_R-2}<\mu_R<\frac{1}{4}$}

Denote $\hat{\mu_R}=\lfloor \mu_RN_R\rfloor /N_R$. Since the achievable upper bound $\tau_U$ is a decreasing function of $\mu_R$ and $\mu_T$, we have:
\begin{align}
\tau^*(\mu_R,\mu_T)\le \tau_U(\mu_R,\mu_T)\le \tau_U(\mu_R,1/N_T)\le \tau_U(\hat{\mu_R},1/N_T).\label{eqn gap 3 middle}
\end{align}
Here $\tau_U(\hat{\mu_R},1/N_T)$ is upper bounded by
\begin{align}
\tau_U(\hat{\mu_R},1/N_T)&\le\frac{N_T-1+\frac{N_R}{N_R\hat{\mu_R}+1}}{N_T}(1-\hat{\mu_R)}\notag\\
&\le\frac{N_T+\frac{N_R}{N_R\hat{\mu_R}+1}}{N_T}\notag\\
&\le\frac{N_T+\frac{N_R}{N_R\mu_R}}{N_T}=1+\frac{1}{N_T\mu_R},\label{eqn gap 3 upper}
\end{align}
by letting file splitting ratio $a_{N_R\hat{\mu_R},1}=\frac{1}{N_T\binom{N_R}{N_R\hat{\mu_R}}}$, and others being 0 in \eqref{eqn tmin}. Letting $l=s_1=N_T,s_2=\lfloor\frac{1}{2\mu_R}-1\rfloor$ in \eqref{eqn converse}, we have 
\begin{align}
\tau^*&\ge\frac{1}{N_T}\left\{\left(N_T+\lfloor\frac{1}{2\mu_R}-1\rfloor\right)-\left(\lfloor\frac{1}{2\mu_R}-1\rfloor^2+N_T^2/2+N_T/2+N_T\lfloor\frac{1}{2\mu_R}-1\rfloor\right)\mu_R\right\}\notag\\
&\ge\frac{1}{N_T}\left\{\left(N_T+\frac{1}{2\mu_R}-1-1\right)-\left((\frac{1}{2\mu_R}-1)^2+N_T^2/2+N_T/2+N_T(\frac{1}{2\mu_R}-1)\right)\mu_R\right\}\notag\\
&=\frac{1}{N_T}\left\{\left(N_T+\frac{1}{2\mu_R}-2\right)-\left(\frac{1}{4\mu_R^2}+N_T^2/2-N_T/2+1+(N_T/2-1)/\mu_R\right)\mu_R\right\}\notag\\
&=\frac{1}{N_T}\left\{N_T/2-1+\frac{1}{4\mu_R}-(N_T^2/2-N_T/2+1)\mu_R\right\}=\frac{1}{N_T}\left\{\frac{1}{4\mu_R}-2\mu_R\right\}\label{eqn gap 3 lower}
\end{align}
Comparing \eqref{eqn gap 3 upper} and \eqref{eqn gap 3 lower}, we have
\begin{align}
g\le\frac{N_T+\frac{1}{\mu_R}}{\frac{1}{4\mu_R}-2\mu_R}=\frac{2\mu_R+1}{1/4-2\mu_R^2}< \frac{2\times 1/4+1}{1/4-2\times (1/4)^2}=12.\label{eqn gap12}
\end{align}

\subsubsection{$N_R>1.8N_T,N_T=2,\mu_R\ge\frac{1}{4}$}

By the convexity of the achievable upper bound $\tau_U$, we have
\begin{align}
&\tau^*(\mu_R,\mu_T)\notag\\
\le &\tau_U(\mu_R,\mu_T)\notag\\
\le &\tau_U(\mu_R,1/N_T)\notag\\
\le &\tau_U(1,1/N_T)+\frac{\tau_U(1/4,1/N_T)-\tau_U(1,1/N_T)}{3/4}(1-\mu_R)\notag\\
=&\frac{\tau_U(1/4,1/N_T)}{3/4}(1-\mu_R).\label{eqn gap 4 upper1}
\end{align}
Denoting $\hat{\mu_R}=\lfloor \frac{1}{4}N_R\rfloor /N_R$ and using \eqref{eqn gap 3 middle}\eqref{eqn gap 3 upper}, we have
\begin{align}
\tau_U(1/4,1/N_T)&\le \tau_U(\hat{\mu_R},1/N_T)\notag\\
&\le\tau_U(\hat{\mu_R},1/N_T)\notag\\
&\le 1+\frac{1}{N_T/4}.\label{eqn gap 4 upper2}
\end{align}
Using \eqref{eqn 1-mur}\eqref{eqn gap 4 upper1}\eqref{eqn gap 4 upper2}, we have
\begin{align}
g\le\frac{4}{3}\left(1+\frac{1}{N_T/4}\right)=4.\label{eqn gap4}
\end{align}

\subsubsection{$N_R>1.8N_T,N_T\ge3,\frac{1}{2N_R-N_T}<\mu_R<\frac{N_T-\sqrt{2N_T^2-2N_T}}{2N_T-N_T^2}$}

Denote $\hat{\mu_R}=\lfloor \mu_RN_R\rfloor /N_R$ and $\mu_R^0=\frac{N_T-\sqrt{2N_T^2-2N_T}}{2N_T-N_T^2}$. Similar to \eqref{eqn gap 3 middle}\eqref{eqn gap 3 upper}, we have
\begin{align}
\tau^*(\mu_R,\mu_T)\le \tau_U(\hat{\mu_R},1/N_T)\le 1+\frac{1}{N_T\mu_R}.\label{eqn gap 5 upper}
\end{align}
Letting $l=s_1=N_T,s_2=\lfloor\frac{1}{2\mu_R}-N_T/2\rfloor$ in \eqref{eqn converse}, we have 
\begin{align}
\tau^*&\ge\frac{1}{N_T}\left\{\left(N_T+\lfloor\frac{1}{2\mu_R}-N_T/2\rfloor\right)\right.\notag\\
&\quad\qquad\left.-\left(\lfloor\frac{1}{2\mu_R}-N_T/2\rfloor^2+N_T^2/2+N_T/2+N_T\lfloor\frac{1}{2\mu_R}-N_T/2\rfloor\right)\mu_R\right\}\notag\\
&\ge\frac{1}{N_T}\left\{\left(N_T+\frac{1}{2\mu_R}-N_T/2-1\right)\right.\notag\\
&\quad\qquad\left.-\left((\frac{1}{2\mu_R}-N_T/2)^2+N_T^2/2+N_T/2+N_T(\frac{1}{2\mu_R}-N_T/2)\right)\mu_R\right\}\notag\\
&=\frac{1}{N_T}\left\{\left(N_T/2+\frac{1}{2\mu_R}-1\right)-\left(\frac{1}{4\mu_R^2}+N_T/2+\frac{N_T^2}{4}\right)\mu_R\right\}\notag\\
&=\frac{1}{N_T}\left\{N_T/2-1+\frac{1}{4\mu_R}-(N_T/2+\frac{N_T^2}{4})\mu_R\right\}\notag\\
&\ge\frac{1}{N_T}\left\{N_T/2-1+\frac{1}{4\mu_R}-\frac{N_T^2+2N_T}{4}\mu_R^0\right\}\notag\\
&=\frac{1}{N_T}\left\{N_T/2-1+\frac{1}{4\mu_R}-\frac{N_T+2}{4}\frac{N_T-\sqrt{2N_T^2-2N_T}}{2-N_T}\right\}\notag\\
&\ge\frac{1}{N_T}\left\{N_T/2-1+\frac{1}{4\mu_R}-N_T/4\right\}\notag\\
&=\frac{1}{N_T}\left\{N_T/4-1+\frac{1}{4\mu_R}\right\}\label{eqn gap 5 lower}
\end{align}
Combining \eqref{eqn gap 5 upper} and \eqref{eqn gap 5 lower}, we have
\begin{align}
g&\le\frac{1+\frac{1}{N_T\mu_R}}{1/4-1/N_T+\frac{1}{4N_T\mu_R}}\notag\\
&=4+\frac{4/N_T}{1/4-1/N_T+\frac{1}{4N_T\mu_R}}\notag\\
&=4+\frac{4}{N_T/4-1+\frac{1}{4\mu_R}}\notag\\
&<4+\frac{4}{N_T/4-1+\frac{1}{4\mu_R^0}}<7\label{eqn gap7}
\end{align}

\subsubsection{$N_R>1.8N_T,N_T\ge3,\mu_R\ge\frac{N_T-\sqrt{2N_T^2-2N_T}}{2N_T-N_T^2}$}

By the convexity of the achievable upper bound $\tau_U$, we have
\begin{align}
&\tau^*(\mu_R,\mu_T)\notag\\
\le &\tau_U(\mu_R,\mu_T)\notag\\
\le &\tau_U(\mu_R,1/N_T)\notag\\
\le &\tau_U(1,1/N_T)+\frac{\tau_U(\mu_R^0,1/N_T)-\tau_U(1,1/N_T)}{1-\mu_R^0}(1-\mu_R)\notag\\
=&\frac{\tau_U(\mu_R^0,1/N_T)}{1-\mu_R^0}(1-\mu_R).\label{eqn gap 6 upper1}
\end{align}
Denote $\hat{\mu_R}^0=\lfloor \mu_R^0 N_R\rfloor /N_R$. Using \eqref{eqn gap 3 middle}\eqref{eqn gap 3 upper}, we have
\begin{align}
\tau_U(\mu_R^0,1/N_T)\le \tau_U(\hat{\mu_R}^0,1/N_T)\le 1+\frac{1}{N_T\mu_R^0}\label{eqn gap 6 upper2}
\end{align}
Combining \eqref{eqn 1-mur}, \eqref{eqn gap 6 upper1} and \eqref{eqn gap 6 upper2}, we have
\begin{align}
g\le \frac{1+\frac{1}{N_T\mu_R^0}}{1-\mu_R^0}<3.8\label{eqn gap3.8}
\end{align}

Thus, by combining the above six cases, the multiplicative gap is within 12 when $N_T<N_R$ and $\mu_T\ge\frac{1}{N_T}$.

\subsection{$N_T<N_R$ and $\mu_T<\frac{1}{N_T}$}

Consider the following file splitting ratios in \eqref{eqn tmin}:
\begin{align}
a_{0,1}=\frac{1-\mu_R}{N_T},a_{N_R,0}=\mu_R,\textrm{ and others being 0.}\notag
\end{align}
Then, we have the following achievable NDT:
\begin{align}
\tau= \frac{N_T}{\frac{N_T}{N_T+N_R-1}}\frac{1-\mu_R}{N_T}=\frac{N_T+N_R-1}{N_T}(1-\mu_R).\notag
\end{align}
Comparing to \eqref{eqn 1-mur}, we have
\begin{align}
g\le\frac{N_T+N_R-1}{N_T}.\notag
\end{align}

Summarizing all the analysis above, Corollary \ref{coro gap} is proved.

\section*{Appendix D: Optimization of File Splitting Ratios in the $2\times 2$ Network (Proof of Corollary \ref{coro 2x2})}
The LP problem in the $2\times 2$ network is expressed as
\begin{align}
\min\, &\tau_2=3a_{0,1}+a_{0,2}+2a_{1,1}+a_{1,2}\label{eqn optimization nt2}\\
\textrm{s.t. }&2a_{0,1}+a_{0,2}+4a_{1,1}+2a_{1,2}+a_{2,0}+2a_{2,1}+a_{2,2}=1,\label{eqn:total cachent2}\\
&2a_{1,1}+a_{1,2}+a_{2,0}+2a_{2,1}+a_{2,2}\le\mu_R, \label{eqn:receiver cachent2}\\
&a_{0,1}+a_{0,2}+2a_{1,1}+2a_{1,2}+a_{2,1}+a_{2,2}\le\mu_T.  \label{eqn:transmitter cachent2}
\end{align}
Subtracting \eqref{eqn:receiver cachent2} from \eqref{eqn:total cachent2}, we have
\begin{align}
1-\mu_R\le2a_{0,1}+a_{0,2}+2a_{1,1}+a_{1,2}.\label{eqn 1-murnt2}
\end{align}
Substituting \eqref{eqn 1-murnt2} into the objective function in \eqref{eqn optimization nt2}, we get
\begin{subequations}\label{eqn tau 1-mur nt2}
\begin{align}
\tau_2&\ge1-\mu_R+a_{0,1}\label{eqn tau 1-mur nt21}\\
&\ge1-\mu_R.\label{eqn tau 1-mur nt22}
\end{align}
\end{subequations}
Now we discuss the solution in regions $\mathcal{R}_{22}^1$ and $\mathcal{R}_{22}^2$ individually.

\textit{Region $\mathcal{R}_{22}^1$}:\quad In this region, the equality in \eqref{eqn tau 1-mur nt2} holds if $a_{0,1}=0$. This can be satisfied when the file splitting ratios $\{a_{r,t}\}$ take the following values:
\begin{align}
a^*_{2,0}=\mu_R,a^*_{0,2}=1-\mu_R\ \textrm{and other ratios are 0}.\notag
\end{align}

\textit{Region $\mathcal{R}_{22}^2$}:\quad In this region, subtracting \eqref{eqn 1-murnt2} from \eqref{eqn:transmitter cachent2}, we can get
\begin{align}
&\quad\mu_T-(1-\mu_R)\notag\\
&\ge-a_{0,1}+a_{1,2}+a_{2,1}+a_{2,2},\notag\\
&\ge -a_{0,1}\notag
\end{align}
or equivalently
\begin{align}
a_{0,1}\ge1-\mu_R-\mu_T.\label{a01nt2}
\end{align}
Substituting \eqref{a01nt2} into \eqref{eqn tau 1-mur nt21}, we can get
\begin{align}
\tau_2\ge1-\mu_R+1-\mu_R-\mu_T=2(1-\mu_R)-\mu_T.
\end{align}
In this region, the minimum NDT $\tau_2=2(1-\mu_R)-\mu_T$ can be achieved. The solution for splitting ratios is not unique but must satisfy $a^*_{0,1}=1-\mu_R-\mu_T$ and $a^*_{2,1}=a^*_{2,2}=a^*_{1,2}=0$. Here we choose one feasible solution to be
\begin{align}
&a^*_{0,1}=1-\mu_R-\mu_T,a^*_{2,0}=\mu_R,a^*_{0,2}=2\mu_T-(1-\mu_R)\textrm{ and other ratios are 0}.\notag
\end{align}

\section*{Appendix E: Optimal Solution of File Splitting Ratios in the $3\times 3$ Network}
The optimal solution of file splitting ratios in the $3\times3$ network is given below, where all the regions are defined in Corollary \ref{coro 3x3}.

\textit{Region $\mathcal{R}^1_{33}$}: The optimal splitting ratios are not unique but must satisfy
\begin{align}
&a^*_{1,1}=a^*_{0,1}=a^*_{0,2}=0,\label{eqn 3311}\\
&a^*_{3,0}+3a^*_{3,1}+3a^*_{3,2}+a^*_{3,3}+6a^*_{2,1}+6a^*_{2,2}+2a^*_{2,3}+3a^*_{1,2}+a^*_{1,3}=\mu_R.\label{eqn 3312}
\end{align}
One feasible solution is
\begin{align}
a^*_{3,0}=\mu_R,a^*_{0,3}=1-\mu_R,
\end{align}
and other ratios are 0.

\textit{Region $\mathcal{R}^2_{33}$}: The optimal splitting ratios are not unique but must satisfy
\begin{align}
&a^*_{0,1}=a^*_{0,2}=a^*_{3,1}=a^*_{3,2}=a^*_{3,3}=a^*_{2,2}=a^*_{2,3}=a^*_{1,3}=0,\notag\\
&a^*_{1,1}=\frac{1}{3}-\frac{\mu_R}{3}-\frac{\mu_T}{3},\notag\\
&a^*_{3,0}+6a^*_{2,1}+3a^*_{1,2}=2\mu_R+\mu_T-1,\notag\\
&3a^*_{2,1}+6a^*_{1,2}+a^*_{0,3}=\mu_R+2\mu_T-1.\notag
\end{align}
One feasible solution is
\begin{align}
&a^*_{1,1}=\frac{1}{3}-\frac{\mu_R}{3}-\frac{\mu_T}{3},a^*_{3,0}=2\mu_R+\mu_T-1,a^*_{0,3}=\mu_R+2\mu_T-1,
\end{align}
and other ratios are 0.

\textit{Region $\mathcal{R}^3_{33}$}: The optimal splitting ratios are unique and given by
\begin{align}
a^*_{1,1}=\frac{\mu_R}{3},a^*_{0,2}=1-2\mu_R-\mu_T,a^*_{0,3}=3\mu_R+3\mu_T-2,
\end{align}
and other ratios being 0.

\textit{Region $\mathcal{R}^4_{33}$}: The optimal splitting ratios are unique and given by
\begin{align}
a^*_{1,1}=\frac{\mu_R}{3},a^*_{0,1}=\frac{2}{3}-\mu_R-\mu_T,a^*_{0,2}=\mu_T-\frac{1}{3},
\end{align}
and other ratios being 0.

\textit{Region $\mathcal{R}^5_{33}$}: The optimal splitting ratios are unique and given by
\begin{align}
a^*_{1,1}=\frac{\mu_R}{3}+\mu_T-\frac{1}{3},a^*_{0,1}=1-\mu_R-2\mu_T,a^*_{3,0}=1-3\mu_T,
\end{align}
and other ratios being 0.

\bibliographystyle{IEEEtran}
\bibliography{IEEEabrv,journal}

\end{document}